\title{
    Bounded Context Switching for Valence Systems
}
\author{Roland Meyer}
{TU Braunschweig, Germany}
{roland.meyer@tu-braunschweig.de}
{https://orcid.org/0000-0001-8495-671X}
{}
\author{Sebastian Muskalla}
{TU Braunschweig, Germany}
{s.muskalla@tu-braunschweig.de}
{https://orcid.org/0000-0001-9195-7323}
{}
\author{Georg Zetzsche}
{IRIF (Universit\'{e} Paris-Diderot, CNRS), France}
{zetzsche@irif.fr}
{https://orcid.org/0000-0002-6421-4388}
{Supported by a fellowship of the Fondation Sciences Math\'{e}matiques de
  Paris and partially funded by the DeLTA project
  (ANR-16-CE40-0007).}
\subjclass{\\
    \ccsdesc[500]{Theory of computation~Parallel computing models},\\
    \ccsdesc[500]{Theory of computation~Formal languages and automata theory},\\
    \ccsdesc[500]{Theory of computation~Logic and verification}
}
\keywords{valence systems, graph monoids, bounded context switching}
\authorrunning{R. Meyer, S. Muskalla and G. Zetzsche}
\newcommand{\indep}{I}
\newcommand{\indeprel}{\mathrel{\indep}}
\newcommand{\ops}
{\calO}
\newcommand{\inc}[1]
{{#1}^+}
\newcommand{\dec}[1]
{{#1}^-}
\newcommand{\incdec}[1]
{{#1}^\pm}
\DeclareDocumentCommand \tracemonoid { g } 
{ 
    \IfNoValueTF {#1}
    { \T_{G} }
    { \T_{#1} }
}
\DeclareDocumentCommand \graphmonoid { g } 
{ 
    \IfNoValueTF {#1}
    { \M_{G} }
    { \M_{#1} }
}
\DeclareDocumentCommand \BCSREACH { g }
{
  \IfNoValueTF {#1}
  {\mathsf{BCSREACH}}
  {\mathsf{BCSREACH}(#1)}
}
\DeclareDocumentCommand \BCSRP { g }
{
  \IfNoValueTF {#1}
  {\mathsf{BCSRP}}
  {\mathsf{BCSRP}(#1)}
}
\DeclareDocumentCommand \BCSI { m m }
{
  \mathsf{BCSINT}(#1,#2)
}
\DeclareDocumentCommand \langof { o m }
{
  \IfNoValueTF {#1}
  {\alang(#2)}
  {\alang_{#1}(#2)}
}
\newcommand{\grapheq}[1]{[#1]_\M}
\newcommand{\graphneutral}{1_\M}
\newcommand{\cs}[1]{\mathit{cs}\left(#1\right)}
\newcommand{\saturation}[1]{\mathit{sat}\left(#1\right)}
\newcommand{\frto}{\mapsto_{\mathit{free}}}
\newcommand{\rto}{\mapsto_{\mathit{red}}}
\newcommand{\frtop}[1]{\mapstochar\xrightarrow{#1}_{\mathit{free}}}
\newcommand{\tored}{\to_{red}}
\newcommand{\cancelrel}{\mathrel{R_\pi}}
\newcommand{\ord}[1]{\leq_{#1}}
\newcommand{\upr}[2]{{#1}^{(#2)}}
\newcommand{\ltr}[2]{#1[#2]}
\newcommand{\todec}{\rightsquigarrow}
\newcommand{\todecsat}{\rightsquigarrow_{\mathit{sat}}}
\newcommand{\todecsatw}[1]{\overset{#1}{\todec}_{\mathit{sat}}}
\newcommand{\nfa}{N}
\newcommand{\nfaof}[1]{\nfa(#1)}
\newcommand{\alang}{\mathcal{L}}
\newcommand{\init}{\mathit{init}}
\newcommand{\fin}{\mathit{fin}}
\newcommand{\qinit}{q_{\init}}
\newcommand{\qfin}{q_{\fin}}
\newcommand{\syninv}[1]{\mathit{sinv}(#1)}
\newcommand{\frcancelaut}{(FRA1)}
\newcommand{\frswapaut}{(FRA2)}
\newcommand{\opsof}[1]{\ops(#1)}
\newcommand{\tonfafun}{\mathit{2nfa}}
\newcommand{\tonfa}[2]{\tonfafun(#1, #2)}
\newcommand{\clipping}{\clip (-0.5,-0.45) rectangle (1.7,1.45);}
\newcommand{\squarelength}{1.2}
\newcommand{\trianglelength}{\squarelength*1.1547} 
\newcommand{\graphstack}
{%
  \begin{tikzpicture}[node distance=0.3cm]
    \clipping
        \node (a) at (0,0) {$\bullet$};
        \node [below of=a] {$a$};
        \node (b) at (60:\trianglelength) {$\bullet$};
        \node [below of=b] {$b$};
        \node (c) at (\trianglelength,0) {$\bullet$};
        \node [below of=c] {$c$};
    \end{tikzpicture}
}
\newcommand{\graphmultipushdown}
{%
  \begin{tikzpicture}[node distance=0.3cm]
    \clipping
        \node (a) at (0,0) {$\bullet$};
        \node [left of=a] {$0_\ell$};
        \node (b) at (0,\squarelength) {$\bullet$};
        \node [left of=b] {$1_\ell$};
        \node (c) at (\squarelength,0) {$\bullet$};
        \node [right of=c] {$0_r$};
        \node (d) at (\squarelength,\squarelength) {$\bullet$};
        \node [right of=d] {$1_r$};
        \path [draw] (a.center) -- (c.center)
        (a.center) -- (d.center)
        (b.center) -- (c.center)
        (b.center) -- (d.center);
    \end{tikzpicture}
}
\newcommand{\graphvass}
{%
  \begin{tikzpicture}[node distance=0.3cm]
    \clipping
        \node (a) at (0,0) {$\bullet$};
        \node [left of=a] {$p_1$};
        \node (b) at (0,\squarelength) {$\bullet$};
        \node [left of=b] {$p_2$};
        \node (c) at (\squarelength,0) {$\bullet$};
        \node [right of=c] {$p_3$};
        \node (d) at (\squarelength,\squarelength) {$\bullet$};
        \node [right of=d] {$p_4$};
        \path [draw] (a.center) -- (c.center)
        (a.center) -- (d.center)
        (b.center) -- (c.center)
        (b.center) -- (d.center)
        (a.center) -- (b.center)
        (c.center) -- (d.center);
    \end{tikzpicture}
}
\newcommand{\graphintegervass}
{%
  \begin{tikzpicture}[node distance=0.3cm]
    \clipping
        \node (a) at (0,0) {$\bullet$};
        \node [below= -3pt of a] {$c_1$};
        \node (b) at (60:\trianglelength) {$\bullet$};
        \node [left= 0pt of b] {$c_2$};
        \node (c) at (\trianglelength,0) {$\bullet$};
        \node [below= -3pt of c] {$c_3$};
        \path [draw]
        (a.center) -- (b.center)
        (b.center) -- (c.center)
        (c.center) -- (a.center);
        \draw (a.center) ++(210:3pt) circle (3pt);
        \draw (b.center) ++(90:3pt) circle (3pt);
        \draw (c.center) ++(-30:3pt) circle (3pt);
    \end{tikzpicture}
}
\newcommand{\Cfour}{\mathsf{C4}}
\newcommand{\Pfour}{\mathsf{P4}}
\newcommand{\picpfour}[1]{
    \begin{tikzpicture}[every circle/.style={}, scale=#1]
    \fill (0,0) circle (2pt) node (a) {}    (1,0) circle (2pt) node (b)  {}   (2,0) circle (2pt) node (c) {}   (3,0) circle (2pt) node (d) {};
    \draw (a.center) -- (b.center) -- (c.center) -- (d.center);
    \end{tikzpicture}
}
\newcommand{\piccfour}[1]{
    %
    %
    %
    \begin{tikzpicture}[every circle/.style={}, scale=#1]
    \fill (0,0) circle (2pt) node (aa) {}    +(1,0) circle (2pt) node (ab) {}      +(1,1) circle (2pt) node (ac) {}    +(0,1) circle (2pt) node (ad) {};
    \draw (aa.center) -- (ab.center);
    \draw (aa.center) -- (ad.center) -- (ac.center);
    \draw (ab.center) -- (ac.center);
    \end{tikzpicture}
}
\newcommand{\leadstow}[1]
{\stackrel{#1}{\leadsto}}
\begin{document}

\maketitle

\begin{abstract}
    We study valence systems, finite-control programs over infinite-state memories modeled in terms of graph monoids. 
    Our contribution is a notion of bounded context switching (BCS).  
    Valence systems generalize pushdowns, concurrent pushdowns, and Petri nets. 
    In these settings, our definition conservatively generalizes existing notions. 
    The main finding is that reachability within a bounded number of context switches is in $\NPTIME$, independent of the memory (the graph monoid). 
    Our proof is genuinely algebraic, and therefore contributes a new way to think about BCS.
    In addition, we exhibit a class of storage mechanisms for which BCS reachability belongs to $\PTIME$.
  \end{abstract}

\newpage

\section{Introduction} 
%
Bounded context switching (BCS) is an under-approximate verification technique typically applied to safety properties.  
It was introduced for concurrent and recursive programs~\cite{QadeerR05}. 
There, a context switch happens if one thread leaves the processor for another thread to be scheduled. 
The analysis explores the subset of computations where the number of context switches is bounded by a given constant. 
Empirically, it was found that safety violations occur within few context switches~\cite{MQ07,LuPSY}.  
Algorithmically, the complexity of the analysis drops from undecidable to $\NPTIME$~\cite{QadeerR05,EGP14}. 
The idea received considerable interest from both practice and theory, a detailed discussion of related work can be found below. 

Our contribution is a generalization of bounded context switching to programs operating over arbitrary memories. 
To be precise, we consider valence systems, finite-control programs equipped with a potentially infinite-state memory modeled as a monoid~\cite{DOsualdoMeyerZetzsche2016a,Zetzsche2016d,Zetzsche2016c}. 
In valence systems, both the data domain and the operations are represented by monoid elements, and an operation $o$ will change the current memory value $m$ to the product $m\cdot o$. 
Of course, the monoid has to be given in some representation. 

We consider so-called graph monoids that capture the  memories commonly found in programs, like stacks, counters, and tapes, but also combinations thereof. 
A graph monoid is represented by a graph. 
Each vertex is interpreted as a symbol (say $c$) on which the operations push~($\inc{c}$) and pop~($\dec{c}$) are defined.
A computation is a sequence of such operations. 
The edges of the graph define an independence relation among the symbols that is used to commute the corresponding operations in a computation. 
To give an example, if $c$ and $d$ are independent, the computation $\inc{d}.\inc{c}.\dec{d}$ acts on two counters $c$ and $d$ and yields the values $1$ and $0$, respectively. 
Pushdowns are represented by valence systems over graphs without edges and concurrent pushdowns by complete $m$-partite graphs (for $m$ stacks). Petri nets yield complete graphs, blind counter systems complete graphs with self-loops on all vertices.

Our definition of context switches concentrates on the memory and does not reference the control flow. 
This frees us from having to assume a notion of thread, and makes the analysis applicable to sequential programs as well. 
We define a context switch as two consecutive operations in a computation that act on different and independent (in the above sense) symbols. 
This conservatively generalizes existing notions and yields intuitive behavior where a notion of context switch is not defined.
When modeling concurrent pushdowns, a context switch indeed corresponds to switching the stack.
For Petri nets and blind counter systems, it means switching the counter. 
Note, however, that the restriction can be applied to all memories expressible in terms of graph monoids. 


Our main result shows that reachability within a bounded number of context switches is in $\NPTIME$, \emph{for all graph monoids}. 
The result requires a uniform representation for the computations over very different memories.
We prove that a computation can always be split into quadratically-many blocks (in the number of context switches) --  independent of the monoid.
These blocks behave like single operations in that they commute or form inverses (in the given monoid). 
With this decomposition result, we develop an automata-theoretic approach to checking reachability. 
A more elaborate explanation of the proof approach can be found in Section~\ref{Section:BCS}, where we have the required terminology at hand.

In addition, we investigate the precise complexity of the problem for individual graph monoids. While  there are graph monoids for which our problem is $\NPTIME$-complete (such as those corresponding to the setting of concurrent pushdowns), we show that for an important subclass, those induced by transitive forests, the problem can be solved in polynomial time. Moreover, we describe those graph monoids for which the problem is $\NLOGSPACE$-complete. 

Taking a step back, our approach provides the first algebraic view to context-bounded computation, and hence enriches the tool box so far containing graph-theoretic interpretations and logical encodings of computations. 
We elaborate on the related work.  


\subparagraph*{Related Work.}
There are two lines of work on BCS that are closely related to ours in that they apply to various memory structures.
Aiswarya~\cite{Aiswarya14} and Madhusudan and Parlato~\cite{MP11} define a graph-theoretic interpretation of computations 
that manipulate a potentially infinite memory. 
They restrict the analysis to computations where graph-based measures like the split-width or the tree-width are bounded, 
and obtain general decidability results by reductions to problems on tree automata. 
The graph interpretation has been applied to multi pushdowns~\cite{AGK12},  timed systems~\cite{TimedTA,AGKS17}, and has been generalized to controller synthesis~\cite{AGK14}. 
It also gives a clean formulation of existing restrictions and  uniformizes the corresponding analysis algorithms, in particular for~\cite{QadeerR05,TorreMP07,TMP08,TMP10,Heussner12}.
Different from under-approximations based on split- or tree-width, we are able to handle counters, even nested within stacks.
We cannot handle, however, the queues to which those technique apply. 
Indeed, our main result is $\NPTIME$-completeness whereas graph-based analyses may have a higher complexity. 
Our approach thus applies to an incomparable class of models. 
Moreover, it contributes an algebraic view to bounded computations that complements the graph-theoretic interpretation. 

The second line of related work are reductions of reachability under BCS to satisfiability in existential Presburger arithmetic~\cite{EGP14,HagueL12}. 
Hague and Lin propose an expressive model, concurrent pushdowns communicating via reversal-bounded counters. 
Their main result is $\NPTIME$-completeness, like in our setting.
The model does not admit the free combination of stacks and counters that we support. 
The way it is presented, we in turn do not handle reversal boundedness, where the counters may change as long as the mode (increasing/decreasing) does not switch too often. 
Our approach should be generalizable to reversal boundedness by replacing the emptiness test in the free automata reduction of Section~\ref{Section:Algorithm} by a satisfiability check, using~\cite{VSS05}. 
The details remain to be worked out. 
Besides providing an incomparable class of models, our approach complements the logical view to computations.   

Reductions to existential Presburger arithmetic often restrict the set of computations by an intersection with a bounded language~\cite{GS64}, like in~\cite{EGP14,AAMS15}. 
The importance of bounded languages for under-approximation has been observed by Ganty et al.~\cite{GMM10,EsparzaGM12}. 




Besides the above unifying approaches, there has been a body of work on generalizations of BCS, towards exploring a larger set of computations~\cite{TorreMP07,TorreN11,EQR11,Atig14,TI0TP15a,AAA17} and handling more expressive programming models \cite{TMP08,AtigBQ09,Heussner12,BE14}. 
An unconventional instantance of the former direction are restrictions to the network topology~\cite{AtigBT08}. 
As particularly relevant instantiations of the latter, the BCS under-approximation has been applied to programs operating on relaxed memories~\cite{ABP11,AABN17} and programs manipulating data bases~\cite{AAAMR16}.

The practical work on BCS concentrated on implementing fast context-bounded analyses.
Sequentialization techniques~\cite{QW04} were successful in bridging the gap between the parallel program at hand and the available tooling, which is often limited to sequential programs. 
The idea is to translate the BCS instance into a sequential safety verification problem.
The first sequentialization for BCS has been proposed in~\cite{LalR08}, \cite{TorreMP09} gave a lazy formulation, and \cite{BouajjaniEP11} a systematic study of when sequentialization can be achieved.
The approach now applies to full C-programs \cite{CSEQ15} and has won the concurrency track in the softare verification competition. 
Current work is on parallelizing the analysis by further restricting the interleavings and in this way obtaining instances that are easier to solve~\cite{NS0TP17}.

Also with the goal of parallelization, recent works study the multi-variate complexity of context-bounded analyses.
While \cite{EGP14,FurbachM0S15} focus on $\PTIME$ and $\NPTIME$, \cite{ChiniKKMS17} studies fixed-parameter tractability, and \cite{CMS18} the fine-grained complexity.
The goal of the latter work is to achieve an analysis of comlexity $2^k\mathit{poly}(n)$, with $k$ a parameter and $n$ the input size.
Ideally, this analysis could be performed by $2^k$ independent threads, each solving a poly-time problem.

Our results contribute to a line of work on valence systems over graph
monoids~\cite{Zetzsche2016c}.  They have previously been studied with
respect to elimination of silent transitions~\cite{Zetzsche2013a},
semi-linearity of Parikh images~\cite{BuckheisterZetzsche2013a},
decidability of unrestricted reachability~\cite{Zetzsche2017a}, and
decidability of first-order logic with
reachability~\cite{DOsualdoMeyerZetzsche2016a}. See
\cite{Zetzsche2016d} for a general overview.


\section{Valence Systems over Graph Monoids}
We introduce the basics on graph monoids and valence systems following~\cite{Zetzsche2016c}.
%
\subparagraph*{Graph Monoids.}
Let $G = (V, I)$ be an undirected graph, without parallel edges, but possibly with self-loops. 
This means $\indep \subseteq V \times V$, which we refer to as the \emph{independence relation}, is symmetric but neither necessarily reflexive nor necessarily anti-reflexive.  
We use infix notation and write $o_1 \indeprel o_2$ for $(o_1, o_2) \in \indep$. 

To understand how the graph induces a monoid (a memory), think of the nodes $o \in V$ as stack symbols or counters. 
To each symbol $o$, we associate two operations, a positive operation $\inc{o}$ that can be understood as \emph{push $o$} or \emph{increment $o$} and a negative operation $\dec{o}$,  \emph{pop $o$} or \emph{decrement~$o$}. 
We call $+$ and $-$ the polarity of the operation.
By $\incdec{o}$ we denote an arbitrary element from $\set{ \inc{o}, \dec{o}}$. 
Let $\ops = \Set{ \inc{o}, \dec{o} }{ o \in V }$ denote the set of all operations. 
We refer to sequences of operations from $\ops^*$ as computations. 
We lift the independence relation to operations by setting $\incdec{o_1} \indeprel \incdec{o_2}$ if $o_1 \indeprel o_2$. 
We also write $v_1 \indeprel v_2$ for $v_1, v_2\in\ops^*$ if the operations in the computations are pairwise independent, and similar for subsets of operations $\ops_1 \indeprel \ops_2$ with $\ops_1, \ops_2\subseteq \ops$. 

We obtain the monoid by factorizing the set of all computations.
The congruence will identify computations that order independent operations differently.
Moreover, it will implement that $\inc{o}$ followed by $\dec{o}$ should have no effect, like a push followed by a pop.
Formally, we define $\cong$ as the smallest congruence (with respect to concatenation) on $\ops^*$ containing $\incdec{o_1}.\incdec{o_2} \cong \incdec{o_2}.\incdec{o_1}$ for all $o_1 \indeprel o_2$ and $\inc{o}.\dec{o} \cong \varepsilon$ for all $o$.

The \emph{graph monoid for graph $G$} is $\graphmonoid = \ops^* /_{\cong}$. 
For a word $w \in \ops^*$, we use $\grapheq{w} \in \graphmonoid$ to denote its equivalence class. 
Multiplication is $\grapheq{u}\cdot \grapheq{v}=\grapheq{u.v}$, which is well-defined as $\cong$ is a congruence.
The neutral element of $\graphmonoid$ is the equivalence class of $\varepsilon$, $\graphneutral = \grapheq{\varepsilon}$.

Recall that an element $x$ of a monoid $M$ is called \emph{right-invertible} if there is $y \in M$ such that $x \cdot y = 1_M$.
    We lift this notation to $\ops^*$ by saying that $w \in \ops^*$ is \emph{right-invertible} if its equivalence class $\grapheq{w} \in \graphmonoid$ is.

\vspace*{-0.2cm}

\subparagraph*{Valence Systems.}
Given a graph $G$, a \emph{valence system} over the graph monoid $\graphmonoid$ is a pair $A = (Q,\to)$, where $Q$ is a finite set of control states and $\to \, \subseteq Q \times (\ops \dotcup \set{\varepsilon}) \times Q$ is a set of transitions. 
A transition $q_1\tow{x} q_2$ is labeled by an operation on the memory. 
We write $q_1\to q_2$ if the label is $\varepsilon$, indicating that no operation is executed. 
The size of $A$ is $\card{A}=\card{\to}$.  
We use $\opsof{A}$ to access the set of operations that label transitions in $A$. 

A \emph{configuration} of $A$ is a tuple $(q,w) \in Q \times \ops^*$ consisting of a control state and the sequence of storage operations that has been executed.
We will restrict ourselves to configurations where $w$ is right-invertible.
More precisely, in $(q,w)$ a transition $q_1 \tow{x} q_2$ is \emph{enabled} if $q=q_1$ and $w.x$ is right-invertible.
In this case, the transition leads to the configuration $(q_2,w.x)$, and we write $(q, w)\to (q_2, w.x)$.
A \emph{run} is a sequence of consecutive transitions.

This restriction to right-invertible configurations is justified by the definition of the \emph{reachability problem} for valence systems.
It asks, given a valence system with two states $\qinit,\qfin$, whether we can reach $\qfin$ with neutral memory from $\qinit$ with neutral memory, \ie whether there is a run from $(\qinit,\varepsilon)$ to $(\qfin,w)$ with $\grapheq{w} = \graphneutral$. 
To be able to reach such a configuration $(\qfin,w)$ from some configuration $(q,w')$, $w'$ has to be right-invertible.
%
\subparagraph{Examples.}
Figure~\ref{Figure} depicts various graphs.
The graph monoid of each of these graph models a commonly used storage mechanism, \ie it represents the behavior of the storage.
\vspace*{-2mm}
\begin{enumerate}[(a)]
    \item Valence systems for this graph are pushdown systems over the stack alphabet $\set{a,b,c}$.
    \item Valence systems for this graph can be seen as concurrent pushdown systems with two stacks, each over a binary alphabet.
    \item Petri nets \resp vector addition systems with four counters/places $p_1, p_2, p_3, p_4$ can be modeled as valence systems for this graph.
        Since the valence system labels transitions by single increments or decrements, the transition multiplicities are encoded in unary.
    \item Integer vector addition systems \resp  blind counter automata with counters $c_1, c_2, c_3$ (that may assume negative values) can be seen as valence systems for this graph.
\end{enumerate}

\begin{figure}[hb]
    {\centering\subcaptionbox{}[3cm][c]{\graphstack}}
    {\centering\subcaptionbox{}[3.5cm][c]{\graphmultipushdown}}
    {\centering\subcaptionbox{}[3.5cm][c]{\graphvass}}
    {\centering\subcaptionbox{}[3.5cm][c]{\graphintegervass}}
    \caption{Various examples of graphs representing commonly used storage mechanism.}
    \label{Figure}
\end{figure}

\subparagraph*{What about Queues?}

Let us quickly comment on why it is hard to fit queues into this
framework. An appealing aspect of valence automata over graph monoids
is that by using the monoid identity as the target for reachability
problems (\resp as an acceptance
condition~\cite{BuckheisterZetzsche2013a,Zetzsche2013a,Zetzsche2016c,Zetzsche2017a}),
we can realize a range of storage mechanisms by only varying the
underlying monoid. This is because in the mechanisms that we can
realize, the actions (or compositions of actions) that transform the
empty storage into the empty storage are precisely those that equal
the identity transformation.

In order to keep this aspect, we would need to construct a monoid
whose generators can be interpreted as queue actions so that a
sequence of generators transforms the empty queue into the empty queue
if and only if this sequence evaluates to the identity of the
monoid. This, however, is not possible: Suppose that $a$ and $b$
represent enqueue operations and that $\bar{a}$ and $\bar{b}$ are the
corresponding dequeue operations. Each of the action sequences $a.\bar{a}$
and $b.\bar{b}$ transforms the empty queue into the empty queue, but
$a.b.\bar{b}.\bar{a}$ does not (it is undefined on the empty
queue). Hence, in the monoid, we would want to have $a\bar{a}=1$,
$b\bar{b}=1$, but $ab\bar{b}\bar{a}\ne 1$, which violates
associativity. Hence, although it is possible to model queue behavior
in a monoid~\cite{HuschenbettKuskeZetzsche2017a,Kocher18,KocherK17},
one would need a different target element (or set).


\section{Bounded Context Switching}
\label{Section:BCS}

We introduce a notion of bounded context switching that applies to all valence systems, over arbitrary graph monoids. 
The idea is to let a new context start with an operation that is independent of the current computation, and hence intuitively belongs to a different thread. 
We elaborate on the notion of dependence. 

We call a set of symbols $V'\subseteq V$ \emph{dependent}, if it does not contain $o_1,o_2 \in V$, $o_1 \neq o_2$ with $o_1 \indeprel o_2$.
A set of operations $\ops'\subseteq \ops$ is dependent if its underlying set of symbols $\Set{o }{ \inc{o} \in \ops' \text{ or } \dec{o} \in \ops'}$ is.
A computation is dependent if it is over a dependent set of operations. 
A valence system is said to be dependent if the operations labeling the transitions form a dependent set. 
\begin{definition}
Given $w\in\ops^+$, its context decomposition is defined inductively:
If $w$ is dependent, $w$ is a single context and does not decompose.
Else, the first context $w_1$ of $w$ is the (non-empty) maximal dependent prefix of $w$.
Then, the context decomposition of $w$ is $w=w_1,\ldots,w_k$, where $w_2, \ldots, w_k$ is the context decomposition of the rest of the word.
The number of context switches in $w$, $\cs{w}$, is the number of contexts minus one. For technical reasons, it will be convenient to define $\cs{\varepsilon}=-1$.
\end{definition}
We study reachability under a restricted number of context switches.
\begin{problem}
    \problemtitle{Reachability under bounded context switching}
    \problemshort{(BCSREACH)}
    \probleminput{Valence system $A$, initial state $\qinit$, final state $\qfin$, bound $k$ in unary.}
    \problemquestion{Is there a run from $(\qinit,\varepsilon)$ to $(\qfin, w)$ so that $\grapheq{w} = \graphneutral$ and $\cs{w} \leq k$?}
\end{problem}

\noindent
In all abovementioned graph monoids, the restriction has an intuitive meaning that generalizes existing results.
Using the finite states, our notion of BCS
also permits a finite shared memory among the threads.
In addition, our definition applies to all storage structures expressible in terms of graph monoids, including combinations like stacks of counters.
\begin{lemma}
$\mathsf{(BCSREACH)}$ yields the following restriction:\vspace{-0.2cm}
\begin{enumerate}[(1)]
    \item On pushdowns, the notion does not incur a restriction.
    \item On concurrent pushdowns, the notion corresponds to changing the stack $k$-times and hence yields the original definition~\cite{QadeerR05}.
    \item On Petri nets and blind counters, the notion corresponds to changing the counter $k$-times.
\end{enumerate}
\end{lemma}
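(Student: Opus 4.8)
The plan is to treat the three items uniformly. From the shape of the underlying graph $G$ I would first read off which subsets of $V$ (equivalently, of $\ops$) are dependent; by the definition of the context decomposition this immediately pins down, for every $w\in\ops^+$, where the contexts begin and hence the value of $\cs{w}$. It then only remains to match this with the classical bounded-context notion, using that for the graphs in question valence systems over $\graphmonoid$ are literally the classical machines (pushdowns, concurrent pushdowns, Petri nets, blind counters), with the finite control acting as shared finite memory, and that $\grapheq{w}=\graphneutral$ expresses exactly ``all stacks/counters empty'' while right-invertibility expresses ``all storage contents well defined'' -- this identification is part of the standard theory of graph monoids~\cite{Zetzsche2016c} and is already invoked in the Examples above.

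For (1), $G$ has no edges, so $\indep=\emptyset$ and there is no pair $o_1\neq o_2$ with $o_1\indeprel o_2$; hence every subset of $V$ is dependent, every nonempty $w$ is a single context with $\cs{w}=0$, and $\cs{\varepsilon}=-1$. Since $k$ is a nonnegative integer, $\cs{w}\le k$ holds for all $w$, so $(\mathsf{BCSREACH})$ is ordinary pushdown reachability -- no computation is excluded.

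For (2), $G$ is complete $m$-partite with parts $V_1\dotcup\cdots\dotcup V_m$, and $o_1\indeprel o_2$ iff $o_1,o_2$ lie in different parts; thus a subset of $V$ is dependent iff it is contained in a single part. Consequently the first context of $w$ is its maximal prefix using a single stack, the context decomposition cuts $w$ precisely where the active stack changes, and $\cs{w}$ is the number of stack switches -- the Qadeer--Rehof count~\cite{QadeerR05}. Because a reaching run must end with $\grapheq{w}=\graphneutral$, i.e.\ with every stack empty, $(\mathsf{BCSREACH})$ is exactly $k$-bounded-context reachability for concurrent pushdowns. For (3), $G$ is complete (without resp.\ with a self-loop on every vertex), so $o_1\indeprel o_2$ for all \emph{distinct} $o_1,o_2$; the self-loops are irrelevant, dependence constraining only distinct pairs. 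Hence a dependent set has at most one symbol, a context is a maximal factor operating on a single place/counter, and $\cs{w}$ counts how often the operated-on place/counter changes; identifying $\graphmonoid$ with $\mathbb{N}^V$ resp.\ $\mathbb{Z}^V$, $(\mathsf{BCSREACH})$ becomes reachability of the empty/zero marking while switching the active counter at most $k$ times.

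I do not expect a genuine obstacle here: once the dependent subsets are determined, each item is a direct unfolding of the context decomposition. The one point that needs care is matching the \emph{exact} classical definitions -- in particular, observing that $\varepsilon$/local transitions do not appear in $w$ and therefore cannot create spurious contexts, which is precisely what makes the notion a conservative generalization rather than a strictly finer one. If one wanted the proof fully self-contained one would also have to establish the monoid identifications used above, but these are standard~\cite{Zetzsche2016c} and already taken for granted in the Examples.
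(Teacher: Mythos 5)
The paper does not actually include a proof of this lemma --- it is stated as an immediate consequence of the definitions --- and your argument is precisely the intended unfolding: determine the dependent subsets of $V$ for each graph class (all of $V$ for edgeless graphs, the parts for complete $m$-partite graphs, singletons for complete graphs with or without self-loops, since dependence only constrains \emph{distinct} pairs), and read off the context decomposition. Your additional observations --- that $\varepsilon$-transitions do not appear in $w$ and so cannot create spurious contexts, and that $\grapheq{w}=\graphneutral$ encodes emptiness of all stacks/counters --- are exactly the points that make the correspondence with the classical definitions go through, so the proposal is correct.
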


\noindent Our main result is this.
\begin{theorem}
\label{Theorem:NP}
    $\mathsf{(BCSREACH)}$ is in $\NPTIME$, independent of the storage graph.
\end{theorem}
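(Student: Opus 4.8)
The plan is to reduce, in polynomial time, the existence of an accepting computation to a nondeterministic guess of polynomial size followed by a deterministic polynomial-time check. The heart of the argument is a structural normal form for accepting computations that is \emph{independent of the underlying graph $G$}; the automata-theoretic part then verifies such a normal form against the given valence system $A$.

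\emph{Step 1: block decomposition.} I would first prove a combinatorial lemma about computations alone, with no reference to $A$: if $w\in\ops^*$ is right-invertible along every prefix, $\grapheq{w}=\graphneutral$ and $\cs{w}\le k$, then $w$ factors as $w=B_1B_2\cdots B_m$ with $m=O(k^2)$, where the bound does not depend on $G$, each block $B_i$ lies inside a single context of $w$, and there is an involution $\mu$ on $\{1,\dots,m\}$ (with empty unmatched blocks) such that for each pair $i<\mu(i)$ the block $B_{\mu(i)}$ is the mirror image of $B_i$ -- its reversal with all polarities flipped -- and, reading $B_1\cdots B_m$ as a word over the finite alphabet of blocks, $\mu$ is a legal cancellation matching: blocks commute whenever the symbols they use are independent, nested partners cancel as $B_iB_{\mu(i)}\to\varepsilon$, and collapsing the matching yields $\varepsilon$. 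In particular $\grapheq{w}=\graphneutral$ follows from the block data alone, and conversely any block data of this shape that is runnable in $A$ gives back an accepting run. To prove the lemma one fixes a reduction of $w$ to $\varepsilon$, which matches each occurrence of $\inc{o}$ with a later occurrence of $\dec{o}$; occurrences matched within a single context are subsumed by that context's pushdown-like behaviour and need not be cut out, whereas the occurrences matched across contexts form, in each context, only $O(k)$ maximal bands -- each such band of a context's store is consumed by one particular later context, and there are at most $k$ later contexts -- so cutting the contexts at the band boundaries produces $O(k^2)$ blocks carrying the stated matching.

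\emph{Step 2: verification.} Given the lemma, an accepting run is described by polynomially much data: the control states $\qinit=p_0,p_1,\dots,p_m=\qfin$ visited at block boundaries, grouped into at most $k+1$ consecutive contexts; the involution $\mu$; and, for each block, the dependent set of symbols it is allowed to use. All of this has size polynomial in $\card{A}$ and $k$ and is guessed nondeterministically. It then remains to verify in polynomial time that the guesses can be realized. For each block $B_i$ the operation words leading from $p_{i-1}$ to $p_i$ in $A$ along $\varepsilon$-transitions and transitions over the permitted operations form an effectively computable regular language $L_i$, and the mirror coupling requires $L_i$ to contain a word whose mirror image lies in $L_{\mu(i)}$; since reversal and polarity flipping preserve regularity, this is non-emptiness of a constructible regular language, decidable in $\PTIME$. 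The remaining conditions -- that $\mu$ is a legal cancellation matching for the block word, and that right-invertibility is maintained throughout -- are poly-time checks on combinatorial data of polynomial size, respectively local constraints that can be folded into the $L_i$. Composing the guess with these polynomial checks, and invoking the equivalence of Step 1, yields the $\NPTIME$ upper bound.

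\emph{Main obstacle.} The difficulty is concentrated in Step 1, specifically in bounding $m$ by a polynomial in the number of context switches alone, uniformly over \emph{all} graph monoids: although a single context may be an arbitrarily complicated pushdown-like computation over arbitrarily many pairwise-dependent symbols, one must show that at most $O(k)$ contiguous segments of it ever interact with other contexts, and that these segments can be carved out so as to form the clean, globally collapsing mirror matching described above. By contrast, Step 2 is comparatively routine automata theory.
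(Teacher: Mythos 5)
Your architecture coincides with the paper's: an $O(k^2)$ block decomposition whose blocks cancel pairwise (via ``mirror images'', the paper's syntactic inverses) and commute, followed by a nondeterministic guess of block boundaries and a polynomial-time verification with regular languages. The gaps are both concentrated in Step~1. The first is that the decomposition lemma, as you state it, is false. Take symbols $a,b,c$ where $a$ and $b$ are dependent and $c$ is independent of both, and let
\[
w \;=\; (\inc{b}.\inc{a}.\dec{a})^{\,n}.\inc{c}.(\dec{b})^{n}.\dec{c}\ .
\]
Every prefix is right-invertible, $\grapheq{w}=\graphneutral$, and $w$ has four contexts, so $\cs{w}=3$. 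Yet any block of the first context whose mirror image is to occur in the third context can contain at most one $\inc{b}$: two $\inc{b}$'s are always separated by an $\inc{a}.\dec{a}$ factor, which would reappear in the mirror image but is absent from $(\dec{b})^{n}$; and the $\inc{a}.\dec{a}$ factors can neither be matched with one another (they cannot be commuted together, since $a$ and $b$ are dependent) nor left unmatched. So any matching of the kind you describe needs $\Omega(n)$ blocks, not $O(\cs{w}^2)$. The missing idea is the paper's \emph{irreducibility} preprocessing: one first saturates the valence system so that inside every dependent context each cancellable pair $\inc{o}\ldots\dec{o}$ is short-circuited by an $\varepsilon$-transition (Lemma~\ref{Lemma:Saturation}), and the decomposition theorem is proved only for computations whose contexts are irreducible. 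Irreducibility is also what makes the mirror-image characterization of canceling partners complete (Lemma~\ref{Lemma:SyntacticInverse} requires $u,v$ dependent \emph{and} irreducible), and it forces the regular languages $L_i$ of your Step~2 to be read off the saturated system rather than off $A$ directly.

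The second gap is in the counting. From ``each band is consumed by one particular later context and there are at most $k$ other contexts'' you conclude $O(k)$ bands per context, but this needs that two distinct bands of a context $w_i$ cannot be consumed by the same context $w_j$. That statement is the technical core of the whole bound (Lemma~\ref{Lemma:CancelrelNotTwice}) and requires a genuine argument: one shows that nothing can survive strictly between two such bands---any operation there would have to be swapped across one of them or across their partners, which is impossible among pairwise dependent symbols---so the two bands merge into one, contradicting maximality. Without this, a context could alternate bands with a single partner context unboundedly often. Finally, a pairing in which ``nested partners cancel'' is not automatically a legal global collapse: one must check that an innermost matched pair can actually be made adjacent, i.e., that everything between a minimally nested pair is independent of both partners (the paper's minimal-nesting argument in Proposition~\ref{Prop:BlockDecompFreelyReducible}). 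With irreducibility, the two-bands lemma, and the minimal-nesting argument supplied, your outline becomes the paper's proof.
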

%
Note that the $\NPTIME$ upper bound matches the lower bound in the
case of concurrent pushdowns~\cite{la2010language}.
We consider the proof technique the main contribution of the paper. 
Different from existing approaches, which are based on graph interpretations of computations or encodings into Presburger, ours is of algebraic nature. 
With an algebraic analysis, given in Section~\ref{Section:Decomposition}, we simplify the problem of checking whether a given computation reduces to one, $\grapheq{w}=\graphneutral$. 
We show that such a reduction exists if and only if the computation admits a decomposition into so-called blocks that reduce to one in a strong sense. 
There are two surprising aspects about the block decomposition. 
First, the strong reduction is defined by either commuting two blocks or canceling them if they are inverses. 
This means the blocks behave like operations, despite being full subcomputations. 
Second, the decomposition yields only 
quadratically-many blocks in the number of context switches (important for $\NPTIME$-membership).
The block decomposition is the main technical result of the paper.

The second step, presented in Section~\ref{Section:Algorithm}, is a symbolic check for whether a computation exists whose block decomposition admits a strong reduction.
We rely on automata-theoretic techniques to implement the operations of a strong reduction.
Key is a saturation based on which we give a complete check of whether two automata accept blocks that are inverses.


\section{Block Decomposition}
\label{Section:Decomposition}

In this section, we show how to decompose a computation that reduces to the neutral element into polynomially-many blocks such that the decomposition admits a syntactic reduction to $\varepsilon$.
The size of the decomposition will only depend on the number of contexts of the computation and not on its length.
This result will later provide the basis for our algorithm.

To be precise, we restrict ourselves to computations with so-called irreducible contexts.
In the next section, we will prove that the restriction to this setting is sufficient.

\begin{definition}
    We call a computation $w \in \ops^*$ \emph{irreducible} if it cannot be written as
    \( w = w'.a.w_I.b.w''\)
    such that $a = \inc{o}$, $b = \dec{o}$ and $o$ commutes with every symbol in $w_I$,
    or $a = \dec{o}, b = \inc{o}, o \indeprel o$ and $o$ commutes with every symbol in $w_I$.
\end{definition}
In other words, a computation is irreducible if we cannot eliminate a pair $\inc{o}.\dec{o}$ after using commutativity.
This is in fact the standard definition of irreducibility in the so-called trace monoid, which we do not introduce here.

Our goal is to decompose irreducible contexts such that the decomposition of all contexts in the computation admits a syntactic reduction defined as follows.

\newcommand{\frcancel}{(FR1)}
\newcommand{\frswap}{(FR2)}

\begin{definition}[\cite{LohreyZetzsche2017b}]
    Let $w_1, w_2, \ldots, w_n$ be a sequence of computations in $\ops^*$.
    A \emph{free reduction} is a finite sequence of applications of the following rewriting rules to consecutive entries of the sequence that transforms $w_1, \ldots, w_n$ into the empty sequence.
    \begin{enumerate}
        \item[\textbf{\frcancel}]
            $w_i, w_j \frto \varepsilon$\ ,
            applicable if $\grapheq{w_i.w_j} = \graphneutral$.
        \item[\textbf{\frswap}]
            $w_i, w_j \frto w_j, w_i$\ ,
            applicable if $w_i \indeprel w_j$
    \end{enumerate}

    \noindent
    We call $w_1, w_2, \ldots, w_n$ \emph{freely reducible} if it admits a free reduction.
\end{definition}
Being freely reducible is a strictly stronger property than 
$\grapheq{w_1 . w_2 . \ldots . w_n} = \graphneutral$:
It means that the sequence can be reduced to $\graphneutral$ by block-wise canceling, Rule~\frcancel, and swapping whole blocks, Rule~\frswap.
Indeed, consider $\inc{o_1}.\inc{o_2}, \dec{o_2}, \dec{o_1}$ where no two symbols commute.
We have \( \grapheq{\inc{o_1}.\inc{o_2}.\dec{o_2}.\dec{o_1}} = \graphneutral \), but the sequence is not freely reducible.

The decomposition of a computation $w$ with $\grapheq{w} = \graphneutral$ into its single operations is always freely reducible.
The main result of this section is that for a computation with irreducible contexts, we can always find a freely-reducible decomposition whose length is independent of the length of the computation.

\begin{theorem}
\label{Lemma:Crazy}
    Let $w$ be a computation with $\grapheq{w} = \graphneutral$ and let $w = w_1 \ldots w_k$ be its decomposition into irreducible contexts.
    There is a decomposition of each $w_i = w_{i,1}.w_{i,2}\ldots w_{i,m_i}$ such that
    $m_i \leq k - 1$ and the sequence
    \[
        w_{1,1},w_{1,2},\ldots,w_{1,m_1},
        w_{2,1},w_{2,2},\ldots,w_{2,m_2},
        \ldots,
        w_{k,1},w_{k,2},\ldots,w_{k,m_k}
    \]
    is freely reducible.
\end{theorem}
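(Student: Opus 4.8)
The plan is to prove Theorem~\ref{Lemma:Crazy} by analyzing how a given word $w$ with $\grapheq{w}=\graphneutral$ reduces to $\varepsilon$, and to track which operations cancel which. First I would invoke the standard fact (from the trace-monoid theory alluded to in the excerpt) that $\grapheq{w}=\graphneutral$ means $w$ reduces to $\varepsilon$ by repeatedly picking a pair $\inc{o}\ldots\dec{o}$ (or $\dec{o}\ldots\inc{o}$ when $o\indeprel o$) whose symbols in between all commute with $o$, and deleting it; this induces a \emph{cancellation matching} $\mu$ pairing up the positions of $w$, nested but not necessarily well-nested across contexts since $w$ is only irreducible \emph{within} each context. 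The key structural observation I would establish is that since each context $w_i$ is dependent \emph{and} irreducible, no two positions inside the same context can be matched by $\mu$: a matched pair inside one context would, by dependence, force all intermediate symbols to coincide with $o$ and hence give an eliminable $\inc{o}.\dec{o}$, contradicting irreducibility. Therefore every cancellation pair joins two distinct contexts.

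Next I would set up the block decomposition of a single context $w_i$ by cutting it at the positions where $\mu$ ``changes partner context''. Concretely, reading $w_i$ from left to right, the partners of its positions (via $\mu$) lie in various other contexts $w_j$; I would show that as we scan $w_i$, the sequence of partner-contexts, read appropriately, forms an interval-like / laminar pattern controlled by the nesting of $\mu$, so that $w_i$ is naturally cut into at most $k-1$ blocks $w_{i,1},\ldots,w_{i,m_i}$, one per ``other context'' it talks to (there are at most $k-1$ other contexts). Each block $w_{i,\ell}$ will consist of a consecutive segment of $w_i$ all of whose positions are $\mu$-matched to a single context $w_{j}$ — and moreover, because of the commutation constraints forced by dependence inside $w_j$, the matched segment in $w_i$ and the matched segment in $w_j$ are inverses of each other in $\graphmonoid$, i.e.\ $\grapheq{w_{i,\ell}.w_{j,\ell'}}=\graphneutral$ for the corresponding block $w_{j,\ell'}$. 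This is what lets Rule~\frcancel{} fire on them.

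With the blocks and the pairing between blocks of different contexts in hand, the remaining task is to produce the actual free reduction: a schedule of \frswap{} and \frcancel{} rewriting steps that empties the whole sequence. I would argue that the block-pairing inherited from $\mu$ is itself well-nested \emph{at the block level} (two block-pairs either are disjoint as intervals in the linear order of all blocks, or one is nested in the other — this follows from $\mu$ being a non-crossing matching plus the fact that we cut exactly at partner-context boundaries), and that any two ``adjacent at the current top level'' blocks are independent (they must be, since if they shared a dependent symbol they'd be cancellation-related and hence matched, contradicting adjacency of \emph{unmatched-to-each-other} blocks). Then a standard innermost-pair argument works: repeatedly take an innermost block-pair, use \frswap{} to bring the two blocks next to each other (legal because everything strictly between them is independent of both, being at a deeper level or an already-processed region), apply \frcancel{}, and recurse; since the nesting depth and number of blocks are finite this terminates with the empty sequence.

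The main obstacle I expect is the second step: making precise why cutting $w_i$ at partner-context boundaries (a) yields at most $k-1$ blocks and (b) makes each block exactly the inverse of a single block in another context. Point (a) needs a careful argument that the partner contexts, listed in the order they appear along $w_i$, cannot oscillate — i.e.\ once $w_i$ stops being matched to $w_j$ and starts being matched to $w_{j'}$, it never returns to $w_j$; this should follow from non-crossingness of $\mu$ together with the within-context irreducibility, but it is the delicate combinatorial heart. Point (b) requires pushing the ``all intermediate symbols commute with the cancelled symbol'' condition through an entire block at once, which is where the \emph{dependence} of each context is used crucially (inside a dependent context, ``commutes with $o$'' is very restrictive), and I would expect to need an induction on block length or a normal-form argument in the trace monoid to turn the pointwise cancellation data of $\mu$ into the block-level identity $\grapheq{w_{i,\ell}.w_{j,\ell'}}=\graphneutral$. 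Everything else — the nesting of the block matching and the innermost-pair scheduling — is then routine bookkeeping.
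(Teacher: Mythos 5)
Your plan follows the same architecture as the paper's proof: fix a reduction witnessing $\grapheq{w}=\graphneutral$, extract a cancellation relation on positions, observe that irreducibility plus dependence of each context forbids intra-context cancellation, cut each context into blocks, show paired blocks are inverses, and schedule the free reduction by repeatedly extracting an innermost pair. Your four subgoals correspond one-to-one to the paper's lemmas: no two positions of the same context cancel (Lemma~\ref{Lemma:CancelrelNotSameBlock}), at most one block of $w_i$ cancels into any fixed $w_j$, which yields $m_i\le k-1$ (Lemma~\ref{Lemma:CancelrelNotTwice}), $s_1\cancelrel s_2$ implies $\grapheq{s_1.s_2}=\graphneutral$ (Lemma~\ref{Lemma:CancelrelInverses}), and the induction on the number of blocks via minimal nestings (Proposition~\ref{Prop:BlockDecompFreelyReducible}).

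There is, however, a genuine gap in how you propose to discharge exactly the two steps you single out as delicate. You lean on ``non-crossingness of $\mu$'', but the cancellation matching in a graph monoid is \emph{not} non-crossing: for $a\indeprel b$ the word $\inc{a}.\inc{b}.\dec{a}.\dec{b}$ reduces to $\varepsilon$ only with a crossing matching. So neither the non-oscillation of partner contexts along $w_i$ nor the well-nestedness of the block pairing can be obtained from planarity of $\mu$. Likewise, the claim that adjacent top-level blocks must be independent ``since if they shared a dependent symbol they'd be cancellation-related'' is false: two blocks of the \emph{same} context are adjacent, share dependent symbols, and yet cancel into different contexts. The correct arguments instead exploit that a context is dependent, so a reduction can never swap two distinct-symbol letters of the same context past one another; one then performs a case analysis on where the partner of a letter lying between two candidate blocks can sit (before $w_i$, in $w_{i+1}\ldots w_{j-1}$, in $w_j$, or after $w_j$) and derives a contradiction with irreducibility or dependence in each case --- this is how the paper proves both the ``one block per context pair'' bound and, in the crossing case, rules out $w_j=\ldots s_2\ldots t_2\ldots$. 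For the scheduling step, independence of the material between a minimal nesting $s_1\cancelrel s_2$ is read off from the fixed reduction $\pi$ itself (by minimality nothing in between cancels, so every intervening letter must be swapped past $s_1$ or $s_2$), not from a general adjacency claim. With these repairs your outline becomes the paper's proof; without them the combinatorial heart is unsupported.
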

Note that the number of words occurring in the decomposition is bounded by $k^2$.
Theorem~\ref{Lemma:Crazy} can be seen as a strengthened version of Lemma~3.10 from~\cite{LohreyZetzsche2017b}:
We use the bound on the number of contexts to obtain a polynomial-size decomposition instead of an exponential one.
However, the proofs of the two results are vastly different.
\subparagraph{Constructing a Freely-Reducible Decomposition.}
The rest of this section will be dedicated to the proof of Theorem~\ref{Lemma:Crazy}.
Let $w \in \ops^*$ be the computation of interest with $\grapheq{w} = \graphneutral$.
We assume that it has length $n$ and $w=w_1 \ldots w_k$ is its decomposition into contexts.
For the first part of the proof, we do not require that each $w_i$ is irreducible.
As $\grapheq{w} = \graphneutral$, $w$ can be transformed into $\varepsilon$ by finitely often swapping letters and canceling out operations.
We formalize this by defining transition rules, similar to the definition of a free reduction.

For the technical development, it will be important to keep track of the original position of each operation in the computation.
To this end, we see $w$ as a word over $\ops \times \oneto{n}$, \ie we identify the \nth{$x$} operation $a$ of $w$ with the tuple $(a,x)$.
For ease of notation, we write $\ltr{w}{x}$ for the \nth{$x$} operation of $w$.
The annotation of letters by their original position will be preserved under the transition rules.

\newcommand{\rcancel}{(R1)}
\newcommand{\rcancelrev}{(R2)}
\newcommand{\rswap}{(R3)}

\begin{definition}
    A \emph{reduction} of $w$ is a finite sequence of applications of the following rewriting rules that transforms $w$ into into $\varepsilon$.
    \begin{enumerate}
        \item[\textbf{\rcancel}]
            \(
                w'.\ltr{w}{x}.\ltr{w}{y}.w''
                \rto
                w'.w''
                \ ,
            \)
            applicable if $\ltr{w}{x} = \inc{o}$, $\ltr{w}{y} = \dec{o}$ for some $o$.
        \item[\textbf{\rcancelrev}]
            \(
                w'.\ltr{w}{x}.\ltr{w}{y}.w''
                \rto
                w'.w''
                \ ,
            \)
            applicable if $\ltr{w}{x} = \dec{o}$, $\ltr{w}{y} = \inc{o}$ for $o\indeprel o$.

        \item[\textbf{\rswap}]
            \(
                w'.\ltr{w}{x}.\ltr{w}{y}.w''
                \rto
                w'.\ltr{w}{y}.\ltr{w}{x}.w''
                ,
            \)
            applicable if $\ltr{w}{x} \in \incdec{o_1}, \ltr{w}{y} \in \incdec{o_2}$ for $o_1 \indeprel o_2$,\\ \hspace*{2em} $o_1 \ne o_2$.
    \end{enumerate}

\end{definition}

\noindent
If a word $u$ can be transformed into $v$ using these rules, we write
$u\rto^* v$.  Note that a reduction of $w$ to $\varepsilon$ can be seen as a free
reduction of the sequence we obtain by decomposing $w$ into single
operations.  

\begin{restatable}{rlemma}{restateLemmaGedoens}
\label{Lemma:Gedoens}
    For a word $w$, we have $\grapheq{w} = \graphneutral$ iff $w$ admits a reduction.
\end{restatable}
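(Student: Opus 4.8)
The plan is as follows. For the ``if'' direction I would check that each of the three rules is \emph{sound}: replacing a subword as prescribed by \rcancel, \rcancelrev, or \rswap\ does not change the image $\grapheq{\cdot}\in\graphmonoid$, because $\cong$ is a congruence and each rule instantiates a consequence of the defining relations --- \rcancel\ is the relation $\inc o.\dec o\cong\varepsilon$, \rswap\ is the commutation relation $\incdec{o_1}.\incdec{o_2}\cong\incdec{o_2}.\incdec{o_1}$, and \rcancelrev\ holds because a self-loop $o\indeprel o$ gives $\dec o.\inc o\cong\inc o.\dec o\cong\varepsilon$. Hence $w\rto^*\varepsilon$ implies $\grapheq w=\grapheq\varepsilon=\graphneutral$. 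The position annotations attached to letters play no role here, since no applicability condition of a rule refers to them.

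For the ``only if'' direction I would use induction on $|w|$. The case $w=\varepsilon$ is immediate; otherwise $\grapheq w=\graphneutral$ and $|w|\geq 1$ force $|w|\geq 2$, as $\cong$ preserves the parity of the length. The crucial step is the claim that \emph{every non-empty $w$ with $\grapheq w=\graphneutral$ is reducible}, i.e.\ can be written $w=w'.a.w_I.b.w''$ where $(a,b)$ is a cancellable pair on some symbol $o$ (either $a=\inc o, b=\dec o$, or $a=\dec o, b=\inc o$ with $o\indeprel o$), $w_I$ contains no occurrence of $o$, and every symbol appearing in $w_I$ is independent of $o$. Granting this, a block of \rswap-steps moves $a$ rightward across $w_I$ --- legal, since the symbols of $w_I$ are independent of and distinct from $o$ --- producing $w'.w_I.a.b.w''$, after which \rcancel\ or \rcancelrev\ deletes $a.b$ and yields $w_1=w'.w_I.w''$ with $|w_1|=|w|-2$. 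By the soundness observation $\grapheq{w_1}=\grapheq w=\graphneutral$, so the induction hypothesis gives a reduction of $w_1$ to $\varepsilon$, and prepending the steps above gives one for $w$.

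It remains to prove the claim, which is the technical core; equivalently, $\varepsilon$ is the only \emph{irreducible} word representing $\graphneutral$. I would obtain this from the confluence of the string-rewriting system formed by the length-decreasing rules \rcancel, \rcancelrev\ taken \emph{modulo} the trace congruence $\congindep$ generated by \rswap: the system terminates since length strictly decreases, and it is locally confluent modulo $\congindep$ (the only essential overlaps, $\inc o.\dec o.\inc o$ and $\dec o.\inc o.\dec o$ for a self-looped $o$, both close up), so by Newman's lemma for rewriting modulo a congruence it is confluent modulo $\congindep$; since $\varepsilon$ is irreducible and represents $\graphneutral$, every irreducible representative of $\graphneutral$ lies in the $\congindep$-class of $\varepsilon$, which is $\{\varepsilon\}$. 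An alternative, more elementary route erases, for a fixed symbol $o$, all operations on other symbols; this is a monoid homomorphism from $\graphmonoid$ onto the bicyclic monoid when $o$ has no self-loop and onto the group of integers when it does, and evaluating it at $w$ shows that the $o$-operations of $w$ form a balanced Dyck word (resp.\ a polarity-balanced word), from which a suitable cancellable pair can be extracted. In either formulation I expect the real obstacle to be the same delicate point: ensuring that the cancellable pair one extracts has an interior that fully commutes with its symbol --- in rewriting terms, verifying the coherence/critical-pair conditions underlying confluence modulo commutation --- which is routine in trace-rewriting theory but requires care.
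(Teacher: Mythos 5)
Your overall strategy is sound and it is genuinely different from the paper's. The ``if'' direction is exactly as you describe. For the converse, however, the paper proves no confluence or normal-form statement itself: it imports from the prior work \cite{Zetzsche2016c,Zetzsche2013a} the fact that $\grapheq{w}=\graphneutral$ iff $w\vdash^*\varepsilon$ for a slightly more liberal rewriting relation $\vdash$ (whose swap rule does not require $o_1\ne o_2$, so it may interchange $\inc{o}$ and $\dec{o}$ for a self-looped $o$), and then gives a short induction on $|w|$ converting a $\vdash$-derivation into an $\rto$-derivation: the only mismatch is such an ``undesirable'' swap, and at the first one the swapped pair already satisfies $\grapheq{a.b}=\graphneutral$, so it can be cancelled outright by \rcancel\ or \rcancelrev\ and the induction hypothesis applied to the shorter word. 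You instead re-derive the underlying normal-form fact from scratch, via the claim that every non-empty representative of $\graphneutral$ contains a cancellable pair whose interior avoids $o$ and commutes with $o$. That claim is true, and your induction correctly reduces the lemma to it (including the point that the interior must avoid $o$ so that the side condition $o_1\ne o_2$ of \rswap\ is met).

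The gap is that neither of your two justifications of that claim is complete, and the claim carries essentially all of the difficulty of the lemma. For rewriting modulo a congruence, termination plus joinability of the two rule--rule overlaps $\inc{o}.\dec{o}.\inc{o}$ and $\dec{o}.\inc{o}.\dec{o}$ does not yield confluence modulo $\congindep$: one must also discharge the Huet-style coherence conditions between the cancellation rules and the commutation equations (for instance the interaction of $\inc{o}.\dec{o}\to\varepsilon$ with a swap of $\dec{o}$ past an adjacent independent letter), and this is exactly the content the paper obtains by citation rather than by proof. The projection onto the bicyclic monoid or onto $\mathbb{Z}$ is indeed a homomorphism, but it only shows that the $o$-letters of $w$ form a balanced word; it gives no control over whether the letters lying between a matched $\inc{o},\dec{o}$ pair are independent of $o$, which is the whole point of the claim. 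So as written the proposal is a valid alternative skeleton with an unproven core; either carry out the coherence check in full, or do as the paper does and cite the known rewriting characterization of $\grapheq{w}=\graphneutral$, after which only the small bridge to $\rto$ remains.
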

\noindent
Consequently, we may fix a reduction $\pi = w \rto^* \varepsilon$ that transforms $w$ into $\varepsilon$.
The following definitions will depend on this fixed $\pi$.

\begin{definition}
    We define a relation $\cancelrel$ that relates positions of $w$ that cancel in $\pi$, \ie
    \[
        \ltr{w}{x} \cancelrel \ltr{w}{y}
        \quad \text{ if } \quad
            w'.\ltr{w}{x}.\ltr{w}{y}.w''
            \rto
            w'.w''
            \text{ or }
            w'.\ltr{w}{y}.\ltr{w}{x}.w''
            \rto
            w'.w''
        \text{ is used in } \pi
        \ .
    \]
    We lift it to infixes of $w$ by defining inductively
    \begin{align*}
        t_1 s_1 \cancelrel s_2 t_2
        \quad \text{ if } \quad
        & \text{there are contexts } w_i = w_{i1}.t_1.s_1.w_{i2}\text{ and }w_j = w_{j1}.s_2.t_2.w_{j2}
        \\
        &
        \text{of }w\text{ such that } s_1 \cancelrel s_2 \text{ and } t_1 \cancelrel t_2
        \ .
    \end{align*}
    An infix $u$ of a context $w_i$ is called a \emph{cluster} if
    there is an infix $u'$ of a context $w_j$ such that $u \cancelrel
    u'$. Moreover, if $u$ is a maximal cluster in $w_i$, then it is called a
    \emph{block}.  
\end{definition}
Note that $\cancelrel$ is symmetric by definition.
In the following, when we write $s_1 \cancelrel s_2$, we will assume that $s_1$ appears before $s_2$ in $w$, \ie $w = w'.s_1.w''.s_2.w'''$.
We now show that each context has a unique decomposition into blocks.
Afterwards, we will see that the resulting block decomposition is the decomposition required by Theorem~\ref{Lemma:Crazy}.

\begin{restatable}{rlemma}{restateLemmaUniqueFactorization}
    Every context has a unique factorization into blocks.
  \end{restatable}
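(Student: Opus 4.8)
The plan is to show that the blocks of a context $w_i$ are pairwise non-overlapping infixes that together cover $w_i$, so that ordering them by position gives the unique factorization. First I would establish that every position of $w_i$ lies in some cluster: since $\grapheq{w}=\graphneutral$, in the fixed reduction $\pi$ every letter $\ltr{w}{x}$ eventually cancels against some partner $\ltr{w}{y}$ via rule \rcancel{} or \rcancelrev{}, hence the single letter $\ltr{w}{x}$ is itself a cluster (taking $t_1,t_2,s_2$ empty in the inductive definition), so in particular it is contained in a maximal cluster, i.e. a block. This shows the blocks cover $w_i$.

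The heart of the argument is that two blocks of the same context cannot properly overlap. Here the key structural fact I would extract from the definition of $\cancelrel$ on infixes is a \emph{nesting/non-crossing property} inherited from the reduction: if $u \cancelrel u'$ as infixes, the inductive clause builds $u = s_1$ (innermost), then wraps it with $t_1$ on the left inside a single context, matched by $t_2$ on the right of $s_2$ in the partner context. The crucial point is that in a sequential reduction $\pi$ the cancellation relation $\cancelrel$ on single positions is \emph{non-crossing}: if $\ltr{w}{x_1}\cancelrel \ltr{w}{y_1}$ and $\ltr{w}{x_2}\cancelrel \ltr{w}{y_2}$ with $x_1 < x_2$, then either the intervals $[x_1,y_1]$ and $[x_2,y_2]$ are nested or disjoint — because \rswap{} only exchanges \emph{adjacent} positions, so a pair can only cancel once everything strictly between them has already been removed, which forces well-nestedness. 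I would prove this non-crossing lemma first (by induction on the length of $\pi$, tracking how the relative order and adjacency of surviving positions evolve). Given non-crossing of $\cancelrel$ on positions, I would lift it to clusters: if two clusters $u$ (in $w_i$) and $\hat u$ (also in $w_i$) overlap but neither contains the other, pick positions witnessing the overlap and use the matched structure of each cluster together with non-crossing to derive a contradiction — specifically, the innermost matched pair of one cluster would cross a matched pair of the other.

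From non-crossing of clusters within a context, maximality gives the result: two distinct blocks are maximal clusters, so neither contains the other, hence by the overlap lemma they are disjoint; as infixes of a word, disjoint infixes are linearly ordered by position. Thus listing the blocks left to right yields a factorization $w_i = w_{i,1}\cdots w_{i,m_i}$, and it is unique because any block decomposition must consist exactly of the maximal clusters. The main obstacle I anticipate is the non-crossing lemma for $\cancelrel$ on positions: one has to carefully argue, over the course of the reduction $\pi$, that the partner structure cannot interleave, which requires a bookkeeping argument about how \rswap{} preserves the cyclic/linear arrangement of not-yet-cancelled letters and how \rcancel{}/\rcancelrev{} can only remove an adjacent pair. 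Everything after that lemma is comparatively routine combinatorics on intervals.
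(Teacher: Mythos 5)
The first half of your plan (every position lies in some block because $\pi$ cancels it against a unique partner) is exactly the paper's argument. The second half, however, rests on a non-crossing lemma for $\cancelrel$ on positions that is false as stated: take $w = \inc{a}.\inc{b}.\dec{a}.\dec{b}$ with $a \indeprel b$ and $a \neq b$. A reduction may swap positions $2$ and $3$ by Rule~\rswap\ and then cancel $1$ with $3$ and $2$ with $4$, so the intervals $[1,3]$ and $[2,4]$ cross. The flaw in your justification is the claim that a pair can only cancel once everything strictly between it has been removed: Rule~\rswap\ can move an intervening letter \emph{out of} the interval without removing it.

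Moreover, even a correct non-crossing statement would not close the argument, because two overlapping clusters of the same context do not produce crossing position pairs in the first place. If $u = r.s$ and $v = s.t$ overlap in $w_i$, the inductive clause of $\cancelrel$ forces their partner infixes to have the reversed shapes $u' = s'.r'$ and $v' = t'.s'$, and the uniqueness of each position's cancellation partner forces both partners into the same context $w_j$, arranged as $t'.s'.r'$; one checks that every resulting position pair is then \emph{nested} inside the others, so no crossing arises and your intended contradiction does not materialize. The contradiction you actually need is with \emph{maximality}: in this configuration $r.s.t \cancelrel t'.s'.r'$, so the union of the two overlapping clusters is itself a cluster, contradicting that $u$ and $v$ were maximal. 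This is how the paper finishes, and it needs only the uniqueness of cancellation partners together with the inductive definition of $\cancelrel$ on infixes --- no bookkeeping over the course of $\pi$ is required.
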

\noindent
To prove the lemma, we show that each position belongs to at least one block and
to at most one block.
We call the unique factorization of a context $w_i$ into blocks the \emph{block decomposition} of $w_i$ (induced by $\pi$) and denote it by
\[
w_i = w_{i,1}, \ldots, w_{i,m_i}.
\]
The \emph{block decomposition} of $w$
(induced by $\pi$) is the concatenation of the block decompositions of its contexts,
\[
w = w_{1,1}, \ldots, w_{1,m_1}, \ldots, w_{k,1}, \ldots, w_{k,m_k} \ .
\]
Note that if $u$ is a block and $u\cancelrel v$, then $v$ is a block as well.
Therefore, $\cancelrel$ is a one-to-one correspondence of blocks.
It remains to prove that the block decomposition of $w$ admits a free reduction.
We will show that we can inductively cancel out blocks pairwise, starting with an \emph{innermost} pair.
Being innermost is formalized by the following relation.
\begin{definition}
    We define relation $\ord{w}$ on $\cancelrel$-related pairs of blocks by
    \(
        (s_1 \cancelrel s_2) \ord{w} (t_1 \cancelrel t_2)
    \)
    if
    \(
        w = \upr{w}{1} . t_1 . \upr{w}{2} . s_1 . \upr{w}{3} . s_2 .  \upr{w}{4} . t_2 . \upr{w}{5}
    \)
    for appropriately chosen $\upr{w}{1}, \ldots, \upr{w}{5}$.
    A pair $s_1 \cancelrel s_2$ minimal \wrt this order is called \emph{minimal nesting} in $w$.
\end{definition}
Note that we still assume that all letters are annotated by their position.
This means if $\upr{w}{1}, \ldots, \upr{w}{5}$ exist, they are uniquely determined.

\begin{restatable}{rlemma}{restateLemmaMinimalNesting}
\label{Lemma:MinimalNesting}
    $\ord{w}$ has a minimal nesting.
\end{restatable}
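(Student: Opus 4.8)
The plan is to argue that $\ord{w}$ is a strict partial order on the (finite) set of $\cancelrel$-related pairs of blocks, and hence has a minimal element. Finiteness is immediate: there are at most $k$ contexts, each decomposes into at most finitely many blocks, and $\cancelrel$ pairs blocks one-to-one, so there are only finitely many $\cancelrel$-related pairs. The real content is to check that $\ord{w}$ is irreflexive and transitive, after which the existence of a minimal element is standard.

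First I would unfold the definition. Recall $(s_1 \cancelrel s_2) \ord{w} (t_1 \cancelrel t_2)$ means $w$ factors as $\upr{w}{1}.t_1.\upr{w}{2}.s_1.\upr{w}{3}.s_2.\upr{w}{4}.t_2.\upr{w}{5}$, i.e.\ the interval spanned by $s_1,s_2$ in $w$ is strictly nested inside the interval spanned by $t_1, t_2$. Since every letter is annotated with its original position, each block occupies a well-defined set of positions in $w$, so "block $u$ lies in the open interval between block $v_1$ and block $v_2$" is an unambiguous statement about position intervals. Irreflexivity: a pair cannot be strictly nested in itself, because that would force $t_1$ to occur strictly before $s_1 = t_1$ at a smaller position, contradicting uniqueness of the annotated occurrence. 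Transitivity: if the $s$-interval is nested in the $t$-interval and the $t$-interval is nested in the $r$-interval, then the $s$-interval is nested in the $r$-interval — this is just transitivity of "is a subinterval of, with both endpoints strictly inside", applied to the position intervals; one has to spell out how the five-way factorizations compose, but it is routine once phrased in terms of position intervals. Concretely, I would define for a $\cancelrel$-related pair $P = (s_1 \cancelrel s_2)$ its span $\mathrm{span}(P) = (l_P, r_P)$ where $l_P$ is the position of the last letter of $s_1$ and $r_P$ is the position of the first letter of $s_2$, and observe $P \ord{w} P'$ iff $l_{P'} < l_P$ and $r_P < r_{P'}$; then $\ord{w}$ inherits irreflexivity and transitivity from the componentwise strict order on pairs of integers.

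Having established that $\ord{w}$ is a strict partial order on a finite nonempty set — nonempty because $\grapheq{w}=\graphneutral$ forces at least one cancellation in the fixed reduction $\pi$, hence at least one $\cancelrel$-related pair of positions, which lies inside some $\cancelrel$-related pair of blocks — any finite nonempty strictly partially ordered set has a minimal element, giving a minimal nesting.

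The main obstacle I anticipate is purely bookkeeping: making precise that $\cancelrel$ on blocks is well-defined and one-to-one (so that "the pair $s_1 \cancelrel s_2$" is unambiguous and spans are well-defined), and carefully verifying that the inductive definition of $\cancelrel$ on infixes, combined with the block-decomposition lemma, indeed yields position intervals that behave like honest subintervals of $\oneto{n}$. None of this is conceptually hard, but the five-part factorization notation in the definition of $\ord{w}$ invites off-by-one confusion, so I would do all the reasoning at the level of position spans rather than manipulating the factorizations directly.
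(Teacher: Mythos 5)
Your proposal is correct and follows essentially the same route as the paper: both arguments establish that $\ord{w}$ is a strict order (you via irreflexivity and transitivity of position spans, the paper via transitivity and antisymmetry derived from the uniqueness of annotated occurrences) and then invoke finiteness of the set of $\cancelrel$-related pairs to extract a minimal element. Your explicit remark on non-emptiness and the span reformulation are only packaging differences.
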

\noindent
The next lemma states that $s_1 \cancelrel s_2$ implies that $s_2$ is (a representative of) a right inverse of $s_1$. While we already know that the operations in $s_1$ cancel with those in $s_2$, it could ostensibly be the case that $\grapheq{s_2}$ is a left-inverse to $\grapheq{s_1}$.
\begin{restatable}{rlemma}{restateLemmaCancelrelInverses}
\label{Lemma:CancelrelInverses}
    If $s_1 \cancelrel s_2$, then $\grapheq{s_1 . s_2} = \graphneutral$.
\end{restatable}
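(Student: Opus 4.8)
The plan is to induct on the inductive definition of $\cancelrel$: the base case treats single operations that a rule (R1) or (R2) cancels in $\pi$, and the inductive step treats the lifting $t_1 s_1 \cancelrel s_2 t_2$ to infixes. Throughout we keep the convention that $s_1$ occurs before $s_2$ in $w$.

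For the base case, let $s_1 = \ltr{w}{x}$ and $s_2 = \ltr{w}{y}$ with $x < y$. First I would establish the following invariant of any reduction: two operations carrying the same symbol $o$ never exchange their relative order during $\pi$. Indeed, (R1) and (R2) only delete adjacent pairs and therefore preserve the order of surviving positions, while (R3) exchanges two adjacent operations only when they carry \emph{distinct} symbols; a short case check shows that a single rule application never transposes two same-symbol operations that both survive. Since the pair cancelled by (R1) (resp.\ (R2)) always consists of $\inc{o}$ and $\dec{o}$ in some order, $\ltr{w}{x}$ and $\ltr{w}{y}$ carry the same symbol, so at the step of $\pi$ where they are cancelled $\ltr{w}{x}$ lies immediately left of $\ltr{w}{y}$. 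A case distinction on the rule then finishes it: if (R1) is used, $\ltr{w}{x} = \inc{o}$, $\ltr{w}{y} = \dec{o}$, hence $\grapheq{s_1 . s_2} = \grapheq{\inc{o}.\dec{o}} = \graphneutral$; if (R2) is used, $\ltr{w}{x} = \dec{o}$, $\ltr{w}{y} = \inc{o}$ with $o \indeprel o$, so $\incdec{o} \indeprel \incdec{o}$ and $\grapheq{s_1 . s_2} = \grapheq{\dec{o}.\inc{o}} = \grapheq{\inc{o}.\dec{o}} = \graphneutral$. This is exactly what rules out the situation, mentioned just before the lemma, where $\grapheq{s_2}$ would only be a left inverse of $\grapheq{s_1}$.

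For the inductive step, write $s_1 = t_1 . s_1'$ as an infix of a context $w_i$ and $s_2 = s_2' . t_2$ as an infix of a context $w_j$, with $s_1' \cancelrel s_2'$ and $t_1 \cancelrel t_2$. Since all positions are annotated and $s_1$ occurs before $s_2$, the occurrences of $t_1, s_1', s_2', t_2$ in $w$ appear in exactly this left-to-right order; in particular $s_1'$ occurs before $s_2'$ and $t_1$ occurs before $t_2$, so the induction hypothesis yields $\grapheq{s_1' . s_2'} = \graphneutral$ and $\grapheq{t_1 . t_2} = \graphneutral$. Using that $\grapheq{u.v} = \grapheq{u}\cdot\grapheq{v}$,
\[
    \grapheq{s_1 . s_2} = \grapheq{t_1 . s_1' . s_2' . t_2} = \grapheq{t_1}\cdot\grapheq{s_1' . s_2'}\cdot\grapheq{t_2} = \grapheq{t_1}\cdot\grapheq{t_2} = \grapheq{t_1 . t_2} = \graphneutral ,
\]
which closes the induction.

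The main obstacle is the base case, and within it the invariant that the relative order of same-symbol operations is preserved along $\pi$: this is what guarantees that a cancellation always happens in the \emph{push-before-pop} orientation, which in turn forces $\grapheq{s_1 . s_2}$ (rather than $\grapheq{s_2 . s_1}$) to be $\graphneutral$. Once that is in place, the rule-by-rule computation in the base case and the homomorphism computation in the inductive step are routine.
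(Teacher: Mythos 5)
Your proof is correct and follows essentially the same route as the paper's: induction over the structure of $\cancelrel$, with a rule-by-rule base case and a congruence computation $\grapheq{t_1.s_1'.s_2'.t_2}=\grapheq{t_1.t_2}=\graphneutral$ in the inductive step. The only difference is that you make explicit the order-preservation invariant for same-symbol operations that the paper compresses into ``by definition of $\rto$''; that added detail is welcome but not a different argument.
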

\begin{restatable}{rproposition}{restatePropositionBlockDecompFreelyReducible}
\label{Prop:BlockDecompFreelyReducible}
    Let $\pi \colon w \tored^* \varepsilon$ be a reduction of $w$.
    The block decomposition of $w$ induced by $\pi$ is freely reducible.
\end{restatable}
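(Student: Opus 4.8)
The plan is to prove Proposition~\ref{Prop:BlockDecompFreelyReducible} by induction on the number of $\cancelrel$-related pairs of blocks in the block decomposition of $w$ induced by $\pi$. If there are no such pairs then $w = \varepsilon$ and the empty sequence is trivially freely reducible. For the inductive step, I would pick a \emph{minimal nesting} pair $s_1 \cancelrel s_2$, which exists by Lemma~\ref{Lemma:MinimalNesting}. The goal is to show that this pair can be brought together by a sequence of \frswap{} rule applications and then cancelled by one application of \frcancel{}, leaving a block decomposition of a shorter word $w'$ (with $\grapheq{w'} = \graphneutral$) to which the induction hypothesis applies.

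First I would argue that $s_1$ and $s_2$ are genuine blocks in the sense that the \frcancel{} rule applies to them once they are adjacent: by Lemma~\ref{Lemma:CancelrelInverses} we have $\grapheq{s_1 . s_2} = \graphneutral$, which is exactly the side condition of \frcancel{}. So the only real work is to move $s_1$ and $s_2$ next to each other using \frswap{}, i.e.\ to show that every block $u$ lying strictly between $s_1$ and $s_2$ in the sequence is independent of $s_1$ (equivalently of $s_2$). Here is where minimality of the nesting pair is used: if some block $u$ between $s_1$ and $s_2$ shared a symbol with $s_1$, then I would derive a contradiction with minimality. The key point is that $u$, being a block, is itself $\cancelrel$-related to some block $u'$; and one analyzes where $u'$ sits relative to $s_1$ and $s_2$. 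If $u'$ also lies between $s_1$ and $s_2$, then $u \cancelrel u'$ is a nesting pair strictly below $s_1 \cancelrel s_2$ in $\ord{w}$, contradicting minimality. If instead $u'$ lies outside the interval, one shows that the operations of $u$ that cancel outside would have to "cross" the operations of $s_1$ or $s_2$ in the reduction $\pi$, which is impossible precisely because those operations are non-independent (a swap \rswap{} is only allowed between independent, distinct symbols, and \rcancel{}/\rcancelrev{} only act on adjacent positions). Thus no block between $s_1$ and $s_2$ can depend on $s_1$, so by repeated \frswap{} we make $s_1$ and $s_2$ adjacent.

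After cancelling $s_1$ and $s_2$ via \frcancel{}, I obtain the sequence for a word $w'$ obtained from $w$ by deleting the operations of $s_1$ and $s_2$. One must check that this deletion is compatible with everything: $w'$ still satisfies $\grapheq{w'} = \graphneutral$ (immediate, since we removed a factor equal to the neutral element after commuting), the restriction of $\pi$ yields a reduction $\pi'$ of $w'$, and the block decomposition of $w'$ induced by $\pi'$ is exactly the old block decomposition with the two blocks $s_1, s_2$ removed (no new $\cancelrel$-pairs appear and none of the remaining ones are affected, because $s_1, s_2$ were a $\cancelrel$-pair on their own and $\cancelrel$ is a one-to-one correspondence of blocks). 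Hence the number of $\cancelrel$-pairs strictly decreases and the induction hypothesis gives a free reduction of the block decomposition of $w'$; prepending the \frswap{} and \frcancel{} steps gives a free reduction of the block decomposition of $w$.

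The main obstacle I anticipate is the crossing argument in the inductive step: carefully ruling out that a block $u$ strictly between $s_1$ and $s_2$ can be dependent on $s_1$. This requires reasoning about the geometry of the fixed reduction $\pi$ — how positions move under \rswap{} and vanish under \rcancel{}/\rcancelrev{} — and combining it with the definition of $\ord{w}$ and the minimality of the chosen nesting pair. In particular one has to handle the subcase where $u$'s partner block $u'$ straddles one endpoint of the $[s_1, s_2]$ interval, and show that this too is incompatible with $u$ and $s_1$ sharing a symbol. Everything else (the side conditions for the rewriting rules, the bookkeeping of deleting a block pair, the base case) should be routine given Lemmas~\ref{Lemma:MinimalNesting} and~\ref{Lemma:CancelrelInverses} and the uniqueness of the block factorization.
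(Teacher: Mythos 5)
Your proposal matches the paper's proof in all essentials: induction on the number of blocks (equivalently, of $\cancelrel$-pairs), selection of a minimal nesting $s_1 \cancelrel s_2$ via Lemma~\ref{Lemma:MinimalNesting}, the observation that minimality forbids any $\cancelrel$-pair lying entirely between $s_1$ and $s_2$ so that every letter in between must be swapped past $s_1$ or $s_2$ (forcing independence), cancellation via Lemma~\ref{Lemma:CancelrelInverses}, and restriction of $\pi$ to the shorter word. The only point the paper makes explicit that you leave implicit is that after deleting $s_1, s_2$ the \emph{contexts} of $w'$ may merge, so the induction must be run on the inherited block sequence rather than on a freshly computed context decomposition---but your remark that the block decomposition of $w'$ is the old one minus $s_1,s_2$ is exactly the right resolution.
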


\begin{proof}
    If $w = \varepsilon$, then there is nothing to do.
    Otherwise, $w$ decomposes into at least two blocks.
    We proceed by induction on the number of blocks.
%
    In the base case, let us assume that $w = u,v$ is the block decomposition, where $u \cancelrel v$ has to hold.
    Using Lemma~\ref{Lemma:CancelrelInverses},
    \(
        u,v \frtop{\text{\frcancel}} \varepsilon
    \)
    is the desired free reduction.
    
    In the inductive step, we pick a minimal nesting $s_1 \cancelrel s_2$ in $w$.
    As argued in Lemma~\ref{Lemma:MinimalNesting}, this is always possible.
    We may write
    \[
        w
        =
        w_1 \ldots
        \underbrace{w_{i_1} s_1 w_{i_2}}_{\text{context } w_i}
        \ldots
        \underbrace{w_{j_1} s_2 w_{j_2}}_{\text{context } w_j}
        \ldots w_k
        \ .
    \]
    Since $s_1 \cancelrel s_2$, we know that by definition of $\cancelrel$,
    $\pi$ has to move each letter from $s_1$ next to the corresponding letter of $s_2$ or vice versa.
    
    Let us consider the effect of $\pi$ on the infix $w_{i_2} \ldots w_{j_1}$.
    Without further arguments, the reduction $\pi$ could cancel some letters inside this infix, and it can swap the remaining letters with the letters in $s_1$ or $s_2$.
    In fact, there can be no canceling within $w_{i_2} \ldots w_{j_1}$, as $s_1 \cancelrel s_2$ was chosen to be a minimal nesting:
    Assume that $w_{i_2} \ldots w_{j_1}$ contains some letters $a,b$ with $a \cancelrel b$.
    Pick the unique blocks $u,v$ to which they belong, and note that we have $(u \cancelrel v) <_{w} (s_1 \cancelrel s_2)$, \ie $(u \cancelrel v) \ord{w} (s_1 \cancelrel s_2)$ and $(u,v) \neq (s_1,s_2)$, a contradiction to the minimality of $s_1 \cancelrel s_2$.
    
    Hence, the reductions needs to swap all letters in $w_{i_2} \ldots w_{j_1}$ with $s_1$ or $s_2$ and we have $s_1 \indeprel w_{i_2} \ldots w_{j_1} \indeprel s_2$.
    We construct a free reduction as follows:
    \begin{align*}
        &
        w_1 \ldots w_{i_1} s_1 w_{i_2} w_{i+1} \ldots w_{j-1} w_{j_1} s_2 w_{j_2} \ldots w_k
        \\
        \frtop{\text{\frswap}}^*\quad &
        w_1 \ldots w_{i_1} w_{i_2} w_{i+1} \ldots w_{j-1} w_{j_1} s_1 s_2 w_{j_2} \ldots w_k
        \\
        \frtop{\text{\frcancel}}\quad &
        w_1 \ldots w_{i_1} w_{i+1} \ldots w_{j-1} w_{j_2} \ldots w_k =: w'
        \ .
    \end{align*}
    The applications of Rule~\frswap\ are valid as $s_1 \indeprel w_{i_2} \ldots w_{j_1} \indeprel s_2$ holds.
    The application of Rule~\frcancel\ to $s_1,s_2$ is valid by Lemma~\ref{Lemma:CancelrelInverses}.
    
    Let us denote by $w'$ the result of these reduction steps.
    We consider the reduction $\pi'$ that is obtained by restricting $\pi$ to transitions that work on letters still present in $w'$.
    Indeed, $\pi'$ reduces $w'$ to $\varepsilon$.
    In particular, for each operation in $w'$, the operation it cancels with is the same in $\pi$ and $\pi'$.
    Consequently, the relation $R_{\pi'}$ is the restriction of $\cancelrel$ to the operation still occurring in $w'$, and the block decomposition of $w'$ induced by $\pi'$ is the block decomposition of $\pi$ minus the blocks $s_1, s_2$ that have been removed.
    
    We may apply induction to obtain that $w'$ admits a free reduction.
    We prepend the above reduction steps to this free reduction to obtain the desired reduction for $w$.
    
    We emphasize the fact that we have not used in the proof that the $w_i$ are contexts.
    This is important, as the context decompositions of $w$ and $w'$ can differ substantially.
    Potentially, we have that $w$ consists of four contexts, $w = w_1,s_1,w_2,s_2$, but after canceling $s_1$ with $s_2$, $w_1$ and $w_2$ merge to a single context, $w' = w_1.w_2$.
    As we have preserved $\cancelrel$ and its induced block decomposition, this does not hurt the validity of the proof.
\end{proof}
\subparagraph{A Bound on the Number of Blocks.}
It remains to prove the desired bound on the number of blocks.
To this end, we will exploit that each context $w_i$ is irreducible.
%
\begin{proposition}
\label{Prop:BlockDecompSize}
    Let $w$ be a computation with irreducible contexts and $\pi \colon w \tored^* \varepsilon$ a reduction.
    In the block decomposition of $w$ induced by $\pi$, $m_i \leq k-1$ holds for all $i$.
\end{proposition}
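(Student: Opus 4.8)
The plan is to bound the number of blocks in a single irreducible context $w_i$ by $k-1$, where $k$ is the total number of contexts. Fix the reduction $\pi$ and the induced block decomposition $w_i = w_{i,1},\ldots,w_{i,m_i}$. Each block $w_{i,\ell}$ is a maximal cluster, so by definition of $\cancelrel$ on infixes it is $\cancelrel$-related to some infix of another context $w_{j_\ell}$ with $j_\ell \neq i$ (it cannot be related to an infix of $w_i$ itself, since a context is dependent and thus cannot cancel operations within itself in the relevant way — this needs a short argument using that $w_i$ is dependent, so no two distinct symbols in $w_i$ are independent, and $\cancelrel$ within a context would require either commuting past something or a self-loop cancellation that irreducibility forbids). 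So there is a map $\ell \mapsto j_\ell$ from $\{1,\ldots,m_i\}$ to $\{1,\ldots,k\}\setminus\{i\}$.

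The heart of the argument is to show this map is \emph{injective}, which immediately gives $m_i \leq k-1$. Suppose two distinct blocks $w_{i,\ell}$ and $w_{i,\ell'}$ of $w_i$ (say $\ell < \ell'$, so $w_{i,\ell}$ appears before $w_{i,\ell'}$ in $w_i$) are both $\cancelrel$-related to infixes of the same context $w_j$. I would then argue that the corresponding partner infixes $u, u'$ in $w_j$ must be "interleaved" or "separated" in a way that lets us merge $w_{i,\ell}$ and $w_{i,\ell'}$ into a single cluster, contradicting maximality. Concretely: I expect to use that $w_i$ is irreducible together with the structure of $\cancelrel$ to show that if $w_{i,\ell} \cancelrel u$ and $w_{i,\ell'} \cancelrel u'$ with $u,u'$ both infixes of $w_j$, then either $w_{i,\ell}.(\text{intermediate}).w_{i,\ell'}$ is itself a cluster (related to a suitable infix of $w_j$ combining $u$ and $u'$), or one can commute the intermediate part of $w_i$ out of the way. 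Since $w_i$ is dependent, the symbols between $w_{i,\ell}$ and $w_{i,\ell'}$ are \emph{not} independent of those in the blocks, which — combined with irreducibility ruling out reductions inside $w_i$ — should force the intermediate infix to be empty, i.e.\ $w_{i,\ell}$ and $w_{i,\ell'}$ are adjacent; but then the inductive definition of $\cancelrel$ on infixes lets us conclude $w_{i,\ell}.w_{i,\ell'} \cancelrel$ an infix of $w_j$, so $w_{i,\ell}.w_{i,\ell'}$ is a single cluster, contradicting that each was a \emph{maximal} cluster (block).

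The main obstacle I anticipate is making the "merging" step fully rigorous: one must carefully track which operations of $w_j$ the various blocks pair with under $\cancelrel$, check that combining two blocks of $w_i$ still yields something $\cancelrel$-related to a \emph{contiguous} infix of $w_j$ (not just some scattered set of positions), and handle the possibility that the partner infixes in $w_j$ are themselves separated by other blocks of $w_j$. Here I would lean on Lemma~\ref{Lemma:CancelrelInverses} and the nesting order $\ord{w}$ to organize the case analysis, and on irreducibility of $w_i$ to exclude the degenerate configurations (e.g.\ an $\inc{o}.\dec{o}$ pair inside $w_i$, or a self-loop symbol appearing twice with opposite polarity). Once injectivity of $\ell \mapsto j_\ell$ is established, the bound $m_i \leq k-1$ is immediate, and since this holds for every $i$ the total number of blocks is at most $k(k-1) \leq k^2$, completing the proof and feeding directly into Theorem~\ref{Lemma:Crazy}.
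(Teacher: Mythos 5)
Your overall strategy is the same as the paper's: the paper splits the claim into Lemma~\ref{Lemma:CancelrelNotSameBlock} (no block of $w_i$ is $\cancelrel$-related to a block of $w_i$ itself) and Lemma~\ref{Lemma:CancelrelNotTwice} (at most one block of $w_i$ is $\cancelrel$-related to a block of a fixed $w_j$), which is exactly your map $\ell\mapsto j_\ell$ together with its injectivity, and your argument for why a block cannot pair within its own context matches the paper's. However, your injectivity step has a genuine gap. By the inductive definition of $\cancelrel$ on infixes, $t_1 s_1 \cancelrel s_2 t_2$ requires the partners to occur in \emph{reversed} order in $w_j$; so your ``merge the two blocks into one bigger cluster, contradicting maximality'' argument only applies in the nested configuration $w_j = w_{j_1}\, t_2\, w_{j_2}\, s_2\, w_{j_3}$. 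You do not treat the crossing configuration $w_j = w_{j_1}\, s_2\, w_{j_2}\, t_2\, w_{j_3}$, where no merge is possible and the contradiction is of a different nature: to cancel both pairs, $\pi$ would have to swap $s_1$ past $t_1$ (or $s_2$ past $t_1$), which requires two distinct independent symbols inside a single dependent context --- impossible. This case needs its own argument and cannot be absorbed into the maximality contradiction.

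Second, your justification that the intermediate infix of $w_i$ must be empty (``the symbols in between are not independent of those in the blocks, combined with irreducibility'') is not sufficient as stated. Dependence of $w_i$ does not prevent the intermediate infix $w_{i_2}$ from containing operations; it only prevents swapping them past the blocks. The paper's actual argument is: any operation $b$ in $w_{i_2}$ must cancel with some $c$ elsewhere in $w$, and a case analysis on the location of $c$ (prefix, suffix, the infix between $w_i$ and $w_j$, or $w_j$ itself --- the last excluded by choosing $s_1,t_1$ at minimal distance) shows that every possibility forces $\pi$ to swap some operation past $s_1$, $t_1$, $s_2$, or $t_2$, which again contradicts dependence of the contexts. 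The same analysis is then repeated for $w_{j_2}$, since both must be empty before maximality of blocks can be invoked. Your plan gestures at the right ingredients but omits this ``where does the cancellation partner of $b$ live'' analysis and the minimal-distance choice that makes it work; without them the emptiness claim does not follow.
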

We prove the proposition in the form of two lemmas.
\begin{restatable}{rlemma}{restateLemmaCancelrelNotSameBlock}
\label{Lemma:CancelrelNotSameBlock}
    The relation $\cancelrel$ never relates blocks from the same context.
\end{restatable}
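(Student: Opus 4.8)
The plan is to argue by contradiction: suppose some block $s_1$ of a context $w_i$ is $\cancelrel$-related to another block $s_2$ of the \emph{same} context $w_i$, and derive that $w_i$ is reducible, contradicting our standing assumption. First I would recall what $s_1 \cancelrel s_2$ gives us by Lemma~\ref{Lemma:CancelrelInverses}: $\grapheq{s_1.s_2} = \graphneutral$, so $s_2$ is a (representative of a) right inverse of $s_1$ in $\graphmonoid$. Write the context as $w_i = u.s_1.v.s_2.t$ for suitable infixes $u, v, t$ (recall we adopt the convention that the $\cancelrel$-related occurrence $s_1$ precedes $s_2$ in $w$, hence in $w_i$). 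The goal is to locate, \emph{inside} $s_1$ and $s_2$, a matched pair $a = \inc{o}$, $b = \dec{o}$ (or the loop variant $a = \dec{o}$, $b = \inc{o}$ with $o \indeprel o$) that witnesses reducibility of $w_i$; to do so I must show the intervening material commutes with $o$.

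The key steps, in order, are: (i) Since $s_1 \cancelrel s_2$ is built up inductively from the letter-level relation $\cancelrel$ on positions, pick a letter $a$ of $s_1$ and the letter $b$ of $s_2$ with $a \cancelrel b$ at the position level; by definition of the position-level $\cancelrel$, the reduction $\pi$ at some point brings $a$ and $b$ adjacent and cancels them via Rule~\rcancel\ or Rule~\rcancelrev, so $\{a,b\}$ is exactly a pair $\{\inc{o}, \dec{o}\}$ (or a loop pair $\dec{o}, \inc{o}$ with $o\indeprel o$). (ii) Everything strictly between $a$ and $b$ in $w$ — in particular everything in $v$, together with the suffix of $s_1$ after $a$ and the prefix of $s_2$ before $b$ — must be swapped past $a$ (or past $b$) during $\pi$, since $\pi$ only swaps or cancels and, as in the proof of Proposition~\ref{Prop:BlockDecompFreelyReducible}, there can be no cancellation among those letters that would otherwise be confined strictly between $a$ and $b$ without themselves being brought out — I would mimic the minimal-nesting argument, or simply observe directly that any letter $c$ lying strictly between $a$ and $b$ in the final adjacency must have been moved there by swaps with $a$ or $b$, so $c \indeprel o$. (iii) Hence $o$ commutes with every symbol occurring strictly between $a$ and $b$ within the context $w_i$, and $\{a,b\}$ is the right kind of pair, so writing $w_i = w'.a.w_I.b.w''$ (where $w_I$ collects the symbols between them) exhibits $w_i$ as reducible — contradicting irreducibility of $w_i$.

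The main obstacle I anticipate is step (ii): making precise the claim that the material between the chosen $a$ and $b$ must entirely commute with $o$. The subtlety is that, a priori, $\pi$ could cancel some letter $c$ between $a$ and $b$ against a partner \emph{outside} this window, which would mean $c$ never needs to swap past $a$ or $b$, and then we could not conclude $c \indeprel o$. To rule this out I would observe that $c$'s partner, being $\cancelrel$-related to $c$, lies in a block $\cancelrel$-related to $c$'s block; if that partner is also inside $w_i$ then by an innermost/minimal-nesting choice (choose $a\cancelrel b$ so that $s_1, s_2$ is a $\ord{w}$-minimal such pair nested within $w_i$, or choose $a, b$ adjacent at cancellation time with no earlier cancellation strictly inside) we get a contradiction with minimality; if the partner is outside $w_i$, then moving $c$ out of the window still forces $c$ to swap past $a$ or past $b$ at some point (it cannot leave the window between $a$ and $b$ in $w$ without crossing one of them), again giving $c \indeprel o$. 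Pinning down this dichotomy cleanly — probably by choosing the $\cancelrel$-related pair $a\cancelrel b$ to be position-level $\ord{w}$-minimal among pairs whose endpoints both lie in $w_i$ — is the technical heart of the argument; once that is set up, the reducibility witness for $w_i$ falls out immediately.
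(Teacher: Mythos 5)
Your proposal is correct and follows essentially the same route as the paper: reduce $s_1 \cancelrel s_2$ to a letter-level pair $a \cancelrel b$ inside the context, observe that $\pi$ can only clear the material between $a$ and $b$ by swapping it across one of them (which certifies independence from $o$) or by cancellations (ruled out inside the window by your minimality choice, and for partners outside the window again forced through $a$ or $b$), and contradict irreducibility of the context. The paper phrases the endgame dually --- it uses dependence of the context to conclude that a letter on a symbol $o_2 \neq o$ can never be swept past $a$ or $b$, so $a$ and $b$ could not have cancelled, leaving only the all-$\incdec{o}$/empty case for the irreducibility contradiction --- and is in fact terser than you are about intervening letters cancelling among themselves, a case your minimal-nesting choice handles explicitly.
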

\noindent
%
%
%
The following lemma allows us to bound the number of blocks in a context by the total number $k$ of contexts.
\begin{restatable}{rlemma}{restateLemmaCancelrelNotTwice}
\label{Lemma:CancelrelNotTwice}
For any two contexts $w_i$ and $w_j$, there is at most one block in $w_i$ that is $\cancelrel$-related to a block in $w_j$.
\end{restatable}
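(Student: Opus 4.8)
The strategy is a proof by contradiction exploiting irreducibility of the contexts. Suppose two distinct blocks $u_1, u_2$ of $w_i$ are $\cancelrel$-related to blocks $v_1, v_2$ of $w_j$ respectively, say $u_1 \cancelrel v_1$ and $u_2 \cancelrel v_2$. Without loss of generality $u_1$ precedes $u_2$ inside $w_i$. We would first determine the relative order of $v_1$ and $v_2$ inside $w_j$. I expect both orders to lead to a contradiction, but via somewhat different arguments, so the proof naturally splits into two cases.

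The key structural fact to extract is this: by definition of $\cancelrel$ (lifted to infixes) and by Lemma~\ref{Lemma:CancelrelInverses}, the operations of each $u_\ell$ get matched with the operations of $v_\ell$ in $\pi$, and $\grapheq{u_\ell . v_\ell} = \graphneutral$. Moreover, as in the proof of Proposition~\ref{Prop:BlockDecompFreelyReducible}, the reduction $\pi$ must be able to slide $u_\ell$ and $v_\ell$ next to each other, which forces the infix of $w$ strictly between them (after removing letters that cancel internally) to be independent of $u_\ell$. The plan is to combine two such independence statements. Concretely, in the case where $v_1$ precedes $v_2$, the infix of $w_i$ lying between $u_1$ and $u_2$ sits between $u_1$ and $v_1$ in $w$, so it must be independent of $u_1$; symmetrically that same infix (plus possibly $u_2$) sits between $v_2$ and $u_2$ — wait, one must be careful with the nesting pattern. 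The cleaner route is: look at where $u_1 v_1$ and $u_2 v_2$ nest in $w$. If the two $\cancelrel$-pairs are nested (one pair lies strictly between the letters of the other), then the outer pair's sliding argument shows the inner block is independent of the outer one; if they are ``crossing'' (interleaved), then $\pi$ cannot match them up consistently, again using minimality-of-nesting–style reasoning as in Proposition~\ref{Prop:BlockDecompFreelyReducible}. In either configuration I would derive that some nonempty portion of $w_i$ between $u_1$ and $u_2$ commutes with $u_1$ (or with $u_2$), and that the matching pieces of $v_1$ (or $v_2$) supply an $\inc{o}/\dec{o}$ pair — because $\grapheq{u_1 . v_1} = \graphneutral$ means $v_1$ is a representative of a right inverse of $u_1$, so $u_1$ and $v_1$ together contain, for each symbol $o$, matching increments and decrements. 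Extracting a single witnessing pair $a = \incdec{o}$ in $w_i$ before the commuting infix and a matching $b$ after it, with $o$ commuting with everything in between, would contradict the assumption that $w_i$ is irreducible.

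The main obstacle I anticipate is bookkeeping the geometry: there are several possible orderings of the four blocks $u_1, u_2, v_1, v_2$ along $w$, and one must argue that in every case either irreducibility of $w_i$ is violated or $\pi$'s matching is inconsistent. In particular, the subtle point is that ``$u_1$ and $u_2$ are distinct \emph{blocks}'' (maximal clusters) of $w_i$ means there is at least one letter of $w_i$ strictly between them that does \emph{not} cancel with anything in $v_1 \cup v_2$ in a way that keeps it clustered with $u_1$ or $u_2$ — I would need Lemma~\ref{Lemma:CancelrelNotSameBlock} to rule out that the separating letters cancel \emph{within} $w_i$, so that whatever lies between $u_1$ and $u_2$ must be slid past one of the $u_\ell$'s, yielding the commutativity needed to contradict irreducibility. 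Once that separating letter is pinned down, the irreducibility contradiction is routine; the care is all in the case analysis on the block order.
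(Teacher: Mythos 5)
Your overall shape---argue by contradiction, split on the relative order of $v_1$ and $v_2$ in $w_j$, and exploit the swap constraints that $\pi$ must satisfy---matches the paper, but the contradiction you are steering towards is the wrong one, and this is a genuine gap. The paper's proof of this lemma does not use irreducibility at all. In the nested case ($u_1 \cancelrel v_1$ encloses $u_2\cancelrel v_2$), the configuration is perfectly compatible with irreducibility: take $u_1=\inc{a}.\inc{a}$ and $u_2=\inc{a}$ adjacent in $w_i$, cancelling against $\dec{a}$ and $\dec{a}.\dec{a}$ adjacent in $w_j$. No reducible pattern inside $w_i$ exists here; what fails is the \emph{maximality} of the blocks. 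The paper shows that both separating infixes (the one in $w_i$ between the two blocks \emph{and} the one in $w_j$) must be empty, whence $u_1 u_2 \cancelrel v_2 v_1$ forms a single larger cluster, contradicting blockhood. Your sketch only hunts for a separating letter on the $w_i$ side, so it does not handle the case where the two blocks are adjacent in $w_i$ but separated in $w_j$---there your witnessing letter simply does not exist. Moreover, your mechanism for producing an $\inc{o}/\dec{o}$ pair \emph{within} $w_i$ is missing: the matched pair supplied by $\grapheq{u_1.v_1}=\graphneutral$ straddles $w_i$ and $w_j$, and a pair split across two contexts says nothing about the irreducibility of $w_i$.

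The fact that actually drives both cases is more elementary than what you propose: rule \rswap\ requires the two letters to lie on \emph{distinct independent} symbols, and a context is by definition dependent, so $\pi$ can never swap two letters of the same context past one another. In the crossing case this immediately kills the swap of $u_2$ past $u_1$ (or past $v_1$, whose symbols agree with those of $u_1$ since they cancel letterwise). In the nested case it shows that a letter $b$ of $w_i$ strictly between the two blocks can neither be moved out past $u_1$ or $u_2$ nor have its partner $c$ moved in, so $b$ cannot cancel at all---contradicting that $\pi$ reduces $w$ to $\varepsilon$. To make that location analysis of $c$ go through you also need the normalisations the paper sets up (take the first offending context and a pair of blocks at minimal distance, and use Lemma~\ref{Lemma:CancelrelNotSameBlock} to exclude partners inside $w_i$), which your plan omits. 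As written, the argument would not close.
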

\begin{proof}
    Towards a contradiction, assume that some context contains two blocks that are $\cancelrel$-related to a block from the same context.
    Let us consider the minimal $i$ such that $w_i$ contains such blocks.
    Let $w_j$ be the context to which the two blocks are related.
    By the choice of $i$, $w_i$ occurs in $w$ before $w_j$ does.
    
    We pick $s_1,t_1$ as a pair of blocks in $w_i$ canceling with blocks from $w_j$ with minimal distance, \ie $w_i = w_{i_1} s_1 w_{i_2} t_1 w_{i_3}$ where $w_{i_2}$ contains no block that is canceled by some block in $w_j$.
    Let $s_2, t_2$ be the blocks in $w_j$ such that $s_1 \cancelrel s_2$, $t_1 \cancelrel t_2$.
    We have to distinguish two cases, depending on the order of occurrence of $s_2$ and $t_2$ in $w_j$.
    In the first case, we have $w_j = w_{j_1} t_2 w_{j_2} s_2 w_{j_3}$ and thus
    \[
        w = w_1 \ldots w_{i-1}
        \ \underbrace{ w_{i_1} s_1 w_{i_2} t_1 w_{i_3} }_{\text{context } w_i} \ 
        w_{i+1} \ldots w_{j-1} 
        \ \underbrace{  w_{j_1} t_2 w_{j_2} s_2 w_{j_3} }_{\text{context } w_j} \ 
        w_{j+1} \ldots w_k
        \ .
    \]
    Our goal is to show that $w_{i_2}$ and $w_{j_2}$ have to be empty.
    We then obtain $s_1 t_1 \cancelrel t_2 s_2$, a contradiction to the definition of blocks as maximal $\cancelrel$-related infixes in each context.
    
    We start by assuming that $w_{i_2}$ contains some operation $b$.
    As $\pi$ reduces $w$ to $\varepsilon$, $w$ contains some operation $c$ that $b$ cancels with.
    We first note that $c$ cannot be contained in $w_{j}$, as we have chosen $s_1, t_1$ such that $w_{i_2}$ contains no block that cancels with a block of $w_j$.
    Assume that $c$ is contained in the prefix $w_1 \ldots w_{i-1} w_{i_1}$.  Reduction $\pi$ either
    needs to swap $b$ or $c$ with $s_1$, or it needs to swap $s_2$ with $b$ (to cancel $s_1$).  In
    any case, by definition of $\rto$, this means $s_1$ contains an operation that commutes with $b$
    and is distinct from $b$. However, this is impossible, as $s_1$ and $b$ are contained in the
    same context $w_i$, and contexts do not contain distinct independent symbols.
    For the same reason, $c$ cannot be contained in the suffix $w_{j_3} w_{j+1} \ldots w_k$.
    
    If $c$ is contained in the infix $w_{i+1} \ldots w_{j-1}$, $\pi$ needs to swap $b$ with $t_1$, or $c$ with $t_1$, or $t_2$ with $c$.
    In any case, this means $t_1$ contains an operation that commutes with $b$ and is distinct from $b$. However, this is impossible, as $t_1$ and $b$ are contained in the same context $w_i$, and contexts do not contain distinct independent symbols.
    
    Consequently $w_{i_2}$ needs to be empty.
    Let us assume that $w_{j_2}$ contains an operation $b$, and let $c$ denote the operation it cancels with.
    As for $w_{i_2}$, we can show that $c$ can neither be contained in the prefix $w_1 \ldots w_{i-1} w_{i_1}$, nor in the suffix $w_{j_3} w_{j+1} \ldots w_k$, nor in the infix $w_{i+1} \ldots w_{j-1}$.
    We conclude that $w_{j_2}$ is also empty and obtain a contradiction to the maximality of the blocks as explained above.
            
    It remains to consider the second case, \ie $w_j = w_{j_1} s_2 w_{j_2} t_2 w_{j_3}$ and
    \[
        w = w_1 \ldots w_{i-1}
        \ \underbrace{ w_{i_1} s_1 w_{i_2} t_1 w_{i_3} }_{\text{context } w_i} \ 
        w_{i+1} \ldots w_{j-1}
        \ \underbrace{  w_{j_1} s_2 w_{j_2} t_2 w_{j_3} }_{\text{context } w_j}\ 
        w_{j+1} \ldots w_k
        \ .
    \]
    Reduction $\pi$ either needs to swap $s_1$ with $t_1$ or equivalently $s_2$ with $t_1$.  Again
    by definition of $\rto$, this means there is an operation $a$ in $s_1$ and an operation $b$ in
    $t_1$ such that $a \indeprel b$ and $a,b$ have distinct symbols.
    Since $s_1, t_1$ and $s_2,t_2$ belong to the same context, this is impossible.
\end{proof}
Lemma~\ref{Lemma:CancelrelNotSameBlock} and Lemma~\ref{Lemma:CancelrelNotTwice} together prove Proposition~\ref{Prop:BlockDecompSize}, finishing the proof of Theorem~\ref{Lemma:Crazy}.


\section{Decision Procedure}
\label{Section:Algorithm}

Given a valence system $A$ with states $\qinit$ and $\qfin$, and a bound $k$, we give an algorithm that checks whether there is a run from $(\qinit,\varepsilon)$ to $(\qfin,w)$ such that $\grapheq{w} = \graphneutral$ and $\cs{w} \leq k$. 

\subparagraph*{Implementing Irreducibility.}
The theory we have developed above applies to irreducible contexts.
To determine the irreducible versions of contexts in $A$, we define a saturation operation on valence systems. 
The algebraic idea behind the saturation is the following.
\begin{lemma}\label{Lemma:SaturateDependent}
Let $w$ be a dependent computation. 
Then $w$ can be turned into an irreducible computation by applying the following rules: $\inc{o}.\dec{o}\mapsto\varepsilon$ and, provided $o\indeprel o$, $\dec{o}.\inc{o}\mapsto\varepsilon$.
\end{lemma}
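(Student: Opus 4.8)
The plan is to argue by induction on the length of $w$, exploiting crucially that $w$ is \emph{dependent}: in a dependent computation, any two operations either act on the same symbol or act on dependent (non-independent) symbols, so the rule~\rswap{} from the definition of reduction is never applicable to $w$ or to any prefix/infix of it. This means the only way to shorten $w$ towards an irreducible form is by cancellations $\inc{o}.\dec{o}\mapsto\varepsilon$ and, when $o\indeprel o$, $\dec{o}.\inc{o}\mapsto\varepsilon$, and—again by dependence—such a cancellable pair, if it exists at all, must already be \emph{adjacent}: the definition of irreducibility allows an intervening infix $w_I$ of symbols commuting with $o$, but in a dependent computation the only symbol commuting with $o$ is $o$ itself (if $o\indeprel o$), so $w_I$ can contribute nothing that changes whether the pair cancels after moving letters around. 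More precisely, if $w$ is \emph{not} irreducible, then $w = w'.a.w_I.b.w''$ with $a=\inc{o}$, $b=\dec{o}$ (or $a=\dec{o}$, $b=\inc{o}$ with $o\indeprel o$) and every symbol of $w_I$ commuting with $o$; since $w$ is dependent, every symbol of $w_I$ equals $o$, and in fact $o\indeprel o$ (unless $w_I$ is empty), so $w_I = (\incdec{o})^{j}$ for some $j\ge 0$.

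First I would dispose of the case $w_I = \varepsilon$: here $a.b$ is literally a factor of $w$ matching the left-hand side of one of the two rewriting rules, so one application of the rule strictly shortens $w$, and by induction the result can be driven to an irreducible computation. Second, for $w_I = o^{\pm j}$ with $j\ge 1$ and $o\indeprel o$: I would show that $w'.a.w_I.b.w''$ still contains an \emph{adjacent} cancellable pair. Indeed, look at the maximal block of consecutive $o$-operations containing the factor $a.w_I.b$; within a block of $\incdec{o}$'s, reading left to right, any two consecutive operations of opposite polarity form a cancellable adjacent pair (the pair $\inc{o}.\dec{o}$ via the first rule, the pair $\dec{o}.\inc{o}$ via the second, using $o\indeprel o$), and such a pair must exist because the block is not constant-polarity (it contains $a$ and $b$ of opposite polarities, hence a polarity change somewhere inside). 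Applying the relevant rule at that adjacent pair shortens $w$ by two letters, and—this is the point—the result is again a dependent computation (removing letters cannot introduce independent symbols), so induction applies.

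The main obstacle, and the only thing that needs care, is the bookkeeping in the second case: one must be sure that after cancelling an adjacent pair one has genuinely made progress toward \emph{the} irreducibility condition and not merely removed one particular witness while leaving another. This is handled cleanly by the induction: irreducibility is a property of the word, and each rule application strictly decreases $|w|$, so the process terminates, and it can only terminate at an irreducible computation (if the endpoint were reducible, the argument above would produce another applicable rule). I would also remark that dependence is preserved throughout, so the statement "$w$ can be turned into an irreducible computation by applying these rules" is exactly what the induction delivers; one does not even need to track which reduct is obtained, only that some irreducible reduct exists. A final sentence would note that, combined with Lemma~\ref{Lemma:Gedoens}, this gives the algebraic justification for the saturation construction on valence systems used in the sequel.
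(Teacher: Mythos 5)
Your argument is correct and takes essentially the same route as the paper's (brief, informal) justification following the lemma: dependence forces any reducibility witness $w = w'.a.w_I.b.w''$ to have $w_I$ consist only of operations $\incdec{o}$ (with $o \indeprel o$ unless $w_I = \varepsilon$), so an adjacent cancellable pair of opposite polarities always exists, and exhaustive application of the two rules strictly decreases length while preserving dependence, terminating in an irreducible word. Your write-up is simply a more detailed version of the paper's two-sentence sketch.
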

To see the lemma, note that in a dependent computation, reducible operations $\inc{o}$ and $\dec{o}$ cannot be separated by an operation on a different symbol.  
Hence, $\inc{o}$ and $\dec{o}$ are placed side by side (potentially after further reductions).
If $o\indeprel o$ does not hold, the first rule is sufficient for the reduction.
If $o\indeprel o$ does holds, we may find $\dec{o}.\inc{o}$ and need both rules.

The saturation operation implements these two rules. 
Since Lemma~\ref{Lemma:SaturateDependent} assumes a dependent computation, we consider a dependent valence system $B = (P, \todec)$. 
The \emph{saturation} is the valence system $\saturation{B}=(P,\todecsat)$ with the same set of control states.
The transitions are defined by requiring $\todec\ \subseteq\ \todecsat$ and exhaustively applying the following rules:
\begin{enumerate}[(1)]
    \item If $p_1 \todecsatw{\inc{o}} p \todecsatw{}^* p' \todecsatw{\dec{o}} p_2$, add an $\varepsilon$-transition $p_1 \todecsatw{} p_2$.
    \item If $p_1 \todecsatw{\dec{o}} p \todecsatw{}^* p' \todecsatw{\inc{o}} p_2$ and $o \indeprel o$, add an $\varepsilon$-transition $p_1 \todecsatw{} p_2$.    
\end{enumerate}

\noindent
Here, $p \todecsatw{}^* p'$ denotes that $p'$ is reachable from $p$ by a sequence of $\varepsilon$-transitions.
\begin{remark}
    In the worst case, we add $\card{P}^2$ many transitions.
\end{remark}
\begin{restatable}{rlemma}{restateLemmaSaturation}
\label{Lemma:Saturation}
    There is a computation $(q_1,\varepsilon) \to (q_2, u)$ in $B$
    if and only if
    there is a computation $(q_1,\varepsilon) \to (q_2,v)$ with $v$ irreducible and $u\cong v$ in $\saturation{B}$.
\end{restatable}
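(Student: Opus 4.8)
The plan is to prove Lemma~\ref{Lemma:Saturation} by relating computations in $B$ to computations in $\saturation{B}$ in both directions, using Lemma~\ref{Lemma:SaturateDependent} to control the behavior of the added $\varepsilon$-transitions.

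\medskip

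\textbf{Soundness ($\Leftarrow$, and the forward direction of the $\varepsilon$-transitions).} First I would check that the added $\varepsilon$-transitions do not change the reachability relation modulo $\cong$, \ie whenever $\saturation{B}$ has a run $(q_1,\varepsilon)\to(q_2,v)$ then $B$ has a run $(q_1,\varepsilon)\to(q_2,u)$ with $u\cong v$. For this I proceed by induction on the number of saturation rounds, \ie on how ``deep'' the $\varepsilon$-transitions used in the run are nested. Consider a single new transition $p_1\todecsatw{}p_2$ added by rule~(1) because $p_1\todecsatw{\inc{o}}p\todecsatw{}^*p'\todecsatw{\dec{o}}p_2$; by induction the $\varepsilon$-segment $p\todecsatw{}^*p'$ corresponds to a run $p\todecw{}^*p'$ in $B$ along a computation $\cong\varepsilon$, hence $B$ has a run $p_1\to p_2$ along a computation $\cong\inc{o}.\dec{o}\cong\varepsilon$. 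Rule~(2) is analogous, using $o\indeprel o$ so that $\dec{o}.\inc{o}\cong\varepsilon$. Splicing these into the given run of $\saturation{B}$ yields a run of $B$ whose emitted computation is $\cong$-equivalent. Note also that this argument shows the converse inclusion is trivial: every run of $B$ is already a run of $\saturation{B}$ (since $\todec\subseteq\todecsat$), emitting the \emph{same} word; and right-invertibility is preserved along $\cong$-classes, so enabledness is unaffected.

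\medskip

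\textbf{Completeness ($\Rightarrow$): extracting an irreducible witness.} Given a run $(q_1,\varepsilon)\to(q_2,u)$ in $B$, I want a run $(q_1,\varepsilon)\to(q_2,v)$ in $\saturation{B}$ with $v$ irreducible and $u\cong v$. The idea is to repeatedly locate, inside the run, a redex that Lemma~\ref{Lemma:SaturateDependent} guarantees to exist whenever $u$ is reducible, and to contract it using a saturation transition. Concretely, since $B$ is dependent, the emitted word $u$ is dependent; if $u$ is not already irreducible, Lemma~\ref{Lemma:SaturateDependent} (or rather its proof: in a dependent computation the operations $\inc{o}$ and $\dec{o}$ of a reducible pair can be brought side by side using only swaps of commuting \emph{equal-symbol-free} segments, which in the dependent case means there is simply no intervening operation on a different symbol) tells us that $u$ contains a factor of the form $\inc{o}.w_I.\dec{o}$ with every symbol in $w_I$ commuting with $o$ — but in a dependent computation $w_I$ can only use the symbol $o$ itself, and by iterating the reductions we may assume $w_I=\varepsilon$, \ie there is a consecutive pair $\inc{o}.\dec{o}$ in the run, or, if $o\indeprel o$, a pair $\dec{o}.\inc{o}$. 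The transitions producing this pair are $p_1\todecw{\inc{o}}p_2\todecw{\dec{o}}p_3$ (say), so by the saturation rule $p_1\todecsatw{}p_3$ is a transition of $\saturation{B}$; replacing the two transitions by this single $\varepsilon$-transition gives a run emitting a strictly shorter word $u'$ with $u'\cong u$. Here I need the subtlety that after contraction the new word is still dependent (it is a subword of the old symbol set) and still emitted by a valid run — right-invertibility of all prefixes is maintained because $u'\cong u$ and we only shortened, and $\cong$ preserves right-invertibility. Iterating (termination by strictly decreasing length) produces an irreducible $v$ with $u\cong v$ and a run of $\saturation{B}$ emitting it.

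\medskip

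\textbf{Main obstacle.} The delicate point is the completeness direction: making rigorous the step ``a reducible dependent computation contains a \emph{consecutive} cancelable pair, and the transitions producing it can be collapsed by a \emph{single} saturation transition.'' In general the two operations of a cancelable pair need not be adjacent in the run even if they are adjacent in $u$ after commuting — but the saturation rules are stated with an intervening $\todecsatw{}^*$, \ie only $\varepsilon$-transitions may sit between the $\inc{o}$- and $\dec{o}$-transitions. So I must argue that between the producing transitions of an adjacent pair $\inc{o}.\dec{o}$ there are only $\varepsilon$-transitions (immediate, since any operation-labeled transition would emit a letter between them, contradicting adjacency in $u$), and that the swap-based normalization underlying Lemma~\ref{Lemma:SaturateDependent} does \emph{not} require reordering the \emph{run} — only the \emph{word} — so that I can always pick the innermost cancelable pair and find it already realized as $\inc{o}.\dec{o}$ in the word $u$ up to the $\cong$-moves $\inc{o}.\dec{o}\mapsto\varepsilon$ and $\dec{o}.\inc{o}\mapsto\varepsilon$ applied so far. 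Handling this bookkeeping — and the symmetric case with $o\indeprel o$, where the order of appearance can be $\dec{o}$ before $\inc{o}$ — is the technical heart; everything else is routine.
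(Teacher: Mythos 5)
Your proposal is correct and follows essentially the same route the paper intends: soundness of the added $\varepsilon$-transitions by induction on saturation rounds, and completeness by using dependence (via Lemma~\ref{Lemma:SaturateDependent}) to locate an \emph{adjacent} cancelable pair in the emitted word, whose producing transitions are separated only by $\varepsilon$-transitions and hence collapsed by a single saturation transition. The bookkeeping you flag (adjacency in the word forcing only $\varepsilon$-transitions in the run, the self-loop case for $\dec{o}.\inc{o}$, and preservation of right-invertibility under $\cong$) is exactly the content the paper's construction is designed to handle, and your treatment of it is sound.
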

\noindent
%
The valence system $A=(Q, \to)$ of interest may not be dependent. 
We will determine dependent versions of it (one for each context) by restricting to a dependent set of operations $\ops'\subseteq \ops$.
The \emph{restriction} is defined by $A[\ops']=(Q, \to\cap\ (Q\times (\ops'\cup\set{\varepsilon})\times Q))$. 
\subparagraph*{Representing Block Decompositions.}
Theorem~\ref{Lemma:Crazy} considers a computation decomposed into irreducible contexts $w_1$ to $w_k$.  
It shows that each context $w_i$ can be further decomposed into at most $k$ blocks such that the overall sequence of blocks
$w_{1, 1},\ldots, w_{k, m_k}$ freely reduces to $\graphneutral$. 
Our goal is to represent the block decompositions of all candidate computations in a finite way. 
To this end, we analyze the result more closely.

The decomposition into contexts means there are dependent sets $\ops_1,\ldots, \ops_k\subseteq\ops$ such that each context $w_i$ only uses operations from the set $\ops_i$. 
The decomposition into blocks means there are $n=k^2$ computations $v_1$ to $v_n$ and states $q_1$ to $q_{n-1}$ such that $v_i$ leads from $q_{i-1}$ to $q_i$ with $q_0=\qinit$ and $q_n=\qfin$. 
The last thing to note is that a block itself does not have to be right-invertible.  
This means we should represent block decompositions by (non-deterministic finite) automata rather than valence systems. 

We define, for each pair of states $q_{i}, q_{f}\in Q$, each dependent set of operations $\ops_{\mathit{con}}\subseteq \ops$, and each subset $\ops_{\mathit{bl}}\subseteq \ops_{\mathit{con}}$ the automaton 
\begin{align*}
\nfaof{q_{i}, q_{f}, \ops_{\mathit{con}}, \ops_{\mathit{bl}}} = \tonfa{q_i, q_f}{\saturation{A[\ops_{\mathit{con}}]}[\ops_{\mathit{bl}}]}\ .
\end{align*}
Function $\tonfafun$ understands the given valence system $\saturation{A[\ops_{\mathit{con}}]}[\ops_{\mathit{bl}}]$ as an automaton, with the first parameter as the initial and the second as the final state. 
The set $\ops_{\mathit{con}}$ will be the operations used in the context of interest. 
As these operations are dependent, $\saturation{A[\ops_{\mathit{con}}]}$ will include the irreducible versions of all computations in $A[\ops_{\mathit{con}}]$, Lemma~\ref{Lemma:Saturation}.
The second restriction to $\ops_{\mathit{bl}}$ identifies the operations of one block in the context.

With this construction at hand, we define our representation of block decompositions. 
\begin{definition}
A \emph{test} for the given $\mathsf{(BCSREACH)}$-instance is a sequence $\nfa_1,\ldots,\nfa_n$ of $n=k^2$ automata $\nfa_i=\nfaof{q_{i-1}, q_i, \ops_j, \ops_{j, i}}$ with $j=\lceil \frac{i}{k}\rceil$, $q_0=\qinit$, and $q_n=\qfin$.
\end{definition}
%
The following lemma links Theorem~\ref{Lemma:Crazy} and the notion of tests. 
With Theorem~\ref{Lemma:Crazy}, we have to check whether there is a computation $w$ from $\qinit$ to $\qfin$ with $\cs{w}\leq k$ whose block decomposition admits a free reduction. 
With the analysis above, such a computation exists iff there is a test $N_1$ to $N_n$ whose automata accept the blocks in the decomposition.
\begin{restatable}{rlemma}{restateLemmaReductionAutomataA}
\label{Lemma:ReductionAutomataA}
    We have $(\qinit,\varepsilon) \to (\qfin,w)$ with $\cs{w}\leq k$ and $\grapheq{w}=1$ in $A$ iff there is a test $\nfa_1,\ldots,\nfa_n$ and computations $v_1\in\langof{\nfa_1}$ to $v_n\in\langof{\nfa_n}$ that freely reduce to $\graphneutral$.
\end{restatable}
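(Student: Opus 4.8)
The plan is to prove both directions by carefully translating between runs of the valence system $A$ and the structure exposed by Theorem~\ref{Lemma:Crazy}, using the saturation machinery (Lemma~\ref{Lemma:Saturation}) to bridge the gap between arbitrary computations in a context and their irreducible versions.

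For the forward direction, suppose $(\qinit,\varepsilon) \to (\qfin,w)$ is a run with $\cs{w}\leq k$ and $\grapheq{w}=\graphneutral$. Let $w = w_1 \ldots w_{k'}$ be the context decomposition, where $k' \leq k+1$; by the definition of contexts, there are dependent sets $\ops_1,\ldots,\ops_{k'}\subseteq\ops$ with $w_j$ a computation over $\ops_j$, and we pad with trivial contexts so that we have exactly $k$ of them if needed. Now, for each context, the portion of the run producing $w_j$ is a run of the restricted valence system $A[\ops_j]$; by Lemma~\ref{Lemma:Saturation} there is a run of $\saturation{A[\ops_j]}$ over an irreducible computation $w_j'$ with $w_j'\cong w_j$, visiting the same control states at the start and end of the context. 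Since $\grapheq{w}=\graphneutral$ and $w_j\cong w_j'$ for each $j$, we also have $\grapheq{w_1'\ldots w_k'}=\graphneutral$, and $w_j'$ is still a computation over $\ops_j$ (saturation adds only $\varepsilon$-transitions). We apply Theorem~\ref{Lemma:Crazy} to $w' = w_1'\ldots w_k'$ with its decomposition into irreducible contexts: each $w_j'$ decomposes as $w_{j,1}\ldots w_{j,m_j}$ with $m_j\leq k-1$ and the full sequence $w_{1,1},\ldots,w_{k,m_k}$ freely reducible. Each block $w_{j,i}$ uses only operations from some subset $\ops_{j,i}\subseteq\ops_j$, and the sub-run of $\saturation{A[\ops_j]}$ producing $w_{j,i}$ is a run of $\saturation{A[\ops_j]}[\ops_{j,i}]$ between the appropriate control states $q_{\cdot-1}$ and $q_{\cdot}$. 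Reading this run as an accepting path of the automaton $\tonfa{q_{\cdot-1}}{q_\cdot}{\saturation{A[\ops_j]}[\ops_{j,i}]} = \nfaof{q_{\cdot-1},q_\cdot,\ops_j,\ops_{j,i}}$ yields exactly a test $\nfa_1,\ldots,\nfa_n$ (with $n=k^2$, again padding blocks as needed so each context contributes exactly $k$ automata, using empty blocks for the slack) together with $v_i = w_{\cdot,\cdot}\in\langof{\nfa_i}$ that freely reduce to $\graphneutral$.

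For the backward direction, suppose we are given a test $\nfa_1,\ldots,\nfa_n$ and computations $v_i\in\langof{\nfa_i}$ that freely reduce to $\graphneutral$. Each $v_i$ is accepted by $\nfaof{q_{i-1},q_i,\ops_j,\ops_{j,i}}$, which means there is a run of $\saturation{A[\ops_j]}$ from $q_{i-1}$ to $q_i$ labeled $v_i$ using only operations from $\ops_{j,i}\subseteq\ops_j$. Concatenating all runs within one context $j$ gives a run of $\saturation{A[\ops_j]}$ from $q_{(j-1)k}$ to $q_{jk}$ over $v_{(j-1)k+1}\ldots v_{jk} =: u_j$, and by the (backward direction of) Lemma~\ref{Lemma:Saturation} there is a run of $A[\ops_j]$ — hence of $A$ — from the same start to the same end state over some $u_j'$ with $u_j'\cong u_j$. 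Concatenating across all contexts yields a run of $A$ from $\qinit$ to $\qfin$ over $w := u_1'\ldots u_k'$. Since the concatenation $v_1\ldots v_n$ is freely reducible (freely reducible implies $\grapheq{\cdot}=\graphneutral$ by applying the rules as monoid identities), and $\cong$ is a congruence with $u_j'\cong u_j$, we get $\grapheq{w}=\grapheq{v_1\ldots v_n}=\graphneutral$; in particular every prefix of $w$ is right-invertible, so the run is legitimate. Finally $\cs{w}\leq k-1 < k$ since $w$ is a concatenation of the $k$ context-computations $u_j'$, each over the dependent set $\ops_j$, so the context decomposition of $w$ refines into at most $k$ contexts.

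The main obstacle I expect is bookkeeping rather than a deep idea: making the bijection between tests and block decompositions precise requires handling the "padding" carefully — Theorem~\ref{Lemma:Crazy} gives $m_j\leq k-1$ blocks per context and at most $k$ contexts, so we must insert empty words (accepted by automata with $q_{i-1}=q_i$ via an $\varepsilon$-path, which the saturation always provides trivially) to reach exactly $n=k^2$ components with the stated indexing $j=\lceil i/k\rceil$, and we must check that inserting empty words preserves free reducibility in both directions. The other delicate point is that in the forward direction the context decomposition of $w$ may have fewer than $k$ contexts and Theorem~\ref{Lemma:Crazy} is stated for the decomposition into irreducible contexts of $w'$; one has to argue that applying saturation context-by-context, with $w_j'\cong w_j$ over $\ops_j$, indeed produces a $w'$ whose irreducible-context decomposition is a refinement of $w_1'\ldots w_k'$, so that Theorem~\ref{Lemma:Crazy}'s blocks can be regrouped according to our contexts $\ops_1,\ldots,\ops_k$. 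Both issues are routine but need to be spelled out.
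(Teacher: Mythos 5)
Your proof is correct and follows the same route the paper intends (the paper only sketches this argument in the text surrounding the lemma): saturate each context to obtain an irreducible version via Lemma~\ref{Lemma:Saturation}, apply Theorem~\ref{Lemma:Crazy} to extract a freely reducible block decomposition of size at most $k^2$, and read each block as a word accepted by the corresponding restricted, saturated automaton, with the converse obtained by expanding automaton paths back into runs of $A$ and using that free reducibility implies $\grapheq{w}=\graphneutral$ (hence right-invertibility of all prefixes). The two delicate points you flag at the end --- the padding to exactly $k^2$ components and the fact that the irreducible-context decomposition of the modified word need not coincide with the saturated versions of the original contexts --- are real but benign, and your handling of them matches the paper's level of detail.
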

\subparagraph*{Determining Free Reducibility.}
Given a test $\nfa_1,\ldots, \nfa_n$, we have to check whether the automata accept computations that freely reduce to $\graphneutral$.  
To get rid of the reference to single computations, we now define a notion of free reduction directly on sequences of automata.
This means we have to lift the following operations from computations to automata.
On computations $u$ and $v$, a free reduction may check commutativity, $u \indeprel v$, and whether the computations are inverses, $\grapheq{u}\cdot \grapheq{v}=\graphneutral$. 
Consider $\nfa_u$ and $\nfa_{v}$ from $\nfa_1,\ldots,\nfa_n$. 

Rather than checking whether $\nfa_u$ and $\nfa_{v}$ accept computations that commute, the free reduction on automata will check whether the alphabets are independent, $\opsof{\nfa_{u}} \indeprel \opsof{\nfa_{v}}$. 
To see that this yields a complete procedure, note that Lemma~\ref{Lemma:ReductionAutomataA} existentially quantifies over all tests, and hence all sets of operations to construct $N_u$ and $N_v$. 
If there are computations $u$ and $v$ that commute in the free reduction, we can construct the automata $\nfa_u$ and $\nfa_v$ by restricting to the letters in these words. 
This will still guarantee $u\in\langof{\nfa_u}$ and $v\in\langof{\nfa_v}$.

To check whether $\nfa_u$ and $\nfa_v$ accept computations that multiply up to $\graphneutral$, we rely on the syntactic inverse. 
Consider a computation $u$ that contains negative operations $\dec{o}$ only for symbols with $o\indeprel o$. 
In this case, the \emph{syntactic inverse} $\syninv{u}$ is defined by reversing the letters and inverting the polarity of operations. 
The operation is not defined otherwise. 
The following lemma is immediate.
\begin{restatable}{rlemma}{restateLemmaSyntacticInverse}
\label{Lemma:SyntacticInverse}
    If $u, v\in\ops^*$  are irreducible, dependent with $\grapheq{u}\cdot\grapheq{v}=\graphneutral$, then $v=\syninv{u}$. 
\end{restatable}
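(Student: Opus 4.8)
The plan is to unpack the definition of the syntactic inverse and then argue that both computations must coincide with it, using irreducibility and dependence crucially. First I would observe that since $u$ is dependent and irreducible, and $\grapheq{u}\cdot\grapheq{v}=\graphneutral$ implies in particular $\grapheq{u}$ is right-invertible, the negative operations $\dec{o}$ occurring in $u$ can only be for symbols with $o\indeprel o$: otherwise $\dec{o}$ would need to be cancelled by a later $\inc{o}$ (Rule~\rcancelrev\ requires $o\indeprel o$), or by nothing, contradicting right-invertibility; a $\dec{o}$ without a self-loop cannot be part of any right-invertible computation when $u$ is a single dependent context. Hence $\syninv{u}$ is defined.

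Next I would show $\grapheq{v} = \grapheq{\syninv{u}}$, which is immediate from the monoid identity: $\syninv{u}$ is by construction a two-sided inverse of $u$ in $\graphmonoid$ (this is a standard computation using $\inc{o}.\dec{o}\cong\varepsilon$ and, for self-looped $o$, $\dec{o}.\inc{o}\cong\varepsilon$ together with commutation), and in any monoid, if $x$ has a two-sided inverse then every right inverse of $x$ equals it — from $\grapheq{u}\cdot\grapheq{v}=\graphneutral$ and $\grapheq{\syninv{u}}\cdot\grapheq{u}=\graphneutral$ we get $\grapheq{v} = \grapheq{\syninv{u}}\cdot\grapheq{u}\cdot\grapheq{v} = \grapheq{\syninv{u}}$. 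So it remains to upgrade the equality $\grapheq{v}=\grapheq{\syninv{u}}$ in the monoid to the equality $v = \syninv{u}$ of actual words.

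For that I would use that $v$ is irreducible and dependent, and so is $\syninv{u}$ (reversing and flipping polarities preserves the set of underlying symbols, hence dependence; and irreducibility of $u$ transfers to $\syninv{u}$ since the reduction patterns $\inc{o}.w_I.\dec{o}$ and $\dec{o}.w_I.\inc{o}$ are symmetric under this operation). Two irreducible, dependent computations that are equal in $\graphmonoid$ must be literally equal: within a dependent set of symbols the commutation congruence is trivial (no two distinct symbols are independent), so the only identifications available are the cancellation rules, and irreducible words are exactly the normal forms with respect to those — hence each $\cong$-class of dependent computations contains a unique irreducible representative. Applying this to $v$ and $\syninv{u}$ gives $v=\syninv{u}$.

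The main obstacle I expect is the last step — carefully justifying that irreducible dependent computations are unique representatives of their $\cong$-class. One must be slightly careful because $\cong$ combines commutation and cancellation, and confluence/uniqueness of normal forms in trace-monoid-with-inverses requires a Newman-style argument; however, in the \emph{dependent} case the commutation part collapses entirely (only self-commutations, which are identities), so this reduces to the classical fact that free-monoid-with-cancellation on letters having at most self-loops has unique normal forms, which the excerpt already invokes implicitly when calling this "the standard definition of irreducibility in the trace monoid." Given that, and the fact that the lemma is flagged as "immediate," I would keep this step short, citing irreducibility as the normal-form property.
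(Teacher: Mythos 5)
Your step (a) (well-definedness of $\syninv{u}$) and your step (c) (uniqueness of irreducible representatives of a $\cong$-class over a dependent alphabet) are both correct in substance, but the middle step of your argument contains a genuine error: $\syninv{u}$ is \emph{not} in general a two-sided inverse of $\grapheq{u}$ in $\graphmonoid$. It is always a right inverse when defined, since the telescoping in $u.\syninv{u}$ only ever produces pairs $\inc{o}.\dec{o}$ or, for self-looped $o$, $\dec{o}.\inc{o}$. But the telescoping in $\syninv{u}.u$ produces a pair $\dec{o}.\inc{o}$ for every \emph{positive} letter $\inc{o}$ of $u$, and $\grapheq{\dec{o}.\inc{o}} \neq \graphneutral$ unless $o \indeprel o$. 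Concretely, for $u = \inc{a}$ over a single vertex $a$ without a self-loop (a one-symbol pushdown), $\syninv{u} = \dec{a}$ and $\grapheq{\dec{a}.\inc{a}} \neq \graphneutral$ in the bicyclic monoid. So whenever $u$ pushes a non-self-looped symbol --- the typical pushdown situation --- $\grapheq{u}$ has a right inverse but no left inverse, the premise of your "in any monoid, every right inverse of an element with a two-sided inverse equals it" argument fails, and the chain $\grapheq{v} = \grapheq{\syninv{u}}\cdot\grapheq{u}\cdot\grapheq{v} = \grapheq{\syninv{u}}$ is not available. You would instead need uniqueness of \emph{right} inverses in $\graphmonoid$, which is not a generic monoid fact and would itself require a combinatorial argument of roughly the same difficulty as the lemma.

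The paper sidesteps this entirely by an induction on $|u|$ that peels the last letter $a$ off $u$ and shows directly --- using dependence and irreducibility of $v$, plus a case split on whether the symbol of $a$ has a self-loop --- that the \emph{first} letter of $v$ must be $\syninv{a}$, after which $\grapheq{u'.v'} = \graphneutral$ and the induction hypothesis applies. That direct combinatorial route is where the actual content of the lemma lives; your reduction to "monoid equality plus unique normal forms" would be an elegant factorization if the algebra cooperated, but here it does not. If you want to keep your structure, you must replace step (b) by a combinatorial argument that the letter of $v$ cancelling the last letter of $u$ is forced to be $v$'s first letter --- which is essentially the paper's induction.
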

\noindent
%
The idea is to admit $v$ as the inverse of $u$ if $v=\syninv{u}$ holds. 
The equality will of course entail that $v$ is the inverse of $u$, for any pair of computations. 
Lemma~\ref{Lemma:SyntacticInverse} moreover shows that for irreducible, dependent computations the check is complete. 
Since $\nfa_u$ and $\nfa_v$ are dependent and saturated, it will be complete (Lemma~\ref{Lemma:Saturation}) to use the syntactic inverse also on the level of automata.

The definition swaps initial and final state, turns around the transitions, removes the negative operations on non-commutative symbols, and inverts the polarity of the others. 
Formally, the \emph{syntactic inverse} yields $\syninv{N_u}=(Q, q_{u, \fin}, \mathit{remswap}(\to^{-1}_u), q_{u, \init})$. 
The reverse relation contains  
$(q_2, \incdec{o}, q_1)\in\ \to_u^{-1}$ iff $(q_1, \incdec{o}q_2)\in\ \to_u$. 
Function $\mathit{remswap}$ removes transitions with operations $\dec{o}$ for which $o \indeprel o$ does not hold and inverts the remaining polarities.
%
The construction guarantees that $\syninv{\langof{N_u}}=\langof{\syninv{N_u}}$. 
With this, the check of whether $N_u$ and $N_v$ contain computations $u$ and $v$ with $v=\syninv{u}$ amounts to checking whether $N_v$ and $\syninv{N_u}$ have a computation in common.
\begin{lemma}
There are $u\in\langof{N_u}, v\in\langof{N_v}$ with $v=\syninv{u}$ iff $\langof{N_v}\cap \langof{\syninv{N_u}}\neq \emptyset$.
\end{lemma}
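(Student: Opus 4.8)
The plan is to reduce the equivalence to the language identity $\langof{\syninv{N_u}} = \syninv{\langof{N_u}}$ announced just before the lemma, where we write $\syninv{\langof{N_u}}$ for the set $\Set{\syninv{u}}{u\in\langof{N_u},\ \syninv{u}\text{ defined}}$. Once this identity is available, both directions are pure bookkeeping, because "$v=\syninv{u}$ for some $u\in\langof{N_u}$" is by definition the same as "$v\in\syninv{\langof{N_u}}$".

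For the forward direction I would take $u\in\langof{N_u}$, $v\in\langof{N_v}$ with $v=\syninv{u}$; then $\syninv{u}$ is in particular defined, so $v\in\syninv{\langof{N_u}}=\langof{\syninv{N_u}}$, and combined with $v\in\langof{N_v}$ this witnesses $\langof{N_v}\cap\langof{\syninv{N_u}}\neq\emptyset$. For the converse I would pick any $v\in\langof{N_v}\cap\langof{\syninv{N_u}}$; since $v\in\langof{\syninv{N_u}}=\syninv{\langof{N_u}}$ there is some $u\in\langof{N_u}$ with $v=\syninv{u}$, and $v\in\langof{N_v}$ holds by assumption, so $u,v$ is the desired pair.

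The actual content of the proof is therefore the identity $\langof{\syninv{N_u}} = \syninv{\langof{N_u}}$, which I would establish directly from the definition $\syninv{N_u}=(Q, q_{u,\fin}, \mathit{remswap}(\to_u^{-1}), q_{u,\init})$. The key observation is that a run of $N_u$ from $q_{u,\init}$ to $q_{u,\fin}$ reading $u=a_1\ldots a_m$ corresponds bijectively, by reversing each transition, to a run of $(Q,q_{u,\fin},\to_u^{-1},q_{u,\init})$ reading $a_m\ldots a_1$; applying $\mathit{remswap}$ then deletes precisely the runs that use some transition labeled $\dec{o}$ with $o\indeprel o$ failing — i.e.\ exactly the runs whose word $u$ lies outside the domain of $\syninv{\cdot}$ — and relabels each surviving operation by inverting its polarity, so that the surviving run reads exactly $\syninv{u}$. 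I expect the only delicate point to be getting this run correspondence exactly right, in particular matching the set of transitions removed by $\mathit{remswap}$ with the domain condition of $\syninv{\cdot}$ (and checking that $\varepsilon$-transitions are carried along harmlessly); after that, the accepted language of $\syninv{N_u}$ is $\Set{\syninv{u}}{u\in\langof{N_u},\ \syninv{u}\text{ defined}}=\syninv{\langof{N_u}}$, and the lemma follows.
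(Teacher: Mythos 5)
Your proposal is correct and matches the paper's approach: the paper asserts the identity $\syninv{\langof{N_u}}=\langof{\syninv{N_u}}$ as a guarantee of the construction and treats the lemma as an immediate consequence, exactly the reduction you perform. Your additional run-reversal argument for the identity (including the observation that $\mathit{remswap}$ kills exactly the runs over words outside the domain of $\syninv{\cdot}$) is sound and only makes explicit what the paper leaves implicit.
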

%
The analogue of the free reduction defined on automata is the following definition.
\begin{definition}
A \emph{free automata reduction} on a test $N_1$ to $N_n$ is a sequence of operations 
    \begin{enumerate}
       \item[\textbf{\frcancelaut}]
            $N_i, N_{j} \frto \varepsilon$, if 
            $\langof{N_{j}}\cap \langof{\syninv{N_{i}}}\neq \emptyset$.
        \item[\textbf{\frswapaut}]
            $N_i, N_j \frto N_j, N_i$, if $\opsof{N_i}\indeprel\opsof{N_j}$.
    \end{enumerate}
\end{definition}

\noindent
Since we quantify over all tests, free automata reductions are complete as follows.
\begin{lemma}\label{Lemma:ReductionAutomataB}
There is a test $\nfa_1, \ldots, \nfa_n$ and computations $u_1\in\langof{\nfa_1}$ to $u_n\in\langof{\nfa_n}$ that freely reduce to $\graphneutral$ iff 
there is a test $\nfa_1, \ldots, \nfa_n$ that admits a free automata reduction to $\varepsilon$. 
\end{lemma}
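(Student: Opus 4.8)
The plan is to prove the two implications by a step-for-step translation between a free reduction of a sequence of computations and a free automata reduction of a test. The cancellation rules \frcancel\ and \frcancelaut\ will be matched using Lemma~\ref{Lemma:SyntacticInverse} together with the fact that a saturated automaton accepts, along with every computation, an irreducible $\cong$-equivalent one (Lemma~\ref{Lemma:Saturation}); the commutation rules \frswap\ and \frswapaut\ will be matched by exploiting that a test may be chosen with the block alphabets $\ops_{j,i}$ shrunk to fit the computations at hand, which is what makes the alphabet-independence premise of \frswapaut\ provable.

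\emph{From computations to automata.} Suppose we are given a test $\nfa_1,\ldots,\nfa_n$, say $\nfa_i=\nfaof{q_{i-1},q_i,\ops_j,\ops_{j,i}}$, and computations $u_i\in\langof{\nfa_i}$ that freely reduce to $\graphneutral$. First I would replace the test by $\nfa_i'=\nfaof{q_{i-1},q_i,\ops_j,\ops_{j,i}'}$, where $\ops_{j,i}'$ is exactly the set of operations occurring in $u_i$; since $u_i$ uses only operations from $\ops_{j,i}'\subseteq\ops_{j,i}$, its accepting run survives this tighter restriction, so $u_i\in\langof{\nfa_i'}$ and $\opsof{\nfa_i'}\subseteq\ops_{j,i}'$. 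This is again a test. Now I would run, step for step, the given free reduction of $u_1,\ldots,u_n$ as a free automata reduction of $\nfa_1',\ldots,\nfa_n'$, applying the same rule to the same (tracked) positions. A swap step $u_i,u_j\frto u_j,u_i$ is matched by \frswapaut, whose premise $\opsof{\nfa_i'}\indeprel\opsof{\nfa_j'}$ holds because the operations of $u_i$ and of $u_j$ are pairwise independent. A cancellation step $u_i,u_j\frto\varepsilon$ is matched by \frcancelaut: since $\nfa_i',\nfa_j'$ are saturated, their languages contain irreducible $u_i'\cong u_i$ and $u_j'\cong u_j$ (Lemma~\ref{Lemma:Saturation}); these are dependent, being over the dependent context alphabets, and $\grapheq{u_i'.u_j'}=\grapheq{u_i.u_j}=\graphneutral$, so Lemma~\ref{Lemma:SyntacticInverse} yields $u_j'=\syninv{u_i'}$, whence $u_j'\in\langof{\nfa_j'}\cap\syninv{\langof{\nfa_i'}}=\langof{\nfa_j'}\cap\langof{\syninv{\nfa_i'}}$ and the premise of \frcancelaut\ is met. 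The mirroring preserves adjacency of the two entries involved in each step, so the result is a valid free automata reduction of $\nfa_1',\ldots,\nfa_n'$ to $\varepsilon$.

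\emph{From automata to computations.} Suppose a test $\nfa_1,\ldots,\nfa_n$ admits a free automata reduction to $\varepsilon$. I would traverse this reduction once to assign a computation to each automaton: every $\nfa_i$ disappears in exactly one \frcancelaut-step, applied to some pair $\nfa_i,\nfa_j$ with $\langof{\nfa_j}\cap\langof{\syninv{\nfa_i}}\neq\emptyset$; picking a witness, which has the form $\syninv{u}$ with $u\in\langof{\nfa_i}$, I set $u_i:=u$ and $u_j:=\syninv{u}$. Having fixed all $u_i\in\langof{\nfa_i}$ this way, the same sequence of rule applications, now read on $u_1,\ldots,u_n$, is a free reduction to $\graphneutral$: a \frswapaut-step with $\opsof{\nfa_i}\indeprel\opsof{\nfa_j}$ becomes a valid \frswap-step because the operations of $u_i$ and $u_j$ lie in $\opsof{\nfa_i}$ and $\opsof{\nfa_j}$, respectively; a \frcancelaut-step becomes a valid \frcancel-step because $u_j=\syninv{u_i}$ is a right inverse of $u_i$, hence $\grapheq{u_i.u_j}=\graphneutral$. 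As both sequences undergo the same swaps, $u_i$ and $u_j$ are indeed consecutive when their \frcancel-step is applied.

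The step I expect to be the main obstacle is the cancellation direction of the first implication, and within it the use of Lemma~\ref{Lemma:Saturation} at the level of automata: passing from $\grapheq{u_i.u_j}=\graphneutral$ to the premise of \frcancelaut\ genuinely requires irreducibility, since $\syninv{\cdot}$ misbehaves on reducible blocks (already $\inc{o}.\dec{o}$ with $o\not\indeprel o$ breaks the correspondence), so one must verify carefully that the language of $\nfaof{q_{i-1},q_i,\ops_j,\ops_{j,i}'}$ is closed under replacing a computation by an irreducible $\cong$-equivalent one -- i.e.\ transport the statement of Lemma~\ref{Lemma:Saturation} through the restriction operator $[\ops_{j,i}']$ and the reading of a saturated valence system as a plain automaton. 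Everything else is a routine step-for-step simulation, made possible by the existential quantification over tests, which is what lets us tune the alphabets $\ops_{j,i}$.
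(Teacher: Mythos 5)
Your proposal is correct and follows essentially the same route the paper intends: the paper gives no standalone proof of this lemma, but the surrounding discussion justifies it exactly as you do — shrinking the block alphabets $\ops_{j,i}$ (possible because tests are existentially quantified) to make the \frswapaut\ check complete, and combining Lemma~\ref{Lemma:Saturation} with Lemma~\ref{Lemma:SyntacticInverse} to make the \frcancelaut\ check complete. The obstacle you flag — transporting Lemma~\ref{Lemma:Saturation} through the restriction $[\ops_{j,i}']$ and the NFA reading — is real but routine, since the irreducible $\cong$-equivalent computation uses only a subset of the operations of the original and $\varepsilon$-transitions survive the restriction.
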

Together, Lemma~\ref{Lemma:ReductionAutomataA} and Lemma~\ref{Lemma:ReductionAutomataB} yield a decision procedure for $\mathsf{(BCSREACH)}$. 
We guess a suitable test and for this test a suitable free automata reduction. 
The restrictions, the saturation, the automata conversion, and the independence and disjointness tests require time polynomial in $\card{A}+k$.
Moreover, the free automata reduction contains polynomially-many (in $k$) steps.
Together, this yields membership in $\NPTIME$ and proves Theorem~\ref{Theorem:NP}.

\section{Complexity for Fixed Graphs}
\label{complexity}

We have seen that reachability under bounded context switching can
always be decided in $\NPTIME$, even if the graph describing the
storage mechanism is part of the input.  In this section, we study how
the complexity of the problem depends on the storage mechanism,
i.e. the graph. We fix the graph $G$ and consider the problem
$\BCSREACH{G}$.  We will see that for some graphs, the complexity is
lower than $\NPTIME$: We exhibit a class of graphs $G$ for which
$\BCSREACH{G}$ is solvable in polynomial time and we describe those
graphs for which the problem is $\NLOGSPACE$-complete.
Of course, for any graph $G$, the problem $\BCSREACH{G}$ is
$\NLOGSPACE$-hard, because reachability in directed graphs is. In some
cases, we also have an $\NLOGSPACE$ upper bound.

A loop-free graph is
a \emph{clique} if any two distinct vertices are adjacent. By $G^-$ we
denote the graph obtained from $G$ by removing all self-loops.  If
$G^-$ is a clique, then valence systems over $G$ are systems with
access to a fixed number of independent counters, some of which are blind and some
of which are partially blind.
\begin{theorem}\label{complexity:nl}
  If $G^-$ is a clique, then $\BCSREACH{G}$ is $\NLOGSPACE$-complete.
  Otherwise, $\BCSREACH{G}$ is $\PTIME$-hard.
\end{theorem}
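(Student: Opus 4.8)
\subparagraph*{Proof idea.}
The $\NLOGSPACE$-hardness in the first claim is the general fact recorded above: directed-graph reachability reduces to $\BCSREACH{G}$ (take $k=0$ and a valence system with only $\varepsilon$-transitions). It therefore remains to establish (i) the $\NLOGSPACE$ upper bound when $G^-$ is a clique, and (ii) $\PTIME$-hardness when $G^-$ is not a clique.

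\emph{Part (i).} When $G^-$ is a clique, any two distinct symbols are independent, so every dependent set of symbols --- hence every context of a computation --- involves only a single symbol. Thus a valence system $A$ over $\graphmonoid$ behaves like a machine with $\card{V}$ pairwise independent counters, each blind (if its vertex carries a self-loop) or partially blind (otherwise), and $\cs{w}\le k$ means that $w$ decomposes into at most $k+1$ phases, each operating on one counter. I would reduce $\BCSREACH{G}$ to reachability in a directed graph whose nodes record the control state of $A$, the index of the current phase, the currently active counter, and the current values of all counters (partially blind ones restricted to $\ge 0$); the search starts at $\qinit$ with all counter values $0$ and must reach $\qfin$ with all counter values $0$, and an edge either performs a transition of $A$ on the active counter (or an $\varepsilon$-transition, leaving the phase unchanged) or opens a new phase on a counter different from the previous one. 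Such a graph can be traversed in $\NLOGSPACE$ as long as it has polynomially many nodes, i.e.\ as long as counter values can be kept polynomially bounded; here it is essential that $\card{V}$ is a constant, since $G$ is fixed.

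The technical core is the following bounding claim: if a witnessing run exists, then one exists in which every counter value stays bounded by a polynomial in $\card{A}+k$. Within a single phase on a counter $o$, the counter evolves exactly as in the one-counter automaton obtained by restricting $A$ to the operations on $o$; the set of net effects realizable between two given control states --- and, for partially blind $o$, the set of pairs (incoming value, outgoing value) realizable while staying non-negative --- is a semilinear set with a description of size polynomial in $\card{A}$, so in particular, if nonempty, it already contains elements of magnitude $O(\card{A})$. The global requirements --- each counter returns to $0$, partially blind counters never become negative --- couple the at most $k+1$ phases through a small system of constraints over these semilinear sets. From any solution one extracts, by standard pumping of the underlying phase-paths, one in which each phase changes its counter by $O(\card{A})$ and the thresholds that must be met at phase boundaries are $O(\card{A})$, whence no counter value ever exceeds $O((k+1)\card{A})$. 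I expect this bounding argument to be the main obstacle: reconciling the non-negativity constraint of partially blind counters with the pumping needed to keep net effects small, and handling several phases that share a counter simultaneously, requires care.

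\emph{Part (ii).} Suppose $G^-$ is not a clique and pick distinct, non-adjacent vertices $o_1, o_2$. Then $\set{o_1,o_2}$ is a dependent set, so every computation over the operations $\set{\inc{o_1},\dec{o_1},\inc{o_2},\dec{o_2}}$ forms a single context and exhibits no context switch. Hence, for valence systems using only these four operations, $\BCSREACH{G}$ agrees --- for every $k\ge 0$ --- with the ordinary reachability problem: reachability to target memory value $\graphneutral$ in a valence system over the graph monoid of the two-vertex, edge-free subgraph induced on $\set{o_1,o_2}$. Depending on the self-loops, this monoid is $B*B$, $\mathbb{Z}*B$, or $\mathbb{Z}*\mathbb{Z}$ (with $B$ the bicyclic monoid and $\mathbb{Z}$ the integers under addition), and in each case valence automata over it recognise exactly the context-free languages: over $B*B$ they are precisely pushdown automata with a two-letter stack alphabet, and the remaining two cases are a standard fact in the theory of valence automata over graph monoids~\cite{Zetzsche2016c}. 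Therefore emptiness for such valence systems is at least as hard as non-emptiness of context-free grammars, a $\PTIME$-complete problem, and the log-space reduction from the latter --- a routine simulation of a pushdown automaton by a valence system using only the operations on $o_1$ and $o_2$, with $\qinit$ and $\qfin$ its initial and final control states --- establishes $\PTIME$-hardness of $\BCSREACH{G}$.
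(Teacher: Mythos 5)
Your proposal is correct in substance, and the two hardness claims ($\NLOGSPACE$-hardness via plain graph reachability, $\PTIME$-hardness via two non-adjacent vertices yielding a pushdown --- with the self-loop cases handled by the known fact that valence automata over such a two-vertex graph accept the context-free languages regardless of loops) coincide with what the paper does. Where you genuinely diverge is the $\NLOGSPACE$ upper bound. The paper does not attack all counters at once: it observes that when $G^-$ is a clique every vertex of $G$ is universal, and repeatedly applies Proposition~\ref{complexity:reduction:addcounter} to peel off one vertex at a time, each step being a logspace reduction that (via the reformulation as $\BCSI{G\setminus v}{H}$ from Proposition~\ref{complexity:direct:equivalence} and Lemma~\ref{complexity:builtincounter}) converts the peeled one-counter component into a finite automaton using only the off-the-shelf folklore bound for a \emph{single} OCA; since $G$ is fixed, composing constantly many logspace reductions lands in plain directed-graph reachability. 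Your route instead builds one global product over the control state, the phase structure, and all counter values simultaneously, which requires the multi-phase, multi-counter bounding lemma you flag as the main obstacle. That lemma does hold --- once the control states at phase boundaries are fixed, each counter can be treated independently as a single OCA whose states are pairs (control state, phase index), so the standard quadratic bound applies with $(k+1)\card{Q}$ states --- but you would have to prove this carefully, including the non-negativity constraint across frozen intervals for partially blind counters. The paper's inductive peeling buys exactly this: it never needs more than the single-counter, single-run folklore result, at the cost of routing the argument through the language-intersection machinery it has already set up for Theorems~\ref{complexity:p} and~\ref{complexity:cfour:completeness}. Your version is more self-contained and makes the "bounded multicounter machine" intuition explicit, but as written the bounding claim is a sketch rather than a proof.
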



\begin{figure}[b]
  {\centering
  \subcaptionbox{The graph $\Pfour$.}[4cm][c]{\picpfour{1}}}
  {\centering
  \subcaptionbox{The graph $\Cfour$.}[3cm][c]{\piccfour{1}}}
\caption{The graphs $\Pfour$ and $\Cfour$.}\label{figure:cfourpfour}
\end{figure}

In some cases, $\BCSREACH$ is $\PTIME$-complete. A loop-free graph is
a \emph{transitive forest} if it is obtained from the empty graph
using \emph{disjoint union} and \emph{adding a universal vertex}.
A universal vertex is a vertex that is adjacent to all other vertices.
Adding one means that we take a graph $G=(V,I)$ and add a new vertex
$v\notin V$ and make it adjacent to every vertex in $G$. Hence, we
obtain $(V\cup\{v\}, I\cup\{\{u,v\}\mid u\in V\})$.
%
\begin{theorem}\label{complexity:p}
If $G^-$ is a transitive forest, then $\BCSREACH{G}$ is in $\PTIME$.
\end{theorem}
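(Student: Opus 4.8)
The plan is to exploit that a loop-free graph is a transitive forest precisely when it can be built from the empty graph by disjoint unions and additions of universal vertices (equivalently, when it is $\{\Pfour,\Cfour\}$-free, cf.\ Figure~\ref{figure:cfourpfour}). I would prove Theorem~\ref{complexity:p} by induction on such a construction of $G^-$, reducing $\BCSREACH{G}$ either to $\BCSREACH{}$ for strictly smaller transitive-forest graphs or to a small auxiliary check. The base cases are immediate: with no storage $\BCSREACH{}$ is NFA reachability, and over a single vertex (with or without a self-loop) the whole computation is a single context and $\graphmonoid$ is the bicyclic monoid or $\mathbb{Z}$, so $\BCSREACH{}$ asks for reachability of the identity in a one-counter system, which is in $\PTIME$.

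\textbf{Universal-vertex step.} Suppose $G^-$ arises from $G'$ by adding a vertex $v$ adjacent to every vertex of $G'$. Since $v$ commutes with every other symbol, $\graphmonoid \cong \graphmonoid[G'] \times C$ with $C=\mathbb{Z}$ if $v$ carries a self-loop and $C$ the bicyclic monoid otherwise; a run reaches the target iff its projection to the $G'$-operations reaches $\graphneutral$ in $\graphmonoid[G']$ and the $v$-operations return the $C$-component to its identity. Moreover $v$ is independent of, and distinct from, every other symbol, so in any computation $w$ each maximal block of $v$-operations forms a context of its own. Restricting to irreducible contexts as permitted by Theorem~\ref{Lemma:Crazy}, such a block has the shape $(\dec{v})^{b}.(\inc{v})^{a}$ (or $(\inc{v})^{a}$ resp.\ $(\dec{v})^{a}$ in the self-loop case), hence performs at most one reversal of the $C$-counter. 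Consequently a run with at most $k$ context switches touches $v$ in $O(k)$ islands, so the $C$-counter makes $O(k)$ reversals overall, and since $k$ is given in unary, $\BCSREACH{G}$ reduces to $\BCSREACH{G'}$ run in parallel with an $O(k)$-reversal-bounded counter that must return to $0$, with an unconstrained interleaving since $v$ commutes with everything. I would keep this in $\PTIME$ via a product construction in which the reversal-bounded counter contributes only a polynomial blow-up (guess the $O(k)$ island contributions and verify a system of linear and nonnegativity constraints), appealing to the inductive hypothesis for the $G'$-component.

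\textbf{Disjoint-union step.} Suppose $G^- = G_1^- \mathbin{\dot\cup} G_2^-$, so that $\graphmonoid \cong \graphmonoid[G_1] * \graphmonoid[G_2]$. I expect this to be the main obstacle: symbols from different components are still \emph{dependent}, so a single context may interleave both components and the bound on context switches does not, by itself, bound the nesting depth of the free product. The leverage has to come from irreducibility together with the block decomposition of Theorem~\ref{Lemma:Crazy} and from the special shape of independence in a transitive forest: the sets of operations labelling the contexts are antichains of a forest, so the families of mutually independent (hence $\frswapaut$-swappable) block alphabets are laminar, which limits how cancellation matchings can cross. Building on this, I would replace the nondeterministic search of the $\NPTIME$ procedure (Lemmas~\ref{Lemma:ReductionAutomataA} and~\ref{Lemma:ReductionAutomataB}) by a polynomial dynamic program over intervals of the (at most $k^2$ many) blocks that simultaneously evaluates free automata reducibility in CYK fashion and absorbs the remaining choices of a test — the states $q_0,\dots,q_{k^2}$ and the constant-size alphabets $\ops_j,\ops_{j,i}$ — into its table entries rather than guessing them. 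Making this last case precise — showing that bounded context switching genuinely constrains the interaction of the two free factors and that the resulting combinatorics on tests and free automata reductions is evaluable in polynomial time — is the step I expect to require the most work and possibly a new idea beyond this sketch.
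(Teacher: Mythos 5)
Your overall strategy --- structural induction on the construction of $G^-$ from the empty graph by adding universal vertices and taking disjoint unions --- is exactly the paper's (Propositions~\ref{complexity:reduction:addcounter} and~\ref{complexity:union}), and your universal-vertex step is essentially sound: the paper also isolates the universal vertex as an independent one-counter component, though it bounds the counter \emph{values} polynomially via the folklore fact about one-counter automata accepting short words (so the counter can be replaced outright by a logspace-constructible finite automaton), rather than arguing reversal-boundedness and solving linear constraints; your variant would need more care about the nonnegativity of the counter throughout an arbitrary interleaving, but is repairable.

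The genuine gap is the disjoint-union step, and you have correctly located it yourself. Your sketch --- a CYK-style dynamic program over intervals of the $k^2$ blocks, plus an unproven laminarity claim about the independence structure --- does not work as stated: free (automata) reductions are not interval-local, since Rule~\frswap\ can transport a block across arbitrarily many others before it cancels, so table entries indexed by intervals of the block sequence do not compose; moreover the block machinery of Section~\ref{Section:Decomposition} is tied to the $\NPTIME$ guess of a test and gives no handle on determinizing that guess. The idea you are missing is that the disjoint-union case needs no block decomposition at all. Since $\graphmonoid[G_0\dotcup G_1]$ is the free product $\graphmonoid[G_0]*\graphmonoid[G_1]$, any nonempty $w$ with $\grapheq{w}=\graphneutral$, decomposed into maximal alternating factors $u_1\cdots u_n$ with $u_j\in\ops_0^+\cup\ops_1^+$, must contain some factor $u_\ell\cong\varepsilon$ (otherwise, by induction on the defining relations of $\cong$, every word congruent to $w$ retains the same alternating factorization up to congruence of the factors, contradicting $w\cong\varepsilon$). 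This licenses a saturation: add $p\tow{\varepsilon}q$ whenever $(q,w)$ with $\grapheq{w}=\graphneutral$, $w\in\ops_i^*$, is reachable from $(p,\varepsilon)$ --- an instance of the component problem --- and iterate at most $|Q|^2$ times. To make the recursive calls well-formed one first passes to the promise version $\BCSRP{G}$ (Lemma~\ref{complexity:promise}), baking the context-switch counter into the states so that $k$-boundedness is preserved by the added $\varepsilon$-edges and need not be re-verified when composing paths. Without this free-product cancellation lemma and the promise reformulation, your induction does not close.
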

In the area of graph monoids, transitive forests are an important
subclass. For many decision problems, they characterize those
graphs for which the problem becomes
decidable~\cite{Zetzsche2017a,LohreySteinberg2008} or
tractable~\cite{LohreyZetzsche2017b}. Intuitively, the storage
mechanisms represented by graphs $G$ where $G^-$ is a transitive
forest are those obtained by \emph{building stacks} and \emph{adding counters}, see~\cite{Zetzsche2017a,Zetzsche2016c}.

If $G=(V,I)$ is a graph, then $H$ is an \emph{induced subgraph of $G$}
if $H$ is isomorphic to a graph $(V',I')$, where $V'\subseteq V$ and
$I'=\{e\in I\mid e\subseteq V'\}$. See Fig.~\ref{figure:cfourpfour}
for the graphs $\Cfour$ and $\Pfour$.
\begin{restatable}{rtheorem}{restateTheoremComplexityCfourCompleteness}
\label{complexity:cfour:completeness}
  If $\Cfour$ is an induced subgraph of $G^-$, then
  $\BCSREACH{G}$ is $\NPTIME$-complete.
\end{restatable}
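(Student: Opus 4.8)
The plan is to show that if $\Cfour$ occurs as an induced subgraph of $G^-$, then $\BCSREACH{G}$ is $\NPTIME$-hard; membership follows from Theorem~\ref{Theorem:NP}. The natural strategy is a reduction from $\BCSREACH{H}$ where $H$ is the graph (b) of Figure~\ref{Figure} (or a similar complete bipartite graph), since valence systems over that graph are concurrent pushdowns with two binary stacks, for which $\NPTIME$-hardness is known~\cite{la2010language}. So the first step is to identify four vertices $0_\ell,1_\ell,0_r,1_r$ of $G^-$ that induce a $\Cfour$, say with edges $\{0_\ell,0_r\},\{0_\ell,1_r\},\{1_\ell,0_r\},\{1_\ell,1_r\}$ and non-edges $\{0_\ell,1_\ell\}$ and $\{0_r,1_r\}$. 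The key observation is that the independence relation restricted to these four symbols is \emph{exactly} the independence relation of the graph underlying two-stack concurrent pushdowns: within a stack the two bit-symbols are dependent (so push/pop operations on the same stack may not commute), while across stacks everything is independent.

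Next I would argue that the submonoid of $\graphmonoid$ generated by $\{\inc{o},\dec{o} \mid o \in \{0_\ell,1_\ell,0_r,1_r\}\}$ is isomorphic to the graph monoid $\graphmonoid[H]$ of the $\Cfour$ itself — i.e., the extra vertices and edges of $G$ impose no additional relations among these four generators, and (crucially) none of the four chosen symbols has a self-loop in $G$, since they come from $G^-$ which is loop-free. This means $\inc{o}.\dec{o}\cong\varepsilon$ holds for each of the four $o$, but $\dec{o}.\inc{o}$ is never reducible, exactly matching the stack semantics. The right-invertibility condition on configurations then forces runs over these symbols to behave like well-nested stack computations. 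Here one uses a projection/normal-form argument: a computation over the four chosen operations reduces to $\graphneutral$ in $\graphmonoid$ iff it reduces to $\graphneutral$ in $\graphmonoid[H]$, which can be shown via the trace-monoid / confluent rewriting characterization of irreducibility already invoked in Section~\ref{Section:Decomposition}.

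With the monoid embedding in place, the reduction is syntactic: given a two-stack concurrent pushdown $A$ with bound $k$, relabel each stack operation by the corresponding $\incdec{o}$ over the four $\Cfour$-vertices to obtain a valence system $A'$ over $G$; leave the control states, $\qinit$, $\qfin$ and $k$ unchanged. One then checks that context switches in the sense of our Definition (two consecutive operations on distinct independent symbols) coincide, on these four symbols, with stack switches in the concurrent-pushdown sense — which is precisely what part~(2) of the Lemma on page~\pageref{Section:BCS} already records. Hence $(\qinit,\varepsilon)\to(\qfin,w)$ with $\grapheq{w}=\graphneutral$ and $\cs{w}\le k$ in $A'$ iff the corresponding run exists in $A$ within $k$ context switches, giving the $\NPTIME$-hardness; combined with Theorem~\ref{Theorem:NP} this yields $\NPTIME$-completeness.

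The main obstacle I anticipate is the monoid-embedding step: one must verify that the presence of the remaining vertices of $G$, and in particular of possible self-loops \emph{elsewhere} in $G$, together with arbitrary edges incident to the four chosen vertices from outside, does not collapse or enrich the submonoid generated by the $\Cfour$-operations. Since we only ever use operations on the four chosen symbols, edges from those symbols to outside vertices are irrelevant (they never get exercised), and self-loops exist only on vertices outside $G^-$'s image of $\Cfour$; the careful point is to show that no \emph{derived} relation (a consequence of several defining relations of $\cong$) can identify two distinct words over the four generators unless it already does so in $\graphmonoid[H]$. This is where I would lean on the standard fact that $\cong$ is generated by the stated rules and on a length/projection argument to rule out cross-talk.
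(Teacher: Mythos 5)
Your reduction is sound when the four chosen vertices carry no self-loops in $G$, and in that case it essentially matches the paper's own remark that for $G=\Cfour$ one simply invokes the known $\NPTIME$-hardness for concurrent pushdowns (the monoid-embedding step you worry about is indeed handled by the standard projection argument for induced subgraphs). But there is a genuine gap in the remaining case, and it stems from a misreading of the hypothesis: $G^-$ is defined as $G$ \emph{with self-loops removed}, so the assumption that $\Cfour$ is an induced subgraph of $G^-$ does \emph{not} imply that the four vertices are loop-free in $G$ --- the theorem is phrased via $G^-$ precisely to cover graphs where some or all of these vertices are looped. If a chosen vertex $o$ has a self-loop in $G$, then $o \indeprel o$ and hence $\dec{o}.\inc{o}\cong\varepsilon$ in addition to $\inc{o}.\dec{o}\cong\varepsilon$; the symbol behaves like a blind counter rather than a stack symbol, and a configuration such as $(q,\dec{o})$ is right-invertible. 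Your syntactic relabelling of a two-stack concurrent pushdown then no longer preserves the semantics: the valence system over $G$ may ``pop'' symbols that were never pushed, so the backward direction of your correctness claim (every accepting run of $A'$ comes from one of $A$) fails.

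The paper closes this case by a different route: it writes the induced subgraph as $G_0\times G_1$, where each $G_i$ consists of two non-adjacent vertices that may or may not carry loops, reformulates the reachability question as the intersection problem $\BCSI{G_0}{G_1}$ via Proposition~\ref{complexity:direct:equivalence}, and uses the known fact that valence automata over two non-adjacent vertices accept exactly the context-free languages \emph{whether or not the vertices carry self-loops}~\cite{Zetzsche2017a,Zetzsche2016c}. Hardness then follows from a direct reduction of 3CNF-SAT to emptiness of the intersection of two context-free languages restricted to words of a fixed polynomial length, each realized by a valence automaton over one $G_i$ with zero context switches. To repair your proof you would either need to import this language-theoretic detour for the looped case, or give a separate argument that self-loops on the $\Cfour$-vertices do not destroy hardness --- which is exactly the nontrivial content your current write-up assumes away.
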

\noindent
It is an old combinatorial result that a simple graph is a transitive
forest if and only if it does not contain the two graphs $\Pfour$ and
$\Cfour$ as induced subgraphs~\cite{Wolk1965}. Hence, if one could
also show that $\BCSREACH{G}$ is $\NPTIME$-hard when $G^-=\Pfour$,
then Theorem~\ref{complexity:p} would cover all cases with polynomial
complexity (unless $\PTIME=\NPTIME$). However, we currently do not
know whether $\BCSREACH{\Pfour}$ is $\NPTIME$-hard.

\subparagraph*{Proof Sketches.}

The rest of this section is devoted to
sketching the proofs of Theorems~\ref{complexity:nl},
\ref{complexity:p}, and \ref{complexity:cfour:completeness}.  The
first step is a reformulation of the problem $\BCSREACH{G}$ if $G$ is
obtained from two disjoint graphs $G_0$ and $G_1$ by drawing edges
everywhere between $G_0$ and $G_1$. Suppose $G_i=(V_i,I_i)$ is a graph
for $i=0,1$ such that $V_0\cap V_1=\emptyset$. Then the graph
$G_0\times G_1$ is defined as $(V,I)$, where $V=V_0\cup V_1$ and
$I=I_0\cup I_1\cup \{\{v_0,v_1\}\mid v_0\in V_0, v_1\in V_1\}$.

The reformulation also involves valence automata, which can read
input.  Let $G=(V,I)$ be a graph and let
$\ops=\{o^+, o^-\mid o\in V\}$. A \emph{valence automaton} over $G$ is
a tuple $A=(Q,\Sigma,q_0,E,q_f)$, where $Q$ is a finite set of
\emph{states}, $\Sigma$ is an alphabet, $q_0\in Q$ is its
\emph{initial state},
$E\subseteq Q\times (\Sigma\cup\{\varepsilon\})\times (\ops\cup
\{\varepsilon\})\times Q$ is its set of \emph{transitions}, and
$q_f\in Q$ is its  \emph{final state}. A
\emph{configuration} is a tuple $(q,u,v)$, where $q\in Q$,
$u\in\Sigma^*$, and $v\in\ops^*$, where $v$ is right-invertible.
Intuitively, a transition $(q,s,w,q')$ changes the state from $q$ to $q'$,
reads the input $s$, and puts $w$ into the storage.  We write
$(q,u,v)\to (q',u',v')$ if there is a transition $(q,s,w,q')$ such
that $u'=us$ and $v'=vw$.
%
For any $k\in\N$,
the \emph{language accepted by $A$ with at most $k$ context switches} is
denoted $\langof[k]{A}$ and defined as the set of all $u\in\sigma^*$ such that from $(q_0,\varepsilon,\varepsilon)$,
we can reach $(q_f,u,w)$ for some $w\in\ops^*$ with $\grapheq{w}=\graphneutral$ and $\cs{w}\le k$.
The following problem will be used to reformulate $\BCSREACH{G\times H}$.
\begin{problem}
  \problemtitle{Intersection under bounded context switching}
  \problemshort{($\BCSI{G}{H}$)}
  \probleminput{Alphabet $\Sigma$, valence automata $A,B$ over graphs $G,H$, resp., \newline
      with
    input alphabet~$\Sigma$,
    and bounds $k,\ell,m$ in unary.}
  \problemquestion{Is the intersection $\langof[k]{A}\cap\langof[\ell]{B}\cap\Sigma^{\le m}$ non-empty?}
\end{problem}
We are now ready to state the reformulation, which is not difficult to prove.
\begin{restatable}{rproposition}{restatePropositionComplexityDirectEquivalence}
\label{complexity:direct:equivalence}
  If $G=G_0\times G_1$, then $\BCSREACH{G}$ is logspace-interreducible
  with $\BCSI{G_0}{G_1}$.
\end{restatable}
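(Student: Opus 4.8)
The plan is to give logspace reductions in both directions, using the algebraic fact that inserting all edges between $G_0$ and $G_1$ makes the graph monoid a direct product. Indeed, a presentation of $\graphmonoid{G_0\times G_1}$ is obtained by taking the disjoint union of the generators and relations of $\graphmonoid{G_0}$ and $\graphmonoid{G_1}$ and adding $xy=yx$ for every $G_0$-generator $x$ and $G_1$-generator $y$, so $\graphmonoid{G_0\times G_1}\cong\graphmonoid{G_0}\times\graphmonoid{G_1}$. Writing $w^{(0)},w^{(1)}$ for the projections of a computation $w\in\ops^*$ onto its $V_0$- and $V_1$-operations, this gives that $\grapheq{w}=\graphneutral$ holds iff both projections represent the identity of their monoid, and that $w$ is right-invertible iff both are. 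Moreover, since every symbol of $V_0$ is independent of, and distinct from, every symbol of $V_1$, every dependent set of symbols of $G$ lies entirely in $V_0$ or entirely in $V_1$; hence every context of a computation over $\graphmonoid{G}$ uses only operations on $V_0$-symbols or only operations on $V_1$-symbols, and a computation with at most $k$ context switches factors as $w_1\cdots w_r$ with $r\le k+1$ into (one-sided, hence dependent) contexts.

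For the reduction from $\BCSREACH{G}$ to $\BCSI{G_0}{G_1}$, given $A=(Q,{\to})$, states $\qinit,\qfin$, and $k$, I would take as shared input alphabet $\Sigma$ the set of \emph{schedule letters} $(p,p',s,D)$ with $p,p'\in Q$, $s\in\{0,1\}$, and $D\subseteq V_s$ dependent. A schedule $c_1\cdots c_r$ with $r\le k+1$, say $c_i=(p_{i-1},p_i,s_i,D_i)$, is read as a guessed context decomposition: the $i$-th context runs in $A$ from $p_{i-1}$ to $p_i$ on side $s_i$ using only operations on symbols from $D_i$. The valence automaton $A_0$ over $G_0$ reads $c_1\cdots c_r$, checks $p_0=\qinit$, $p_r=\qfin$ and the agreement of consecutive letters on the shared state, and for each letter with $s_i=0$ it simulates, via $\varepsilon$-transitions and transitions carrying operations on $D_i$, a run of $A$ from $p_{i-1}$ to $p_i$; for letters with $s_i=1$ it only advances. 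The automaton $A_1$ over $G_1$ is the mirror image. With $\BCSI$-bounds $k$, $k$, $k+1$, a word lies in $\langof[k]{A_0}\cap\langof[k]{A_1}\cap\Sigma^{\le k+1}$ exactly when the two simulated sub-runs concatenate along the schedule into a run of $A$ from $\qinit$ to $\qfin$: acceptance of $A_0$ and of $A_1$ forces the $V_0$- and the $V_1$-storage to represent the identity separately, so the resulting computation $w$ satisfies $\grapheq{w}=\graphneutral$ by the product fact, and since each $w_i$ is dependent (its symbols lie in $D_i$), $\cs{w}\le r-1\le k$. As $G_0,G_1$ are fixed, the sets $D$ range over a constant family, so $A_0$ and $A_1$ have size polynomial in $\card{A}+k$ and are logspace-computable.

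For the reduction from $\BCSI{G_0}{G_1}$ to $\BCSREACH{G}$, given $\Sigma$, valence automata $A$ over $G_0$ and $B$ over $G_1$, and bounds $k,\ell,m$, I would build a valence system $A'$ over $\graphmonoid{G}$ that guesses an input word $a_1\cdots a_p$ with $p\le m$ and processes it position by position: at position $i$ it commits $a_i\in\Sigma$ in its control, performs the $A$-transitions reading $a_i$ plus $\varepsilon$-transitions of $A$, then the $B$-transitions reading the same $a_i$ plus $\varepsilon$-transitions of $B$, and once it decides to stop it runs the remaining $\varepsilon$-transitions of $A$ and of $B$ to their final states. The control of $A'$ additionally maintains the current context of the $V_0$- and of the $V_1$-computation (dependent subsets of $V_0$ and $V_1$, of constant size) with switch counters capped at $k$ and $\ell$, so $A'$ accepts only if the simulated $A$-computation has at most $k$ and the simulated $B$-computation at most $\ell$ context switches. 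Taking the initial and final state of $A'$ as those of the $\BCSREACH$-instance and $k':=2m+k+\ell+3$, a run of $A'$ producing a computation that represents the identity with at most $k'$ context switches exists iff $\langof[k]{A}\cap\langof[\ell]{B}\cap\Sigma^{\le m}\neq\emptyset$: forward one projects the $V_0$- and $V_1$-operations and reads off the runs of $A$ and of $B$ on the common word, backward one interleaves the two witness runs position by position. Here the only non-constant parameters of $A'$ are $\card{\Sigma}$ and $m$, both polynomial, so $A'$ is logspace-computable.

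The routine parts are the finite-control bookkeeping and checking the stitching and interleaving correspondences, including the $\varepsilon$-only edge cases (an empty computation, where $\cs{\varepsilon}=-1$). The point that needs care is that the context-switch count does not behave like $\grapheq{\cdot}$ under splitting and merging: one needs the simple but essential fact that $\cs{\cdot}$ is superadditive under concatenation, $\cs{u_1\cdots u_t}\ge\sum_i\cs{u_i}$, which underlies the linear bound on $k'$ and also explains why $A'$ must maintain the per-side context decompositions itself, rather than read them off its own computation (for which one would need $\cs{w^{(0)}}\le\cs{w}$, which can fail); dually, one must check that forcing each simulated context in the first reduction to draw from a single dependent set $D_i$ is exactly what keeps the concatenated computation within $k$ context switches.
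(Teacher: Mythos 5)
Your proposal is correct and follows essentially the same route as the paper: both directions are logspace reductions through a shared interface word over an alphabet that records intermediate control states of the run, with the context-switch bookkeeping pushed into the finite control (the paper factors this out as a separate ``built-in counter'' lemma), and with correctness resting on the facts that every context of a computation over $G_0\times G_1$ is one-sided and that $\grapheq{w}=\graphneutral$ and right-invertibility hold iff they hold for both projections. The only notable difference is the forward reduction's alphabet --- you use one letter per context carrying a dependent symbol set $D_i$ and let the length bound $\Sigma^{\le k+1}$ enforce $\cs{w}\le k$, whereas the paper uses one letter per maximal one-sided segment carrying that segment's internal switch count $s_i$ and checks $\cs{u_0}+\sum_i(s_i+\cs{u_i}+2)\le k$ inside one automaton; both work, and as a small correction to your closing remark, for this particular product the projections do satisfy $\cs{w^{(i)}}\le\cs{w}$ (each context of $w$ is wholly one-sided, so projecting deletes whole dependent pieces), although, as you rightly conclude, that alone would not enforce the separate bounds $k$ and $\ell$.
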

\noindent
We can use Proposition~\ref{complexity:direct:equivalence} to show
that adding a universal vertex does not change the complexity. 
%
\begin{restatable}{rproposition}{restatePropositionComplexityReductionAddcounter}
\label{complexity:reduction:addcounter}
  If $G$ has a universal vertex $v$, then $\BCSREACH{G}$ reduces to
  $\BCSREACH{G\setminus v}$ in logspace.
\end{restatable}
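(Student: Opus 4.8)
The plan is to invoke Proposition~\ref{complexity:direct:equivalence} to reduce $\BCSREACH{G}$ to an instance of the bounded-context intersection problem in which one of the two graphs is a single vertex, and then to replace that single counter by finite control by means of a standard small-model argument. Since $v$ is universal in $G$, we have $G=G'\times H$, where $G'=G\setminus v$ and $H$ is the single-vertex graph on $\{v\}$ that carries a self-loop precisely when $v$ does in $G$. By Proposition~\ref{complexity:direct:equivalence}, $\BCSREACH{G}$ reduces in logspace to $\BCSI{G'}{H}$, so it suffices to reduce $\BCSI{G'}{H}$ to $\BCSREACH{G'}$ in logspace. So I would fix such an instance: an alphabet $\Sigma$, valence automata $A$ over $G'$ and $B$ over $H$ with input alphabet $\Sigma$, and bounds $k,\ell,m$ given in unary.

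The reason for taking $H$ to be a single vertex is that $\{v\}$ is a dependent set, so every storage computation $w$ of $B$ consists of a single context and hence satisfies $\cs{w}\le 0$. Consequently $\langof[\ell]{B}=\langof[0]{B}$, which is simply the language of $B$ read as a one-counter automaton (partially blind if $v$ has no self-loop, and blind, i.e.\ with a counter that may go negative, otherwise): the set of inputs $u$ on which $B$ can reach its final state with counter value $0$. I would then apply the classical small-model property of one-counter automata (equivalently, one-counter nets): there is a polynomial $p$ such that whenever $B$ accepts some $u$ with $|u|\le m$, it accepts such a $u$ via a run whose counter stays in $[0,p(\card{B},m)]$ (respectively in $[-p(\card{B},m),p(\card{B},m)]$). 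Storing the counter value in a finite-control component ranging over this interval, which has polynomial size because $m$ is given in unary, then yields, in logarithmic space, a finite automaton $B'$ with $\langof{B'}\cap\Sigma^{\le m}=\langof[0]{B}\cap\Sigma^{\le m}$.

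It remains to fold $B'$ and the length bound into $A$. Taking the product $C$ of the valence automaton $A$ over $G'$ with the finite automaton $B'$ and with an $(m+1)$-state chain that counts the number of input letters read produces a valence automaton over $G'$, of size polynomial in $\card{A}+\card{B}+m$ and computable in logspace, whose storage behaviour is exactly that of $A$ (so the value of $\cs{w}$ on a run is not affected); hence $\langof[k]{C}=\langof[k]{A}\cap\langof{B'}\cap\Sigma^{\le m}$, and the given $\BCSI{G'}{H}$-instance is positive if and only if $\langof[k]{C}\ne\emptyset$. Finally, erasing the input labels from $C$ turns it into a valence system over $G'$ for which the existence of a run from its initial state to its final state with $\grapheq{w}=\graphneutral$ and $\cs{w}\le k$ is exactly the assertion $\langof[k]{C}\ne\emptyset$; this is an instance of $\BCSREACH{G'}$ with the same bound $k$.

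The hardest part will be the small-model bound invoked above: one has to argue that, for an input of bounded length, $\varepsilon$-labelled counter loops never need to be iterated to build up a large counter, so that an excursion of the counter above the (at most $m$, hence polynomially many) values it attains at the moments an input letter is read can be pruned by the usual pumping argument on one-counter automata. Everything else, namely the factorisation $G=G'\times H$, the construction of $B'$, the product $C$, and the erasure of the input labels, is a straightforward logspace computation.
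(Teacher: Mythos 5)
Your proof is correct and follows essentially the same route as the paper's: factor $G=(G\setminus v)\times H$ with $H$ a single vertex, pass to $\BCSI{G\setminus v}{H}$ via Proposition~\ref{complexity:direct:equivalence}, replace the one-counter component by a polynomial-size finite automaton using the folklore quadratic bound on counter values, and fold it into $A$ by a product construction. The only cosmetic difference is that you dispense with the bound $\ell$ by observing that every computation over a single-vertex graph is one context, where the paper instead invokes Lemma~\ref{complexity:builtincounter}; both are fine.
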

This can be deduced from
Proposition~\ref{complexity:direct:equivalence} as follows. If $v$ is
a universal vertex, then $G=(G\setminus v)\times H$, where $H$ is a
one-vertex graph. In this situation, a valence automaton over $H$ is
equivalent to a one-counter automaton (OCA). It is folklore that an
$n$-state OCA accepts a word of length $m$ if and only if it does so
with counter values at most $O((mn)^2)$~\cite{ChistikovCHPW16}. We can
thus compute in logspace a finite automaton for the language
$R=\langof[\ell]{B}\cap\Sigma^{\le m}$. This means, our instance of
$\BCSI{G\setminus v}{H}$ reduces to emptiness of
$\langof[k]{A}\cap R$. Using the automaton for $R$, this is easily
turned into an instance of $\BCSREACH{G\setminus v}$. Note that
Proposition~\ref{complexity:reduction:addcounter} yields the upper
bound of Theorem~\ref{complexity:nl}. The $\PTIME$-hardness follows
from $\PTIME$-hardness of reachability in pushdown automata.

The $\PTIME$ upper bound in Theorem~\ref{complexity:p} follows
 from Proposition~\ref{complexity:reduction:addcounter} and
the following.
\begin{restatable}{rproposition}{restatePropositionComplexityUnion}
\label{complexity:union}
  If $\BCSREACH{G_i}$ is in $\PTIME$ for $i=0,1$, then $\BCSREACH{G_0 \dotcup G_1}$
  is in $\PTIME$ as well.
\end{restatable}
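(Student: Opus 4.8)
The plan is to exploit that a disjoint union of graphs induces a \emph{free product} of graph monoids. Since $G_0\dotcup G_1$ has no edges between $V_0$ and $V_1$, there are neither independences nor cancellations across the two parts, so $\graphmonoid{G_0\dotcup G_1}$ is isomorphic to $\graphmonoid{G_0}*\graphmonoid{G_1}$. Three facts will drive the proof. First, there is a retraction of this monoid onto each factor $\graphmonoid{G_i}$, so $\grapheq{w}=\graphneutral$ forces the projection of $w$ onto the $\ops_i$-operations to evaluate to the identity of $\graphmonoid{G_i}$; moreover that projection has at most $\cs{w}$ context switches. Second, by the free-product normal form, $\grapheq{w}=\graphneutral$ holds iff $w$ can be reduced to $\varepsilon$ by repeatedly deleting a \emph{single-factor} infix that already evaluates to the identity of its factor. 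Third, operations from different factors are never independent in $G_0\dotcup G_1$, so a context switch in $w$ can occur only between two operations of the \emph{same} factor.

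Using these facts I would turn a witnessing run of $A$ into a nested, pushdown-like structure of single-factor \emph{sessions}: while simulating $A$ one keeps a stack of open sessions, each tagged by a factor $i$ and by the $A$-state at which it was opened; an $\ops_i$-operation is appended to the top session if its factor is $i$, an operation of the other factor opens a fresh session on top, and a session closes once the $\graphmonoid{G_i}$-word it has collected --- with already-closed sub-sessions counting as $\varepsilon$, by the retraction fact --- evaluates to the identity. By the normal-form fact, a run from $(\qinit,\varepsilon)$ to some $(\qfin,w)$ with $\grapheq{w}=\graphneutral$ and $\cs{w}\le k$ exists iff there is such a simulation in which every session closes, within a total switch budget of $k$. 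The key observation is that ``a factor-$i$ session opened in state $p$ can close in state $q$'' is itself an instance of $\BCSREACH{G_i}$: take the restriction $A[\ops_i]$ and add, for every pair $(p',q')$ that a factor-$(1-i)$ session can realise, a shortcut $\varepsilon$-transition $p'\to q'$. This yields a pair of mutually recursive relations $\mathrm{Res}_0,\mathrm{Res}_1\subseteq Q\times Q$ (each defined through the other via the shortcut transitions); monotonicity lets me compute their least fixpoint by iteration, each of polynomially many rounds issuing polynomially many calls to the polynomial-time procedure for $\BCSREACH{G_i}$ on a valence system of size $\mathrm{poly}(\card{A}+k)$, after which a table lookup decides reachability.

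The step I expect to be the main obstacle is the bookkeeping of the bound $k$: the number of sessions is \emph{not} bounded by $k$ --- a computation without context switches may still nest arbitrarily many trivial sessions --- so the $k$ context switches of $w$ must be apportioned among the sessions, not granted to each. I would therefore refine the relations to $\mathrm{Res}_i^{b}(p,q)$ for $b=0,\dots,k$, meaning the session and all its sub-sessions together use at most $b$ switches, and charge a sub-session of budget $b'$ to $b'$ units of its parent's budget (plus one unit whenever two sibling sub-sessions abut at a pair of independent operations, since by the third fact those are exactly the switches occurring at the ``$w$-level''). Encoding such a weighted shortcut into a genuine $\BCSREACH{G_i}$ instance --- whose bound is a single unary number --- requires a small gadget that forces exactly $b'$ context switches, which exists whenever $G_i$ has an edge and is unnecessary otherwise (an edgeless $G_i$ admits no factor-$i$ switch at all); checking that the refined fixpoint is sound and complete with respect to $\cs{w}\le k$ is the delicate part. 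One could alternatively try to push the block-decomposition machinery of Sections~\ref{Section:Decomposition}--\ref{Section:Algorithm} through --- Theorem~\ref{Lemma:Crazy} already bounds the number of blocks by $k^2$, and over $G_0\dotcup G_1$ the rules of a free automata reduction largely decouple the two factors --- but the possible appearance of blocks whose alphabet spans both factors makes the reduction to $\BCSREACH{G_i}$ less direct, so I would favour the session-based argument above.
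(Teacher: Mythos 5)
Your structural idea is, at its core, the same as the paper's: over $G_0\dotcup G_1$ the monoid is a free product, so any $w$ with $\grapheq{w}=\graphneutral$ that alternates between the factors contains a single-factor infix already evaluating to the identity; one can therefore saturate the system with $\varepsilon$-shortcuts for such single-factor segments and compute a mutually recursive fixpoint via polynomially many oracle calls to the factor problems. That is exactly the paper's saturation. The genuine gap is in the step you yourself flag as delicate: apportioning the bound $k$ among sessions. The number $\cs{w}$ is defined by a greedy, global, left-to-right decomposition into maximal dependent prefixes, and this is \emph{not} compositional over your session tree: cross-factor symbols are never independent, so a single context may interleave operations of both factors, and a boundary triggered by one factor's independence resets the accumulated symbol set and thereby shields the other factor's independences. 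Concretely, take $a\indeprel a'$ in $G_0$, $b\indeprel b'$ in $G_1$, no cross edges, and $w=\inc{a}.\inc{b}.\inc{b'}.\dec{b'}.\dec{b}.\inc{a'}.\dec{a'}.\dec{a}$. Then $\grapheq{w}=\graphneutral$ and $\cs{w}=3$, with contexts $\inc{a}.\inc{b}\mid\inc{b'}.\dec{b'}\mid\dec{b}.\inc{a'}.\dec{a'}\mid\dec{a}$; the switch between $\inc{a}$ and $\inc{a'}$ that is present in the factor-$0$ session word $\inc{a}.\inc{a'}.\dec{a'}.\dec{a}$ (two switches) does not occur in $w$, because the boundary caused by $b\indeprel b'$ has already flushed $a$ from the current context. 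Any factor-local accounting (your session word contributes $2$, the nested factor-$1$ session $\inc{b}.\inc{b'}.\dec{b'}.\dec{b}$ contributes $2$) charges this witness $4$ units, so with $k=3$ your fixpoint rejects a YES-instance. The relations $\mathrm{Res}_i^{b}(p,q)$ simply do not carry the information that matters, namely the symbol set of the current context at session boundaries; and the proposed gadget that ``forces exactly $b'$ switches'' perturbs the very decomposition it is supposed to measure.

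The paper avoids all of this by moving the counting out of the recursion. Lemma~\ref{complexity:promise} first reduces $\BCSREACH{G}$ to the promise problem $\BCSRP{G}$: one takes a product of the valence system with $2^{\ops}\times[0,k]$, recording in the control state the operation set of the current context and the running number of switches, which yields a \emph{$k$-bounded} system in which every path automatically has at most $k$ context switches. After that no budget ever needs to be split: restricting a $k$-bounded system to its $\ops_i$-labelled and $\varepsilon$-labelled transitions preserves $k$-boundedness, so each saturation round asks a plain reachability-to-identity query (an instance of $\BCSRP{G_i}$) with no side condition, and correctness follows by induction on $|w|$ from the free-product normal form. I would keep your saturation skeleton but replace the budget-indexed relations by this up-front product construction.
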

\noindent
Proposition~\ref{complexity:union} is shown using a saturation
procedure similar to the one in Section~\ref{Section:Algorithm}. In
the latter, we shortcut paths that read two (complementary)
instructions. Here, in contrast, we find states $p,q$ between which
there is an arbitrarily long path that reads instructions $w$ over one
graph $G_i$ for $i=0,1$ such that $\grapheq{w}=\graphneutral$ and
$\cs{w}\le k$. Then, we add an $\varepsilon$-transition between $p$
and $q$.


Finally, let us comment on the $\NPTIME$-hardness in
Theorem~\ref{complexity:cfour:completeness}. If $G=\Cfour$, this is
the well-known $\NPTIME$-hardness of reachability under bounded
context switching. If $G$ contains self-loops, we employ
Proposition~\ref{complexity:direct:equivalence}: If $G^-=\Cfour$, then
$G=G_0\times G_1$ for some graphs where each $G_i$ contains two
non-adjacent vertices. In this case, it is known that that valence
automata over $G_i$ accept the same languages as those over
$G_i^-$~\cite{Zetzsche2017a,Zetzsche2016c}.  Therefore, the
formulation in terms of $\BCSI{G_0}{G_1}$ allows us to conclude
hardness.




\section{Conclusion}

We have shown that for every storage represented by a graph monoid,
reachability under bounded context switches ($\BCSREACH$) is decidable
in $\NPTIME$. To this end, we show that after some preprocessing in a
saturation procedure, any computation with bounded context switches
decomposes into quadratically many blocks. These blocks then cancel
and commute with each other so as to reduce to the identity
element. Thus, one can guess a decomposition into blocks and verify
the cancellation and commutation relations among them.

For the subclass of graph monoids whose underlying
simple graph is a transitive forest, we have provided a
polynomial-time algorithm (Theorem~\ref{complexity:p}).  However, we
leave open whether there are other graph monoids for which the problem
is in $\PTIME$.

One has $\NPTIME$-hardness in the case that the underlying simple
graph contains $\Cfour$ as an induced subgraph, which corresponds to
the classical case of bounded context switching in concurrent
recursive programs. Since transitive forests are precisely those
simple graphs that contain neither $\Cfour$ nor $\Pfour$ as induced
subgraphs~\cite{Wolk1965}, showing $\NPTIME$-hardness for $\Pfour$
would imply that Theorem~\ref{complexity:p} captures all graphs with
polynomial-time algorithms (unless $\PTIME=\NPTIME$). Unfortunately,
the known hardness techniques for problems involving graph groups or
Mazurkiewicz traces over
$\Pfour$~\cite{AalbersbergHoogeboom1989,LohreySteinberg2008,LohreyZetzsche2017b,Zetzsche2017a}
do not seem to apply.

Moreover, there is a variety of under-approximations for concurrent
recursive
programs~\cite{TorreMP07,AtigBH08,BreveglieriCCC96,TorreN11,EQR11,Atig14,TI0TP15a}. It
appears to be a promising direction for future research to study
generalizations of these under-approximations to valence systems.




\newpage
\bibliographystyle{plainurl}
\bibliography{12mmz}

\newpage
\appendix

\section{Proofs for Section~\ref{Section:Decomposition}}

\restateLemmaGedoens*

\begin{proof}
    Clearly, $w\rto^*\varepsilon$ implies $\grapheq{w}=\graphneutral$.
    We prove the converse using another rewriting relation that has been studied before~\cite{Zetzsche2016c,Zetzsche2013a}.
    Let $u\vdash v$ if either
    (i)~$u=s.\inc{o}.\dec{o}.t$ and $v=s.t$ for some $s,t \in\ops^*$ and  $o\in\ops$ or
    (ii)~$u=s.a.b.t$ and $v=s.b.a.t$ for some $s,t \in\ops^*$ and $a\in \incdec{o_1}$, $b\in \incdec{o_2}$ for some $o_1 \indeprel o_2$.
    Let $\vdash^*$ be the reflexive transitive closure of $\vdash$. It was shown in~\cite{Zetzsche2016c,Zetzsche2013a} that $\grapheq{w}=\graphneutral$ if and only if $w\vdash^*\varepsilon$.
    
    Let us show by induction on $|w|$ that $w\vdash^* \varepsilon$ implies $w\rto^*\varepsilon$.
    Consider a sequence of steps witnessing $w\vdash^*\varepsilon$.
    Let us call a step \emph{undesirable} if we cannot (directly) match it with $\rto$.
    This means, a step $u\vdash v$ where $u=s.a.b.t$ and $v=s.b.a.t$ with $\set{a,b}=\set{\inc{o},\dec{o}}$.
    
    If the sequence does not apply an undesirable step, it is already a reduction for $w$. 
    If such a step does occur, suppose  $w'$ is the first word where we apply one:
    We have $w\rto^* w'=s.a.b.t \vdash s.b.a.t\vdash^*\varepsilon$. Observe that then
    $\grapheq{a.b}=\graphneutral$ and hence
    $\grapheq{s.t}=\grapheq{s.a.b.t}=\graphneutral$. Since $|s.t|<|w'|\leq |w|$,
    induction yields $s.t \rto^*\varepsilon$ and thus $w\rto^* s.a.b.t\rto
    s.t \rto^*\varepsilon$.
\end{proof}

\restateLemmaUniqueFactorization*

\begin{proof}
    It suffices to show that each position in a context belongs to
    exactly one block.  Since $\pi$ reduces $w$ to $\varepsilon$, for
    each $x \in \oneto{n}$, there is exactly one $y \in \oneto{n}$, $x
    \neq y$ such that Rule~\rcancel\ or Rule~\rcancelrev\ is applied either to
    $w'.\ltr{w}{x}.\ltr{w}{y}.w''$ or to $w'.\ltr{w}{y}.\ltr{w}{x}.w''$.
    Consequently, each $\ltr{w}{x}$ belongs to at least one block.
    
    We also need to show that no position in $w$ belongs to more than
    one block.  Towards a contradiction, assume there are blocks $u,v$
    that overlap, \ie $u = r.s$, $v = s.t$.  Then there is some context
    $w_i$ of $w$ that we may write as $w_i = w_{i}'.r.s.t.w_{i}''$.
    As $u$ is a block, there is another context $w_j$ such that $u$
    cancels with an infix of $w_j$, \ie $w_j = w_{j}'.u'.w_{j}''$ with
    $u \cancelrel u'$.  By the definition of $\cancelrel$, we have $u' =
    s'.r'$ such that $s \cancelrel s'$ and $r \cancelrel r'$.
    
    Similarly, there is a context $w_{\bar{j}}$ containing infix $v'$
    which cancels $v$.  As $s'$ is the unique infix of $w$ such that the
    operations in $s$ cancel out with $s'$, we need to have $\bar{j} =
    j$, and we can write $w_j = w_j'.t'.s'.r'.w_j''$ where $s', r'$ are
    as before and $t \cancelrel t'$.
    Consequently, we have \( r.s.t \cancelrel t'.s'.r' \) which
    contradicts the maximality of the blocks $u$ and $v$.
\end{proof}

\restateLemmaMinimalNesting*

\begin{proof}
    We argue that $\ord{w}$ is transitive and antisymmetric.
    As the domain of $\ord{w}$ is finite, this is sufficient to guarantee that a minimal nesting exists:
    We may start with an arbitrary pair $s_1 \cancelrel s_2$ and iteratively pick smaller pairs as long as possible.
    
    For transitivity, note that $(s_1 \cancelrel s_2) \ord{w} (t_1 \cancelrel t_2)$ and $(t_1 \cancelrel t_2) \ord{w} (r_1 \cancelrel r_2)$ implies that we can write
    \(
    w = \upr{w}{1} . r_1 . \upr{w}{2} . t_1 . \upr{w}{3} . s_1 . \upr{w}{4} . s_2 . \upr{w}{5} . t_2 . \upr{w}{6} . r_2 . \upr{w}{7}
    \ ,
    \)
    proving $(s_1 \cancelrel s_2) \ord{w} (r_1 \cancelrel r_2)$.
    
    For antisymmetry, assume $(s_1 \cancelrel s_2) \ord{w} (t_1 \cancelrel t_2)$ and $(s_1 \cancelrel s_2) \ord{w} (t_1 \cancelrel t_2)$.
    This implies that we can write
    \(
    w = \upr{w}{1} . s_1 . \upr{w}{2} . t_1 . \upr{w}{3} . s_1 . \upr{w}{4} . s_2 . \upr{w}{5} . t_2 . \upr{w}{6} . s_1 . \upr{w}{7}
    \ ,
    \)
    a contradiction to the fact that $s_1$ has a unique occurrence in $w$.
\end{proof}

\restateLemmaCancelrelInverses*

\begin{proof}
    We proceed by induction on $\card{s_1} = \card{s_2}$.
    In the base case, $s_1 = a$ and $s_2 = b$ are single operations.
    If $a = \inc{o}$, $b = \dec{o}$ for some $o$, the statement obviously holds.
    Otherwise, we have $a = \dec{o}$, $b = \inc{o}$.
    By definition of $\rto$, this implies that $o \indeprel o$ holds, and $\grapheq{\dec{o}.\inc{o}} = \grapheq{\inc{o}.\dec{o}} = \graphneutral$ follows as desired.
    
    Assume that $s_1 = u.t$, $s_2 = r.v$ such that $u \cancelrel v$, $t \cancelrel r$.
    We may apply induction and use that $\cong$ is a congruence, obtaining
    \(
    \grapheq{s_1.s_2} = \grapheq{u.t.r.v} = \grapheq{u.v} = \graphneutral
    \ .
    \)
\end{proof}

\restateLemmaCancelrelNotSameBlock* 

\begin{proof}
    We show that a context cannot contain two operations that cancel out.
    As two blocks that cancel out would contain such operations, this is sufficient.
    
    Towards a contradiction, assume $a \cancelrel b$ where $a,b$ are contained in the same context $w_i$,
    \ie we have
    \(
    w_i = w_{i_1} . a . w_{i_2} . b . w_{i_3}
    \ .
    \)
    We have $a = \inc{o}, b = \dec{o}$, or $a = \dec{o}, b = \inc{o}$ and $o \indeprel o$.
    If $w_{i_2} = \varepsilon$, we obtain a contradiction to the assumption that $w_i$ is irreducible in both cases.
    If $w_{i_2}$ is a sequence of operations from $\incdec{o}$, we also obtain a contradiction to irreducibility.
    Otherwise $w_i$ contains some operation $c \in \incdec{o_2}$ for $o_2 \neq o$.
    Since $w_i$ is a context, $o_2$ and $o$ are not independent.
    Since $a$ should cancel with $b$, $\pi$ needs to swap one of them over $c$, or it needs to swap the inverse of $c$ (which is also in $\incdec{o_2}$) over one of them.
    As $o_2$ and $o$ are not independent, this is not possible.
    We obtain that $a$ cannot cancel with $b$, a contradiction.
\end{proof}

\section{Proofs for Section~\ref{Section:Algorithm}}

\restateLemmaSyntacticInverse*

\begin{proof}
    We show that for dependent and irreducible $u,v$, $\grapheq{u}\cdot\grapheq{v}=\graphneutral$ implies $v=\syninv{u}$.
    
    We proceed by induction on the length of $u$.
    In the base case, we have $u = \varepsilon$, which implies $\grapheq{v} = \grapheq{\varepsilon}$.
    As $v$ is irreducible and dependent, we have $v = \varepsilon$ as required.
    Here, we have used that any word that reduces to $\graphneutral$ needs to have a first reduction step in which canceling occurs, which can only exist if the word is non-irreducible.
    
    Assume that $u = u'.a$.
    We claim that we can write $v = b.v'$, where $b = \syninv{a}$ and $v' = \syninv{u'}$, which implies $v = \syninv{u}$.
    As we have $\grapheq{u}\cdot\grapheq{v}=\graphneutral$, $v$ contains an operation canceling $a$.
    If this operation is not the very first letter in $v$, we obtain a contradiction.
    
    Let $a \in \incdec{o_1}$ and assume that the first operation in $v$ is in $\incdec{o_2}$ for $o_1 \neq o_2$.
    Since $v$ is dependent, the first operation cannot commute with the inverse of $a$, a contradiction to $\grapheq{u}\cdot\grapheq{v}=\graphneutral$.
    Hence, $v$ starts with a prefix using operations in $\incdec{o_2}$ and containing the operation that cancels $a$.
    
    If $o_1 \indeprel o_1$ does not hold, then we need to have $a = \inc{o_1}$.
    If $a = \dec{o_1}$, we would have that $v$ is not right-invertible (since it is irreducible), a contradiction to the assumption $\grapheq{u}\cdot\grapheq{v}=\graphneutral$.
    Having $a = \inc{o_1}$ implies that the first operation $b$ in $v$ is $\dec{o_1}$, which is indeed the syntactic inverse of $a$.
    
    If $o_1 \indeprel o_1$ holds, then we have to consider both cases $a = \dec{o_1}$ and $a = \inc{o_1}$.
    In the first case, we claim that $b = \inc{o_1}$ has to hold.
    If $v$ starts with a sequence of $\dec{o_1}$, and then has an occurrence of $\inc{o_1}$, we get a contradiction to the irreducibility of $v$.
    Similarly, in the second case $a = \inc{o_1}$, $b = \dec{o_1}$ has to hold.
    
    Altogether, we have $v = b.v'$ with $b = \syninv{a}$.
    We have
    \[
        \graphneutral
        = \grapheq{u.v}
        = \grapheq{u'.a.b.v'}
        = \grapheq{u'} \cdot \grapheq{a.b} \cdot \grapheq{v'}
        = \grapheq{u'} \cdot \graphneutral \cdot  \grapheq{v'}
        = \grapheq{u'.v'}
        \ .
    \]
    Since $u'$ and $v'$ are still dependent and irreducible, we obtain $v' = \syninv{u'}$ by induction.
    We conclude $\syninv{u} = \syninv{u'.a} = \syninv{a}.\syninv{u'} = b.v' = v$ as desired.
\end{proof}

\section{Proofs for Section~\ref{complexity}}
We begin with the proof of
Proposition~\ref{complexity:direct:equivalence}, which consists of
three lemmas.

\begin{lemma}\label{complexity:reach:inter}
If $G=G_0\times G_1$, then $\BCSREACH{G}$ reduces to $\BCSI{G_0}{G_1}$ in logspace.
\end{lemma}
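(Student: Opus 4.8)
The plan is to exploit that $G=G_0\times G_1$ makes the graph monoid split as a direct product, $\graphmonoid\cong\graphmonoid{G_0}\times\graphmonoid{G_1}$, and that in such a graph every $o_0\in V_0$ is independent of every $o_1\in V_1$, so any dependent word over $G$ — in particular every context of any computation — uses symbols from only one of $V_0,V_1$. Hence, for a positive $\BCSREACH{G}$-instance with witnessing valence system $S$, a run from $(\qinit,\varepsilon)$ to $(\qfin,w)$ with $\grapheq{w}=\graphneutral$ and $\cs{w}\le k$ has a storage word whose context decomposition $w=C_1\cdots C_t$ ($t\le k+1$) consists of \emph{monochromatic} contexts, each a dependent word over $G_0$ or over $G_1$. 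The reduction lets a valence automaton $A$ over $G_0$ simulate exactly the $G_0$-coloured contexts, a valence automaton $B$ over $G_1$ the $G_1$-coloured ones, and keeps the two in sync through a common, self-describing input word.

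Concretely, I would set $\Sigma=Q\dotcup\{0,1\}$ (with $Q$ the states of $S$) and work with input words of the shape $v=q_0\,c_1\,q_1\,c_2\,q_2\cdots c_t\,q_t$ where $q_0=\qinit$, $q_t=\qfin$ and $c_i\in\{0,1\}$, encoding the checkpoint states between consecutive contexts and the colour of each context. Automaton $A$ reads $v$ in full: on reading a colour $c_i=0$ it nondeterministically commits to a \emph{dependent} subset $\ops'\subseteq\ops_0$, simulates the restriction of $S$ to $\ops'$ starting from its current state while pushing these operations onto its $G_0$-storage, and then reads the next checkpoint, which must coincide with the state reached; on reading $c_i=1$ it merely advances its current state to the next input symbol $q_i$ without touching the storage; it accepts once the input is exhausted in state $\qfin$. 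Automaton $B$ is defined symmetrically, with $0$ and $1$, $\ops_0$ and $\ops_1$, $G_0$ and $G_1$ interchanged. The output bounds are $k$ for $A$, $k$ for $B$, and $m=2k+3$. Since $G_0$ and $G_1$ are fixed (so there are only constantly many dependent subsets of $\ops_0$ resp.\ $\ops_1$), and the transitions of $A,B$ arise from those of $S$ by a purely local transformation, the whole construction is computable in logarithmic space.

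Correctness I would prove in both directions via the isomorphism $\graphmonoid\cong\graphmonoid{G_0}\times\graphmonoid{G_1}$. If the $\BCSREACH{G}$-instance is positive, decompose the storage word $w$ of a witnessing run into its $t\le k+1$ monochromatic contexts, colour the slots accordingly, and let $A$ replay the $G_0$-contexts (each over its own dependent symbol set) and $B$ the $G_1$-contexts along the self-describing word $v$: the storage produced by $A$ is exactly the $\ops_0$-projection $\pi_0(w)$, which equals $\graphneutral$ in $\graphmonoid{G_0}$ because the product is neutral, every prefix of which is right-invertible in $\graphmonoid{G_0}$ as a projection of a right-invertible prefix of $w$, and which has at most (number of $0$-slots)$-1\le k$ context switches; symmetrically for $B$, and $\card{v}=2t+1\le m$. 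Conversely, a word $v\in\langof[k]{A}\cap\langof[k]{B}\cap\Sigma^{\le m}$ has, by acceptance, the shape above, and interleaving the storage writes of the two accepting runs according to the colours $c_i$ gives the storage word $w$ of a genuine run of $S$ from $(\qinit,\varepsilon)$ to $(\qfin,w)$: the checkpoints chain up by construction, the $\ops_i$-projection of $w$ is the storage of the respective automaton so $\grapheq{w}=\graphneutral$ in $\graphmonoid$ by the product decomposition, right-invertibility of every prefix of $w$ holds coordinatewise, and $w$ is a concatenation of $t\le k+1$ dependent words so $\cs{w}\le t-1\le k$.

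The main obstacle I anticipate is the accounting of context switches, and specifically the necessity of forcing each simulated context to be a \emph{single} dependent word: this is exactly why $A$ (resp.\ $B$) must commit per context to a dependent subset of $\ops_0$ (resp.\ $\ops_1$) instead of simulating $S$ freely — otherwise the reconstructed $w$ could accumulate up to roughly $\cs{\pi_0(w)}+\cs{\pi_1(w)}\approx 2k$ switches. Pinning this down needs the two elementary observations that projecting a computation onto $\ops_i$ cannot increase its number of context switches, and that a concatenation of $t$ dependent words has at most $t-1$ context switches; both are immediate from the definition of the context decomposition together with the monochromaticity of contexts over $G_0\times G_1$.
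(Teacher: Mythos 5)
Your reduction is correct, and it follows the same high-level strategy as the paper's proof of Lemma~\ref{complexity:reach:inter}: both constructions use a shared input alphabet that records checkpoint states of $S$ at which control passes between the $G_0$-part and the $G_1$-part of the storage, build one valence automaton per factor that verifies realizability of its own segments while skipping the others, and recover the interleaved run of $S$ from a word in the intersection using the splitting $\graphmonoid{G_0\times G_1}\cong\graphmonoid{G_0}\times\graphmonoid{G_1}$ (for membership in $\graphneutral$ and for coordinatewise right-invertibility of prefixes). Where you genuinely diverge is in the granularity of the decomposition and in how the global bound $\cs{w}\le k$ is certified. The paper slices $w$ into maximal alternating blocks $u_0 v_1 u_1\cdots v_n u_n$ with $u_i\in\ops_0^*$, $v_i\in\ops_1^+$; each letter of $\Sigma=Q\times[0,k]\times Q$ carries the switch count $s_i=\cs{v_i}$ of an entire $\ops_1$-block, the automaton $B$ checks each $s_i$ is realizable, and $A$ alone does the global bookkeeping via the identity $\cs{u_0v_1\cdots u_n}=\cs{u_0}+\sum_{i}(s_i+\cs{u_i}+2)$, with $m=k$ merely capping the number of alternations. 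You instead slice $w$ into its individual contexts, force each slot to be a single dependent word by committing per slot to a dependent subset of $\ops_0$ (resp.\ $\ops_1$), and let the length bound $m=2k+3$ alone cap the number of contexts at $k+1$, so that $\cs{w}\le t-1\le k$ falls out of the elementary fact that a concatenation of $t$ dependent words has at most $t$ contexts. This buys you a symmetric construction and dispenses with the switch-count arithmetic and with encoding counts into the alphabet; the price is that the input word must spell out every context boundary rather than only the $\ops_0/\ops_1$ alternation points, which is harmless here. Both constructions rely on the graph being fixed to keep the dependent-subset (resp.\ $2^{\ops_i}$) bookkeeping of constant size, so logspace computability holds in either case. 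The only points you should make explicit in a full write-up are the routine ones: how $\varepsilon$-transitions of $S$ are attributed to slots, and the degenerate case $w=\varepsilon$ (where $\cs{\varepsilon}=-1$ and the context decomposition is empty); neither affects correctness.
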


\begin{proof}
    Suppose $S=(Q,\to)$ is a valence system over the graph
    $G=G_0\times G_1$ and we are given the context switching bound $k$
    and states $q_{init}$ and $q_{fin}$. Let $G_i=(V_i,E_i)$ and
    $\ops_i=\{o^+,o^-\mid o\in V_i\}$ for $i=0,1$.
   Let $\Sigma=Q\times[0,k]\times Q$. Let  $q_0=q_{init}$, $p_{n+1}=q_{fin}$.
 
   The idea is to construct $A$ and $B$ so that a word
   \[ (p_1,s_1,q_1)(p_2,s_2,q_2)\cdots (p_n,s_n,q_n) \]
    in the intersection $\langof[k]{A}\cap\langof[k]{B}\cap\Sigma^{\le k}$
    witnesses a computation
    \[ q_0\tow{u_0} p_1 \tow{v_1}q_1\tow{u_1}  \cdots \tow{u_{n-1}} p_n\tow{v_n} q_n\tow{u_n}p_{n+1} \]
    in $S$ where $u_0,u_n\in\ops_0^*$ and $u_i\in\ops_0^+$ for $i\in[1,n-1]$ and $v_i\in\ops_1^+$ for $i\in[1,n]$
    and $\cs{v_i}=s_i$ for $i\in[1,n]$ and $\cs{u_0v_1u_1\cdots v_nu_n}\le k$. One checks easily
    that in this case, we have
    \[ \cs{u_0v_1u_1\cdots v_nu_n}=\cs{u_0}+\sum_{i=1}^n
      (\cs{v_i}+\cs{u_i}+2)=\cs{u_0}+\sum_{i=1}^n (s_i+\cs{u_i}+2). \]
    Note that if we can construct $A$ and $B$ in logspace so that a
    word in the intersection
    $\langof[k]{A}\cap\langof[k]{B}\cap\Sigma^{\le k}$ exists if and
    only if there is a computation as above, then we have indeed a
    logspace reduction from $\BCSREACH{G}$ to $\BCSI{G_0}{G_1}$.
    
    We accomplish this by constructing in logspace valence automata
    $A$ and $B$ over $G_0$ and $G_1$, respectively, for which
    \begin{multline}\label{complexity:splitlemma:a}
      \langof[k]{A}=\Bigg\{ (p_1,s_1,q_1)(p_2,s_2,q_2)\cdots(p_{n},s_n,q_n)\mid\text{for each $i\in[0,n]$, we have $q_i\tow{u_i}p_{i+1}$} \\
      \text{with $u_0,u_n\in\ops_0^*$, $u_1,\ldots,u_{n-1}\in\ops_0^+$ and $\cs{u_0}+\sum_{i=1}^n (s_i+\cs{u_i}+2)\le k$} \Bigg\}
    \end{multline}
    and
    \begin{multline}\label{complexity:splitlemma:b}
      \langof[k]{B}=\Bigg\{(p_1,s_1,q_1)(p_2,s_2,q_2)\cdots(p_n,s_n,q_n) \mid \text{for each $i\in[1,n]$, we have $p_i\tow{v_i} q_i$} \\
      \text{for some $v_i\in\ops_1^+$ with $\cs{v_i}=s_i$ and $\cs{v_1\cdots v_n}\le k$} \Bigg\}.
    \end{multline}
    It is not difficult to construct $A$ and $B$. The state set of $A$
    is $Q_A=Q\times 2^{\ops_0}\times[-1,k]$. In its right-most
    component, it counts the number of context switches observed
    during the run including the ones contributed by edges reading
    letters in $\Sigma$.  In order to update this counter, it used its
    middle component. Here, it keeps track of the set of operations
    seen in the current context. Note that the state set of $A$ has
    polynomial size because we treat the graph $G$ as constant.
    The automaton $A$ has edges for simulating runs $u_i$ in $S$
    that are labeled $\varepsilon$. Moreover, for each
    $(p,s,q)\in\Sigma$, $A$ has an edge that changes the left
    component from $p$ to $q$ and adds $s$ to the counter.

    The state set of $B$ is
    $\{\ast\}\cup Q\times Q\times 2^{\ops_1}\times[0,k]$.  Initially,
    $B$ is in the state $\ast$. For each $(p,s,q)\in\Sigma$ and
    $p\tow{x}p'$ for $x\in\ops_1$, $B$ has an edge labeled $(p,s,q)$
    from $\ast$ to $(p',q,\{x\},s)$. In states from
    $Q\times Q\times 2^{\ops_1}\times[0,k]$, $B$ operates analogous to
    $A$, except that the second $Q$-component remains constant and the
    right-most component counts downward. On these edges, $B$ reads no
    input. In addition, from a state $(q,q,T,0)$ with
    $T\subseteq \ops_1$, $B$ can go back to the state $\ast$, which is
    also its final state.

    With $A$ and $B$ set up this way, it is clear that
    \eqref{complexity:splitlemma:a} and \eqref{complexity:splitlemma:b}
    are satisfied. Thus we have indeed that $q_{init}$ can reach
    $q_{fin}$ with at most $k$ context switches if and only if
    $\langof[k]{A}\cap\langof[k]{B}\cap\Sigma^{\le k}\ne\emptyset$.
\end{proof}

In the rest of this section, we will also use the notation
$\langof{A}=\bigcup_{k\ge 0} \langof[k]{A}$ for valence automata
$A$.

\begin{lemma}\label{complexity:builtincounter}
    Given a valence automaton $A$ and a unary bound $k$, one can
    construct in logspace a valence automaton $A'$ with
    $\langof{A'}=\langof[k]{A'}=\langof[k]{A}$.
\end{lemma}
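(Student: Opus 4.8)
\section*{Proof proposal}

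The plan is to obtain $A'$ from $A$ by augmenting the finite control with a bookkeeping component that records the context structure of the storage computation performed so far, and by deleting every transition that would raise the number of context switches above $k$. Since $G$ is fixed in this section, $\ops$ is a fixed finite set, so the bookkeeping has constant size and the whole construction stays in logspace.

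Concretely, for $A=(Q,\Sigma,q_0,E,q_f)$ over $G$ I would take the states of $A'$ to be $Q\times 2^{\ops}\times[-1,k]$, with initial state $(q_0,\emptyset,-1)$ and a fresh final state $q_f'$ reached by an $\varepsilon$-transition from every state $(q_f,D,c)$. The middle component $D$ is the set of operations used in the current context, and the last component $c$ is the number of context switches of the storage word produced so far, the value $-1$ marking the empty storage (so as to match $\cs{\varepsilon}=-1$). Every transition of $A$ with storage label $\varepsilon$ is copied for all values of $(D,c)$, leaving those components unchanged. A transition $(q,s,x,q')\in E$ with $x\in\ops$ induces in $A'$, for each state $(q,D,c)$: the transition $\bigl((q,D,-1),s,x,(q',\{x\},0)\bigr)$ if $c=-1$; the transition $\bigl((q,D,c),s,x,(q',D\cup\{x\},c)\bigr)$ that stays in the current context if $c\ge 0$ and $D\cup\{x\}$ is \emph{dependent} in the sense of the paper (its underlying symbol set contains no two distinct independent symbols); the transition $\bigl((q,D,c),s,x,(q',\{x\},c+1)\bigr)$ that opens a new context if $c\ge 0$, $D\cup\{x\}$ is \emph{not} dependent, and $c<k$; and no transition if $c\ge 0$, $D\cup\{x\}$ is \emph{not} dependent, and $c=k$. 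Deciding dependence of $D\cup\{x\}$ only inspects the constantly many edges of $G$, so each transition of $A'$ is emitted in logspace.

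Correctness rests on a simple run correspondence. Projecting an accepting run of $A'$ onto its $Q$-component gives an accepting run of $A$ on the same input word producing the same storage word $w$; an induction on the run length shows that throughout this run the middle component equals the operation set of the current partial context of $w$ and the last component equals the number of context switches of the storage prefix produced so far --- here one uses that the context decomposition is greedy, so a context ends exactly when appending the next operation breaks dependence. Since all reachable states have last component at most $k$, this gives $\cs{w}\le k$, hence $u\in\langof[k]{A}$. Conversely, any accepting run of $A$ producing $w$ with $\cs{w}\le k$ can be decorated with the forced values of $(D,c)$: each decoration step is licensed by exactly one of the rules above and the last component never needs to leave $[-1,k]$, so the decorated run is an accepting run of $A'$. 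Right-invertibility of the intermediate storage words transfers verbatim, since $A'$ executes exactly the same storage operations as $A$. Hence $\langof{A'}=\langof[k]{A}$; and since by construction the last component of a state always records the number of context switches of the storage produced so far and never exceeds $k$, every word of $\langof{A'}$ is already witnessed with at most $k$ context switches, so $\langof{A'}=\langof[k]{A'}$ as well.

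The one point requiring care --- and the only place where a naive attempt could go wrong --- is the switch-detection rule of the second step: it has to carry the full set $D$ of operations of the current context, because an operation occurring later in a context can be independent of one occurring earlier (so tracking, say, only the last operation would be unsound), and it has to match the greedy ``maximal dependent prefix'' definition of a context, including the boundary value $c=-1$ forced by the convention $\cs{\varepsilon}=-1$.
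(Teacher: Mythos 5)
Your construction is essentially the paper's own: the paper also augments the state with a pair $(U,\ell)\in 2^{\ops}\times[0,k]$ recording the operation set of the current context and the running count of context switches, plus a fresh final state reached by an $\varepsilon$-transition from every $(q_f,U,\ell)$. Your only deviation is starting the counter at $-1$ to match the convention $\cs{\varepsilon}=-1$ rather than at $0$, which is an immaterial variant, and your correctness argument is a correct elaboration of what the paper leaves implicit.
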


\begin{proof}
    Let $A=(Q,\Sigma,q_0,E,q_f)$. Then, $A'$ has states
    $\{\ast\}\cup Q\times 2^{\ops}\times[0,k]$. It simulates
    computations of $A$. In the third component it counts the
    number of context switches it sees on the storage. In order to
    maintain this counter, it stores in the second component the set
    of operations in $\ops$ occurring in the current context. The
    initial state of $A'$ is $(q_0,\emptyset,0)$ and the final state
    is $\ast$. In order to reach the final state, $A'$ has a
    transition that reads input $\varepsilon$ and adds $\varepsilon$
    to the storage from every state $(q_f,U,\ell)$ to $\ast$.
\end{proof}

\begin{lemma}\label{complexity:inter:reach}
    If $G=G_0\times G_1$, then $\BCSI{G_0}{G_1}$ reduces to $\BCSREACH{G}$ in logspace.
\end{lemma}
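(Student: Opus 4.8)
The plan is to first absorb all numeric parameters into the two input automata and then simulate both automata simultaneously inside a single valence system over $G$.

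\emph{Preprocessing.} Applying Lemma~\ref{complexity:builtincounter} to $A$ with bound $k$ yields, in logspace, a valence automaton $A_1$ over $G_0$ with $\langof{A_1}=\langof[k]{A}$ in which every accepting run produces a storage word $v\in\ops_0^*$ with $\grapheq{v}=\graphneutral$ in $\graphmonoid{G_0}$ and $\cs{v}\le k$ (the construction keeps a counter in $[0,k]$ of the storage context switches). Taking the product of $A_1$ with an extra counter ranging over $[0,m]$ that counts input letters gives a valence automaton $A'$ over $G_0$ with $\langof{A'}=\langof[k]{A}\cap\Sigma^{\le m}$, still with the above property. Symmetrically, Lemma~\ref{complexity:builtincounter} gives $B'$ over $G_1$ with $\langof{B'}=\langof[\ell]{B}$, where every accepting run produces a storage word $v'\in\ops_1^*$ that is the identity of $\graphmonoid{G_1}$ with $\cs{v'}\le\ell$. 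Since any word of $\langof{A'}\cap\langof{B'}$ automatically has length $\le m$, the given $\BCSI{G_0}{G_1}$ instance is equivalent to deciding whether $\langof{A'}\cap\langof{B'}\neq\emptyset$.

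\emph{The reduction.} I would build a valence system $S$ over $G=G_0\times G_1$ that simulates $A'$ and $B'$ in a turn-based, input-synchronized interleaving: in ``$A'$-mode'' it performs the $\varepsilon$-input moves of $A'$, executing each $\ops_0$-operation of $A'$ as an operation of $S$; when $A'$ reads an input letter $a$, $S$ stores $a$ in a finite register and switches to ``$B'$-mode'', in which it symmetrically simulates $B'$ until $B'$ reads a letter equal to the register, then clears the register and returns to $A'$-mode; once both components have reached their final states with empty register, $S$ moves to $q_{fin}$ (the degenerate case of empty input word is handled separately). As $G$, and hence $G_0$ and $G_1$, are fixed, the state set of $S$ --- a product of the polynomial state sets of $A'$ and $B'$ with a constant-size mode/register gadget --- is of polynomial size and $S$ is computable in logspace. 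I set $K:=2m+k+\ell+1$ and output the instance $(S,q_{init},q_{fin},K)$ of $\BCSREACH{G}$.

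\emph{Correctness.} Because $G=G_0\times G_1$ makes every vertex of $G_0$ independent of every vertex of $G_1$, the two projections induce an isomorphism $\graphmonoid\cong\graphmonoid{G_0}\times\graphmonoid{G_1}$ --- a standard property of graph monoids, verified in one line from the fact that $\ops_0$ and $\ops_1$ commute. Hence for $w\in\ops^*$ with $\ops=\ops_0\dotcup\ops_1$ we have $\grapheq{w}=\graphneutral$ iff its projection $\pi_{\ops_0}(w)$ is the identity of $\graphmonoid{G_0}$ and $\pi_{\ops_1}(w)$ is the identity of $\graphmonoid{G_1}$, and $w$ is right-invertible iff both projections are. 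Given this, an accepting run of $S$ ending in $(q_{fin},w)$ with $\grapheq{w}=\graphneutral$ and $\cs{w}\le K$ decomposes, by the mode structure of $S$, into an accepting run of $A'$ and an accepting run of $B'$ reading a common word $u$ with $|u|\le m$; their storage words are $\pi_{\ops_0}(w)$ and $\pi_{\ops_1}(w)$, which are the identities of $\graphmonoid{G_0}$ resp.\ $\graphmonoid{G_1}$, and every intermediate configuration of $S$ is legal since right-invertibility is componentwise. Thus $u\in\langof{A'}\cap\langof{B'}$. Conversely, from $u\in\langof{A'}\cap\langof{B'}$ I would take accepting runs of $A'$ and $B'$ on $u$, with storage words $v\in\ops_0^*$ and $v'\in\ops_1^*$ (identities, with $\cs{v}\le k$ and $\cs{v'}\le\ell$), and interleave them turn-based along the $\le m$ letters of $u$; this yields an accepting run of $S$ ending in $(q_{fin},w)$ with $\pi_{\ops_0}(w)=v$ and $\pi_{\ops_1}(w)=v'$, hence $\grapheq{w}=\graphneutral$. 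Finally, $w$ is a concatenation of at most $2(m{+}1)$ segments: the $\ops_0$-segments are the pieces into which $v$ is cut at the $\le m$ letter positions, so together they contain at most $(k{+}1)+m$ contexts (cutting a word at one point increases its number of contexts by at most one), and symmetrically for the $\ops_1$-segments of $v'$; since concatenation never increases the number of contexts, $w$ has at most $k+\ell+2m+2$ contexts, i.e.\ $\cs{w}\le K$. All steps are carried out in logspace, so this is the desired reduction.

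\emph{Main obstacle.} The delicate point is the forward direction: the control of $S$ must be set up so that the simulation is genuinely turn-based, so that the $\ops_0$- and $\ops_1$-operations occur in only $O(m)$ alternating segments rather than being finely interleaved (which would blow up the context count), while still forcing $A'$ and $B'$ to read their input letters in lockstep; this must be combined with the routine but somewhat fiddly counting bounding $\cs{w}$ and with the observation that componentwise right-invertibility keeps every intermediate configuration of $S$ legal.
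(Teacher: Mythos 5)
Your proposal is correct and takes essentially the same route as the paper's proof: normalize $A$ and $B$ via Lemma~\ref{complexity:builtincounter}, then construct a valence system over $G=G_0\times G_1$ that simulates the two automata in input-synchronized turns, so that the resulting storage word has $O(k+\ell+m)$ context switches and evaluates to $\graphneutral$ iff both projections do. The paper phrases the interleaving via ``$a$-paths'' of $A$ followed by ``$a$-paths'' of $B$ and states the bound $k+\ell+2m$ rather than your $k+\ell+2m+1$, but these are immaterial differences.
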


\begin{proof}
    Suppose we are given an alphabet $\Sigma$, valence automata
    $A=(Q_A,\Sigma,q_{0,A},E_A,q_{f,A})$ and $B=(Q_B,\Sigma,q_{0,B},E_B,q_{f,B})$ over
    $G_0$ and $G_1$, respectively, such that
    $\langof{A},\langof{B}\subseteq\Sigma^*$ and unary bounds
    $k,\ell,m\in\N$. According to
    Lemma~\ref{complexity:builtincounter}, we may assume that
    $\langof{A}=\langof[k]{A}$ and $\langof{B}=\langof[\ell]{B}$. Let
    us call a sequence of transitions in a valence automaton an
    \emph{$a$-path} if it reads $a\in\Sigma$ from the input. Hence, it
    consists of some $\varepsilon$-transitions, an $a$-transition, and
    again some $\varepsilon$-transitions.
    
    We construct a valence automaton $C$ over $G$ as follows.  The
    state set of $C$ is a product of several components, among them
    the state sets $Q_A$ and $Q_B$. When $C$ reads a letter
    $a\in\Sigma$, it first simulates an $a$-path of $A$ and then an
    $a$-path of $B$. Moreover, it has a counter that counts the number
    of read input symbols and makes sure that at most $m$ of them are
    read. It is obvious how to set up the state set and transitions of
    $C$ appropriately so that for some states $q_{init}$ and $q_{fin}$
    in $C$, we have: $q_{init}\tow{w}q_{fin}$ with
    $\grapheq{w}=\graphneutral$ if and only if
    $\langof{A}\cap\langof{B}\cap\Sigma^{\le
      m}\ne\emptyset$. Moreover, since we know that
    $\langof{A}=\langof[k]{A}$ and $\langof{B}=\langof[\ell]{B}$, we
    have that $\langof{A}\cap\langof{B}\cap\Sigma^{\le m}\ne\emptyset$
    implies
    $\langof[k]{A}\cap\langof[\ell]{B}\cap\Sigma^{\le m}\ne\emptyset$
    and thus our construction yields $q_{init}\tow{w}q_{fin}$ with
    $\grapheq{w}=\graphneutral$ and $\cs{w}\le k+\ell+2m$.
\end{proof}
\noindent
Proposition~\ref{complexity:direct:equivalence} now follows from
Lemmas~\ref{complexity:reach:inter} and
\ref{complexity:inter:reach}.
  
\restatePropositionComplexityDirectEquivalence*

\begin{proof}
    Lemmas~\ref{complexity:reach:inter} and
    \ref{complexity:inter:reach} show the two reductions.
\end{proof}

We are now ready to prove Proposition~\ref{complexity:reduction:addcounter}.

\restatePropositionComplexityReductionAddcounter*

\begin{proof}
    Since $v$ is universal, we may apply
    Lemma~\ref{complexity:reach:inter} in the case where
    $G_0=G\setminus v$ and $G_1$ is the subgraph of $G$ induced by
    $v$. Hence, it suffices to show that $\BCSI{G_0}{G_1}$ reduces to
    $\BCSREACH{G\setminus v}$. Therefore, suppose we are given an
    alphabet $\Sigma$, valence automata $A$ and $B$ over $G_0$ and
    $G_1$, respectively, and bounds $k,\ell,m$.
    
    The rest of our proof employs a classical infinite-state model.  A
    \emph{one-counter automaton} is a tuple $C=(Q,\Gamma,q_0,E,q_f)$,
    where $Q$ is a finite set of \emph{states}, $\Gamma$ is an
    alphabet, $q_0\in Q$ is its \emph{initial state},
    $q_f\in Q$ is its \emph{final state}, and
    $E\subseteq Q\times(\Gamma\cup\{\varepsilon\})\times
    \{-1,1,0,=0\}\times Q$ is its set of \emph{transitions}. A
    \emph{configuration} is a pair $(q,n)\in Q\times\N$. The
    transition relation $(p,m)\tow{w}(q,n)$ between configurations
    $(p,m)$ and $(q,n)$ is defined as expected for $w\in\Gamma^*$.
    Here, $-1$, $1$, $0$, $=0$ stand for \emph{decrement},
    \emph{increment}, \emph{no change}, and \emph{zero test},
    respectively.  A word $w\in\Gamma^*$ is \emph{accepted} by the
    automaton if $(q_0,0)\tow{w}(q_f,n)$ for some
    $\langof{C}$.
    
    We construct in two steps a one-counter automaton $C$ with
    $\langof{C}=\langof[\ell]{B}$.  First, we use Lemma~\ref{complexity:builtincounter}
    to construct a valence automaton $B'$ with $\langof{B'}=\langof[\ell]{B}$.
    Now we can construct $C$. Since $G_1$ contains only one vertex, it
    is easy to construct in logspace a one-counter automaton $C$ with
    $\langof{C}=\langof{B'}$: If $v$ has a loop, then a valence
    automaton over $G_1$ is an automaton with access to a counter that
    assumes values in $\Z$ and has an increment and a decrement
    operation. If $v$ has no loop, then a valence automaton over $G_1$
    is an automaton with a counter that assumes values in $\N$ and,
    again, has an increment and a decrement operation. In each case,
    we can easily simulate the monoid using a counter that assumes
    only values in $\N$ and has zero tests available.
    
    Thus, we have reduced our problem to the question of whether
    $\langof[k]{A}\cap \langof{C}\cap \Sigma^{\le m}\ne\emptyset$.  It
    is folklore that if a one-counter automaton with $n$ states
    accepts any word, then it does so in a computation during which
    the counter values are bounded by $O(n^2)$
    (see~\cite{ChistikovCHPW16} and the references therein). This
    implies that if $C$ has $n$ states and accepts a word in
    $\Sigma^{\le m}$, then it does so in some computation with counter
    values at most $a(mn)^2+b$ for some constants $a,b\in\N$.
    
    Let us constuct a finite automaton $D$ that simulates all
    computations of $C$ where the counter does not exceed $a(mn)^2+b$
    and which read an input word of length $\le m$. Then, $D$ has
    $(a(mn)^2+b)nm$ states and can clearly be constructed in
    logspace. Moreover, we have
    $\langof{D}=\langof{C}\cap\Sigma^{\le m}$ and thus
    $\langof[k]{A}\cap\langof[\ell]{B}\cap\Sigma^{\le m}\ne\emptyset$
    if and only if $\langof[k]{A}\cap L(D)\ne\emptyset$. Using a
    simple product construction, we can now obtain a valence automaton
    $A'$ from $A$ and $D$ so that
    $\langof[k]{A'}=\langof[k]{A}\cap\langof{D}$. Finally, checking
    emptiness of $\langof[k]{A'}$ is an instance of $\BCSREACH{G_0}$.
\end{proof}

Our next goal is to prove Proposition~\ref{complexity:union}. For
the proof, it will be convenient to use a reformulation of the
problem $\BCSREACH$.  A valence system $S=(Q,\to)$ is said to be
\emph{$k$-bounded} if $p\tow{w} q$ for states $p,q\in Q$ implies
$\cs{w}\le k$.

\begin{problem}
    \problemtitle{BCS reachability with promise}
    \problemshort{($\BCSRP{G}$)}
    \probleminput{Bound $k$ in unary, a $k$-bounded valence system $S$, states $q_{init}$, $q_{fin}$.}
    \problemquestion{Is there a run from $(\qinit,\varepsilon)$ to $(\qfin, w)$ so that $\grapheq{w} = \graphneutral$?}
\end{problem}

\noindent
Of course, the problem $\BCSRP{G}$ is very similar to
$\BCSREACH{G}$. What makes it useful is that in order to solve
$\BCSRP{G}$ we need not worry about discovering paths with too many
context switches: We may assume that all paths are guaranteed to
contain at most $k$. In particular, our procedure will find paths in
a valence system over $G=G_0\dotcup G_1$ by repeatedly finding paths
in valence systems over $G_0$ or $G_1$. The formulation in terms of
$\BCSRP{G}$ relieves us from keeping track of how many context
switches we have made in the paths for the $G_i$ when composing them
to paths over $G$.

\begin{lemma}
\label{complexity:promise}
    For each graph $G$, the problems $\BCSREACH{G}$ and $\BCSRP{G}$
    are inter-reducible via logspace reductions.
\end{lemma}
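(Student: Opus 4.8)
The plan is to give the two logspace reductions separately. The direction $\BCSRP{G}\le_{\log}\BCSREACH{G}$ is immediate: given a $k$-bounded valence system $S$ with states $q_{init},q_{fin}$ and bound $k$, we emit exactly this data as an instance of $\BCSREACH{G}$. Since $S$ is $k$-bounded, every run from $(q_{init},\varepsilon)$ to a configuration $(q_{fin},w)$ already satisfies $\cs{w}\le k$, so the additional requirement of $\BCSREACH{G}$ is vacuous on such inputs and the two instances have the same answer; this map is essentially the identity and hence computable in logspace.

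For the converse, $\BCSREACH{G}\le_{\log}\BCSRP{G}$, the idea is to internalize the context-switch counter into the control states, exactly as in the construction behind Lemma~\ref{complexity:builtincounter}, but for valence systems (which have no input) and with a distinguished target state in place of an acceptance condition. Given $A=(Q,\to)$ over $\graphmonoid$, states $q_{init},q_{fin}$, and a unary bound $k$, I would take as the states of $S$ the set $(Q\times 2^{\ops}\times[0,k])\cup\{q_{fin}'\}$, where the second component records the set of operations used so far in the current context and the third component holds the number of context switches seen so far. Every $\varepsilon$-transition $q_1\to q_2$ of $A$ is kept as $(q_1,U,c)\to(q_2,U,c)$ for all $U,c$; a transition $q_1\tow{x}q_2$ of $A$ with $x\in\ops$ becomes $(q_1,U,c)\tow{x}(q_2,U\cup\{x\},c)$ whenever $U\cup\{x\}$ is dependent (the current context continues) and $(q_1,U,c)\tow{x}(q_2,\{x\},c+1)$ whenever $U\cup\{x\}$ is not dependent and $c+1\le k$ (a genuine context switch); finally, from each $(q_{fin},U,c)$ we add an $\varepsilon$-transition to the fresh state $q_{fin}'$. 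The resulting $\BCSRP{G}$-instance has initial state $(q_{init},\emptyset,0)$, final state $q_{fin}'$, and bound $k$.

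Three routine claims then remain. First, $S$ is $k$-bounded: an easy induction on run length shows that after executing a nonempty memory word $w$ the third component equals $\cs{w}$, and since this component never exceeds $k$, every run of $S$ produces a memory word with at most $k$ context switches---so the output respects the promise of $\BCSRP{G}$. Second, correctness: $S$ performs exactly the memory operations of $A$, its two extra components are functionally determined by the executed operations, and the $\varepsilon$-transition into $q_{fin}'$ leaves the memory untouched; hence runs of $A$ from $(q_{init},\varepsilon)$ to $(q_{fin},w)$ with $\grapheq{w}=\graphneutral$ and $\cs{w}\le k$ are in one-to-one correspondence with runs of $S$ from $(q_{init},\emptyset,0)$ to $(q_{fin}',w)$ with $\grapheq{w}=\graphneutral$, and right-invertibility of the intermediate memory words is preserved since the words coincide. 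Third, logspace-computability: $G$ is fixed, so $\card{\ops}$ and hence $2^{\card{\ops}}$ are constants, $S$ has $O(\card{A}\cdot k)$ states, and a transducer can enumerate the triples and emit the transitions in logarithmic space. The only point needing slight care is the invariant that ``extend while the operation set stays dependent, otherwise reset to the singleton'' faithfully tracks the maximal-dependent-prefix decomposition underlying $\cs{\cdot}$ (and hence the claim that the third component equals $\cs{w}$); since this mirrors the definition of the context decomposition verbatim, there is no genuine obstacle and the lemma is ultimately pure bookkeeping transferred from the valence-automata setting.
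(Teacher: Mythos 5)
Your proposal is correct and matches the paper's own proof essentially verbatim: the trivial direction is handled identically, and for the converse the paper likewise augments the states with the pair (current-context operation set, switch count) from $2^{\ops}\times[0,k]$, adds a fresh $\varepsilon$-reachable final state, and observes that the result is $k$-bounded by construction. No substantive differences.
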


\begin{proof}
    Of course, $\BCSRP{G}$ reduces trivially to
    $\BCSREACH{G}$. Conversely, suppose $S$ is a valence system over
    $G$ and we want to decide whether there is a run from
    $(q_{init},\varepsilon)$ to $(q_{fin},w)$ so that
    $\grapheq{w}=\graphneutral$ and $\cs{w}\le k$.
    
    We turn $S$ into a valence system $S'$ that keeps track of the
    number of context switches in its states and is therefore
    $k$-bounded. If $G=(V,I)$, $\ops=\{o^+, o^-\mid o\in V\}$,
    $S=(Q,\to)$, then $S'$ has states $Q'=\{\ast\}\cup Q\times 2^\ops\times [0,k]$.
    In its middle component, it maintains the list of operations
    occurring in the current context. This is used to update the
    right-most component, which counts the number of context switches
    $S'$ has seen during the computation.  Hence, we have transitions:
    \begin{align*}
      &(p,U,\ell)\tow{x}(q,U\cup\{x\},\ell)  && \text{for each $p\tow{x}q$ in $S$ where $U\cup\{x\}$ is dependent}\\
      &                                      && \text{and $\ell\in[0,k]$} \\
      &(p,U,\ell)\tow{x}(q,\{x\},\ell+1)       && \text{for each $p\tow{x}q$ in $S$ where $U\cup\{x\}$ is not dependent} \\
      &                                      && \text{and $\ell\in[0,k-1]$} \\
      &(p,U,\ell)\tow{\varepsilon}(q,U,\ell) && \text{for each $p\tow{\varepsilon} q$ in $S$}
    \end{align*}
    Now reachability of $(q_{fin},w)$ from $(q_{init},\varepsilon)$
    with $\grapheq{w}=\graphneutral$ and $\cs{w}\le k$ is equivalent to
    reachability of some $((q_{fin},U,\ell),w)$ from
    $((q_{init},\emptyset,0),\varepsilon)$ with
    $\grapheq{w}=\graphneutral$. Since we want to provide a single
    final state in our reduction, we also add transitions
    \begin{align*}
      &(q_{fin},U,\ell)\tow{\varepsilon}\ast && \text{for each $U\subseteq \ops$ and $\ell\in[0,k]$}
    \end{align*}
    Then $S'$ is clearly $k$-bounded. As initial state, we take
    $q'_{init}=(q_{init},\emptyset,0)$ and as a final state, we take
    $q'_{fin}=\ast$. Then we clearly have a run from
    $(q'_{init},\varepsilon)$ to $(q'_{fin},w')$ with
    $\grapheq{w'}=\graphneutral$ if and only if there is a run from
    $(q_{init},\varepsilon)$ to $(q_{fin},w)$ with
    $\grapheq{w}=\graphneutral$ and $\cs{w}\le k$.
\end{proof}

We are now ready to prove Proposition~\ref{complexity:union}.
  
\restatePropositionComplexityUnion*

\begin{proof}
    According to Lemma~\ref{complexity:promise}, it suffices to show
    that if $\BCSRP{G_0}$ and $\BCSRP{G_1}$ are in $\PTIME$, then
    $\BCSRP{G_0\dotcup G_1}$ is in $\PTIME$. To this end, we use a
    saturation algorithm.
    
    Suppose we are given a $k$-bounded valence system $S=(Q,\to)$ over
    $G=G_0 \dotcup G_1$. Let $G_i=(V_i,I_i)$ and
    $\ops_i=\{o^+, o^-\mid o\in V\}$ for $i=0,1$. We add edges labeled
    $\varepsilon$ according to the following rule: If from
    $(p,\varepsilon)$, one can reach $(q,w)$ with
    $\grapheq{w}=\graphneutral$ and $w\in\ops_i^*$ for some
    $i\in\{0,1\}$, then we add an edge $p\tow{\varepsilon}q$. Note
    that we can decide in polynomial time whether this is the case:
    Restricting $S$ to edges with labels in $\ops_i^*$ yields a
    valence system hat inherits $k$-boundedness from $S$; thus, we
    answer an instance of $\BCSRP{G_i}$. Moreover, adding an
    $\varepsilon$-transition preserves $k$-boundedness as
    well. Finally, since we are only adding transitions labeled
    $\varepsilon$, this procedure terminates after adding at most
    $|Q|^2$ transitions and thus in polynomial time. Let
    $S'=(Q,\leadsto)$ be the resulting valence system.
    
    We claim that $q_{init}\leadstow{\varepsilon}q_{fin}$ if and only
    if $q_{init}\tow{w}q_{fin}$ for some $w$ with
    $\grapheq{w}=\graphneutral$.
    Here, the ``only if'' direction
    follows easily by induction on the number of steps performed in
    the saturation algorithm.
    
    For the converse, we prove by induction on $|w|$ that for any
    $p,q\in Q$, if $p\leadstow{w}q$ with $\grapheq{w}=\graphneutral$,
    then $p\leadstow{\varepsilon}q$.  Suppose we have $p\leadstow{w}q$
    with $\grapheq{w}=\graphneutral$.
    
    If $|w|=0$, we are done. Moreover, if $w\in\ops_i^+$ for some
    $i\in\{0,1\}$, then we have introduced $p\leadstow{\varepsilon} q$
    during the saturation.  Otherwise, we can decompose
    $w=u_1\cdots u_n$ so that $u_j\in\ops_0^+\cup\ops_1^+$ and
    $u_j\in\ops_i^+$ iff $u_{j+1}\in\ops_{1-i}^+$ for $j\in[1,n]$.

    Then it follows from the definition of $\cong$ that
    there is some $\ell\in[1,n]$ with $u_j\cong\varepsilon$:
    Otherwise, it follows
    by induction on the number of applied equivalences (i.e.
    $o_1^{\pm}.o_2^{\pm}\cong o_2^{\pm}.o_1^{\pm}$ for $o_1 I o_2$ or
    $o^{+}o^{-}\cong \varepsilon$), that every word $w'$ with
    $w\cong w'$ has a decomposition $w'=u'_1\cdots u'_n$ with $u'_j\cong u_j$ for $j\in[1,n]$.
    Hence, let $u_\ell\cong \varepsilon$ and $w=xu_\ell y$.
    
    Let $p',q'$ be states so that
    $p\leadstow{x}p'\leadstow{u_\ell}q'\leadstow{y}q$. Since
    $u_\ell\cong\varepsilon$, the saturation procedure has added the
    transition $p'\leadstow{\varepsilon} q'$. Hence, we have
    $p\leadstow{xy}q$ with
    $\grapheq{xy}=\grapheq{xu_\ell y}=\grapheq{w}=\graphneutral$ and
    thus by induction $p\leadstow{\varepsilon} q$.
\end{proof}

\restateTheoremComplexityCfourCompleteness*
  
\begin{proof}
    Since we have shown $\NPTIME$ membership for all graphs, it
    suffices to show $\NPTIME$-hardness in the case $G^-=\Cfour$. In
    that case, we have $G=G_0\times G_1$, where each $G_i$ consists of
    two vertices, which may or may not carry self-loops.
    
    According to Proposition~\ref{complexity:direct:equivalence}, it
    suffices to show that $\BCSI{G_0}{G_1}$ is
    $\NPTIME$-hard. Consider a 3CNF-SAT instance
    $\varphi=\bigwedge_{j=1}^m C_j$, where each $C_j$ is a clause over
    the variables $\{x_1,\ldots,x_n\}$. We encode an assignment of the
    variables by a word $w\in \{0,1\}^n$ and say that $w$
    \emph{satisfies $C_j$ (resp. $\varphi$)} if the corresponding
    assignment satisfies $C_j$ (resp. $\varphi$).
    
    \newcommand{\rev}[1]{{#1}^{\mathsf{rev}}}
    Since each $G_i$ consists of two non-adjacent vertices, it is
    known that valence automata over $G_i$ accept exactly the
    context-free
    languages~\cite{Zetzsche2017a,Zetzsche2016c}. Moreover, the
    translation from PDAs (or context-free grammars) into valence
    automata over $G_i$ can obviously performed in logarithmic space.
    For a word $w\in\{0,1\}^*$, let $\rev{w}$ denote its reversal.
    Consider the context-free languages
    \begin{align*}
      \calL_0&=\{w_1\#\rev{w_1} \# w_2 \# \rev{w_2} \cdots w_m \# \rev{w_m}\# \mid \text{for $j\in[1,m]$, $w_j\in\{0,1\}^n$ satisfies $C_j$} \} \\
      \calL_1&=\{w_0\# w_1 \# \rev{w_1}\# \cdots w_{m-1} \# \rev{w_{m-1}} \# w_m\# \mid \text{for $j\in[0,m]$, $w_j\in\{0,1\}^n$} \}
    \end{align*}
    over $\Sigma=\{0,1,\#\}$.  On the one hand, we have
    $\calL_0\cap \calL_1\cap\Sigma^{2m(n+1)}\ne\emptyset$ if and only if
    $\varphi$ is satisfiable. On the other hand, we can construct a
    PDA for each $\calL_i$ in logarthmic space, and hence also valence
    automata $A_i$ over $G_i$ such that $\langof{A_i}=\calL_i$ for
    $i=0,1$. Since in $G_i$, there are no edges between distinct
    vertices, every computation has $0$ context switches, meaning
    $\langof{A_i}=\langof[0]{A_i}$. Thus, $\varphi$ is satisfiable if
    and only if
    $\langof[0]{A_0}\cap\langof[0]{A_1}\cap\Sigma^{2m(n+1)}\ne\emptyset$.
\end{proof}


\end{document}